\documentclass[lettersize,journal,twoside]{IEEEtran}
\usepackage{cite}
\usepackage{amssymb}
\usepackage{amsfonts}
\usepackage{amsmath}
\usepackage{amsthm}
\usepackage{mathrsfs}
\usepackage{bm}
\usepackage{verbatim}
\usepackage{autobreak} 
\usepackage{graphicx} %插入图片的宏包\usepackage{graphicx}
\usepackage{float} %设置图片浮动位置的宏包
\usepackage{subfigure} %插入多图时用子图显示的宏包
\usepackage{tabularx}
\usepackage{algpseudocode} 
\usepackage{autobreak}
\usepackage{fancyhdr}

\usepackage[linesnumbered, ruled]{algorithm2e}
\usepackage{url}
\makeatletter
\renewcommand{\@thesubfigure}{\hskip\subfiglabelskip}
\makeatother
\makeatletter
\renewcommand{\maketag@@@}[1]{\hbox{\m@th\normalsize\normalfont#1}}%
\makeatother
\allowdisplaybreaks[4]
\usepackage[colorlinks,
            linkcolor=black,
            anchorcolor=black,
            citecolor=blue]{hyperref}
\usepackage{hyperref}

\newtheorem{remark}{Corollary}
\newtheorem{proposition}{Proposition}
\begin{document}

\title{ RIS-Aided MIMO Systems with Hardware Impairments: Robust Beamforming Design and Analysis}
\author{
{Jintao Wang, Shiqi Gong, Qingqing Wu, Shaodan Ma}
\thanks{
  Manuscript received August 3, 2022; revised December 23, 2022; accepted February 12, 2023. This work was supported   in part by the Science and Technology Development Fund, Macau SAR under Grants 0087/2022/AFJ and SKL-IOTSC(UM)-2021-2023, in part by the National Natural Science Foundation of China under Grant 62101614 and 62261160650, and in part by the Research Committee of University of Macau under Grant MYRG2020-00095-FST. \emph{(Corresponding authors: Shiqi Gong and Shaodan Ma.)}
}
\thanks{ J. Wang and S.  Ma are  with the State Key Laboratory of Internet of Things for Smart City and the Department of Electrical and Computer Engineering, University of Macau, Macao SAR, China (e-mails: wang.jintao@connect.um.edu.mo; shaodanma@um.edu.mo). }
\thanks{ S. Gong is with the School of Cyberspace Science  and Technology, Beijing Institute of Technology, Beijing 100081, China (e-mail: gsqyx@163.com).}
\thanks{
  Q. Wu is with the Department of Electronic Engineering, Shanghai Jiao Tong University, 200240, China (e-mail: qingqingwu@sjtu.edu.cn)
} 
}

\markboth{IEEE Transactions on Wireless Communications}{Wang \MakeLowercase{\textit{et al.}}: RIS-Aided MIMO Systems with Hardware Impairments: Robust Beamforming Design and Analysis}

 %\vspace{-12pt}
\maketitle

\IEEEpubid{\begin{minipage}[t]{\textwidth}\ \\[12pt] \centering
  \copyright \ 2023 IEEE. Personal use of this material is permitted. Permission from IEEE must be obtained for all other uses, in any current or future media, including reprinting/republishing this material for advertising or promotional purposes, creating new collective works, for resale or redistribution to servers or lists, or reuse of any copyrighted component of this work in other works.
\end{minipage}}

\vspace{-12pt}
\begin{abstract}
Reconfigurable intelligent surface (RIS) has been anticipated to be a novel cost-effective technology to improve  the performance of future wireless systems.
In this paper, we investigate {a} practical RIS-aided multiple-input-multiple-output (MIMO)  system in the presence of  transceiver hardware impairments,  RIS phase noise  and imperfect channel state information (CSI).  Joint design of the MIMO  transceiver and  RIS reflection matrix to minimize the total  average  mean-square-error (MSE) of all data streams is particularly considered. 
This joint design problem is non-convex and challenging to solve due to the newly considered practical imperfections.
To  tackle the issue,  
we {first} analyze the total average  MSE  by incorporating the impacts of the  above system imperfections. 
Then,  in order to handle the tightly coupled optimization variables and non-convex NP-hard constraints, an efficient iterative algorithm based on alternating optimization (AO) framework is proposed with guaranteed convergence, where each subproblem admits {a closed-form optimal} solution by leveraging the majorization-minimization (MM) technique. 
{Moreover, via exploiting the  special structure of the  unit-modulus constraints, we  propose a modified Riemannian gradient ascent (RGA) algorithm  for  the discrete  RIS phase shift optimization.}
%{} 
Furthermore, the optimality of the proposed algorithm is  validated  under  line-of-sight (LoS) channel conditions, and the irreducible MSE floor effect induced by imperfections of both  hardware and CSI  is also revealed in the high {signal-to-noise ratio (SNR)} regime. Numerical results  show the superior MSE  performance  of our proposed algorithm over the adopted benchmark schemes, and demonstrate that increasing {the number of RIS elements} is not always beneficial under the above {system imperfections}.

\end{abstract}
\begin{IEEEkeywords}
  Reconfigurable intelligent surface (RIS),  multiple-input multiple-output (MIMO),  hardware impairments, RIS phase noise, imperfect channel state information (CSI), mean square error (MSE) 
\end{IEEEkeywords}

\section{Introduction}

  %。
  Recently, reconfigurable intelligent surface (RIS) has attracted much attention in wireless communications due to its ability of reshaping  the wireless propagation environment dynamically. Specifically, the  RIS  is composed of a large number of passive and low-cost reflecting elements, which can adjust the phase of the incident signal towards the target direction with the aid of  a {\color{black}smart controller}  and thus  create the favorable  communication channels.
  Since the RIS does not require {\color{black} costly  radio frequency} (RF) chains, it can be  regarded as a cost-effective solution for improving the spectrum and energy efficiency of future  wireless networks. Inspired by  the high  flexibility of the RIS deployment, its  integration  with  the cutting-edge techniques, such as unmanned aerial vehicle (UAV) communications\cite{UAV2021}, mmWave communications\cite{mmWave2021}, secure communications\cite{secure2021}, wireless power transfer\cite{wpt2021} and so on, has also triggered {\color{black}an upsurge research interest} \cite{wu2021intelligent}.

  %Considering the above  significant benefits of the RIS, there have been plenty of works focusing on the  RIS-aided communication systems\cite{wu2019intelligent,transpowermini13,Multicell_2020_Pan,rate4,rateZhangJun,MSE6zhaoxin,MSE_gong,MSE9kaizhe,enereff7,enereff12,maxSINR14,gong2020beamforming}. These works can be classified by different design objectives, e.g., the transmit power minimization\cite{wu2019intelligent,transpowermini13},the rate  maximization\cite{Multicell_2020_Pan,rate4,rateZhangJun}, the mean-square-error (MSE) minimization\cite{MSE6zhaoxin,MSE_gong,MSE9kaizhe}, the energy efficiency maximization\cite{enereff7,enereff12}, the minimum signal-to-interference-plus-noise-ratio (SINR) maximization\cite{maxSINR14,gong2020beamforming}. In addition to the above considered narrowband scenarios, the sum-rate maximization of the RIS-aided orthonormal frequency division multiplexing (OFDM) system over the  frequency-selective channels \cite{OFDM1,yang2020intelligent,OFDM3}.
  Considering the above  significant benefits of the RIS, there have been plenty of works focusing on the  RIS-aided communication systems\cite{wu2019intelligent,transpowermini13,Multicell_2020_Pan,rate4,rateZhangJun,MSE6zhaoxin,MSE_gong,MSE9kaizhe,enereff7,enereff12,gong2020beamforming,Hua2021}. These works can be classified by different design objectives, e.g., the transmit power minimization\cite{wu2019intelligent,transpowermini13}, the rate  maximization\cite{Multicell_2020_Pan,rate4,rateZhangJun}, the mean-square-error (MSE) minimization\cite{MSE6zhaoxin,MSE_gong,MSE9kaizhe}, the energy efficiency maximization\cite{enereff7,enereff12}, the minimum signal-to-interference-plus-noise-ratio (SINR) maximization\cite{gong2020beamforming}. In addition to the above considered narrowband scenarios, the sum-rate maximization of the RIS-aided orthonormal frequency division multiplexing (OFDM) system over the  frequency-selective channels has also been studied \cite{OFDM1,yang2020intelligent,OFDM3}.
  {However, all the above works rely on the use of ideal hardware.}
 %.   

 Nevertheless, 
 the transmitter and receiver usually  suffer from  non-negligible hardware impairments (HWIs)   in  practice, such as  amplifier nonlinearities, analog-to-digital converters (ADCs) nonlinearities, digital-to-analog converters (DACs) nonlinearities,  {\color{black} in-phase (I) and quadrature (Q) imbalance} and   oscillator phase noise, etc \cite{RF_imperfections_book_2008,JunJuan_2019}. 
 Even taking the compensation algorithms, the residual hardware impairments \cite{residual_transmit_RF_impairments_2010}  still cause the mismatch between the intended signal and the actual radiated signal,  {thereby leading to the degradation of system performance, such as the ergodic capacity, achievable rate, MSE \cite{how_much_HI_affect_2020,khel2021effects,saeidi2021weighted}. Specifically, the authors of \cite{how_much_HI_affect_2020}  analyzed the detrimental effects of transceiver hardware impairments on the ergodic capacity, the outage probability, and the spectral efficiency of the RIS-aided single-input-single-output (SISO) system. Moreover, the authors in \cite{khel2021effects} analyzed the effects of transceiver hardware impairments on the ergodic capacity, showing that transceiver hardware impairments imposes a finite limit on the ergodic capacity, which is unrelated to the number of RIS elements and BS antennas.}
 To alleviate this issue, there have sprung up some  works  aiming  at RIS-aided communication systems with transceiver hardware impairments
\cite{zhou2021secure,hong_shen_beamforming_2021,Spectral_and_Energy_Efficiency_2020}.
{ For example, to ensure the maximum secrecy rate, the authors in \cite{zhou2021secure} studied the robust transmission design for a RIS-aided secure communication system in the presence of transceiver hardware impairments.}
In \cite{hong_shen_beamforming_2021},   the robust transmit precoder and RIS reflection matrix were jointly optimized to maximize the received signal-to-noise-ratio (SNR) of the RIS-aided multi-input-single-output (MISO) system via the generalized Rayleigh quotient and majorization-minimization (MM) algorithm. 
 
 In addition,  taking into account practical finite-resolution phase shifts, the hardware impairments at the  RIS are usually modeled as  RIS phase noise.  
Currently, there are two popular distributions used for modeling  the RIS phase noise, e.g., the uniform and the Von Mises distributions\cite{phaseerror_explain}.
%\cite{phaseerror_explain}.
Accordingly,  the joint impacts of  RIS phase noise and transceiver hardware impairments on the RIS-assisted communications {\color{black}were} investigated in \cite{MSEphasenoise2022,liu2020energy,phasenoise2,khel2021effects,chu2022ris,Dai_Jianxin_MU_MISO_2021}.
For example, the authors in \cite{MSEphasenoise2022} modeled the random RIS phase noise following a zero-mean  Von Mises distribution and studied the MSE minimization problem by jointly optimizing the transceiver and RIS reflection matrix, where the closed-form continuous phase shifts are derived using MM technique. 
{ Also, the inevitable phase errors at the RIS affected the diversity order \cite{khel2021effects}, hence it degraded the signal-to-noise-plus-distortion ratio which leads to a significant reduction in the ergodic capacity.
On the other hand, 
the work in \cite{chu2022ris} considered the uniformly distributed RIS phase noise and unveiled the impact of phase shift errors and transmission hardware impairments on the system sum throughput for the RIS-aided wireless-powered Internet of Things (IoT) network, in which wireless energy transfer and information receptions are studied. 
With the same uniformly distributed RIS phase noise,} the authors in \cite{Dai_Jianxin_MU_MISO_2021} aimed to maximize the achievable rate by designing the phase shifts using the condition of statistical channel state information (CSI), where the iterative genetic algorithm (GA) is applied.

  % multi-user cases.

% 
It is clear that  the aforementioned  works  are all conducted under  the assumption of perfect CSI. In general, perfect CSI is  hard to obtain due  to   channel estimation errors,  channel feedback delays and quantization errors. 
 It is  therefore meaningful  to investigate the joint optimization of the MIMO transceiver and RIS reflection matrix  by incorporating  the  effects of both system hardware impairments and imperfect CSI. 
 To our best knowledge,
  there {\color{black}were} only limited studies considering the RIS-aided wireless networks with both non-negligible  hardware impairments   and imperfect CSI\cite{HIandCSIerror2022,liu2020beamforming,papazafeiropoulos2021intelligent}. 
  {\color{black} Specifically, the authors in \cite{HIandCSIerror2022} studied the robust beamforming design for the transmit power  minimization where the joint impacts of transceiver hardware impairments and statistical CSI errors is considered. 
  However, the impact of RIS phase noise is neglected in this work. Meanwhile,  the works\cite{liu2020beamforming,papazafeiropoulos2021intelligent} investigated the channel estimation methods accounting for both transceiver hardware impairments and RIS phase noise. Armed with the estimated channels, these two works further studied the joint beamforming design for the channel capacity and the achievable sum spectral efficiency, respectively.
  Note that only the single-antenna nodes or users are considered in all these above works. 
  To the best of our knowledge, there has been no literature  investigating the  comprehensive  impacts  of the transceiver hardware impairments,  RIS phase noise  and imperfect CSI  on the RIS-aided MIMO system, which thus motivates this {\color{black}work}.
  }

  In this paper, we   consider {\color{black}an}  RIS-aided  point-to-point MIMO system  in the presence of  transceiver hardware impairments,  RIS phase distortion  and imperfect CSI. It is known that different  performance metrics of  RIS-aided MIMO communications such as  the sum rate%
  , the  total transmit power  and  the energy efficiency 
  have been well optimized. These performance metrics  essentially  represent different trade-offs among the MSEs of multiple data streams\cite{Xing_obj_2021}. As such,  we aim to minimize the total MSE of the considered system by  jointly designing the  MIMO  transceiver and the RIS reflection matrix subject to the transmit power constraint and the discrete unit-modulus constraints at the  RIS. 
  Unfortunately, since the  integration  of the above three types of system imperfections renders the optimization problem  much complicated, most  existing numerical algorithms  adopted by the RIS-relevant works cannot be straightforwardly applied. The main contributions of our work  {\color{black}are summarized} as follows.
    %%%. 
 
    \begin{itemize}
      \item This is the first paper to investigate the joint impacts of transceiver hardware impairments, RIS phase noise and imperfect CSI on the RIS-aided point-to-point MIMO systems. We formulate the joint transceiver and RIS reflection matrix design problem as the total average MSE minimization problem, where the analytical expression is derived in the presence of statistical  CSI errors and RIS phase noise. 
      { Different from the existing research focusing on the single-antenna users, our work considers the general MIMO system setup, thereby leading to an intractable matrix-valued optimization problem.  Note that the existing optimization methods are inapplicable to our problem.}
  
      \item { We propose an MM-based algorithm framework to solve the intractable matrix-valued optimization problem with guaranteed convergence.} Specifically, for the transmit precoder design with the intractable objective function, a locally tight lower bound is found using the MM technique. Then, based on Karush-Kuhn-Tucker (KKT) conditions, the optimal transmit precoder is derived in closed-form in each iteration. Similarly, we also apply the MM technique to solve the non-convex discrete phase constraints at the RIS, namely the two-tier MM-based algorithm. In addition, by exploiting the unit-modulus discrete phase constraints, we propose a novel modified Riemannian gradient ascent (RGA) algorithm to obtain the sub-optimal solution of the RIS reflection matrix. 
  
      \item 
      { We reveal that an irreducible MSE floor exists in the high-SNR regime.  Specifically, we derive an explicit expression of the MSE floor and analyze the impacts of the transceiver hardware impairments, the RIS phase noise, and the imperfect CSI on the average MSE in the special case of RIS-aided MISO system, individually. The simulation results demonstrate the impacts of different system imperfections on the system's performance, providing some practical insights for implementing the RIS-aided MIMO system.}
    \end{itemize}

  %Outline
  The remainder of this paper is organized as follows. Section II introduces the system model and problem formulation. The joint robust  MIMO transceiver  and RIS reflection matrix design is  presented in Section III. Section IV discusses the optimality of the proposed MM-based AO algorithm and analyzes the MSE performance  under some special cases.
  Numerical results are shown  in Section V. Finally, Section VI concludes this paper. 

  {\bf Notations:}
  The notation $\mathbb{E}$ represents the expectation on the random variables. 
  ${\mathbb{C}^{ M\! \times \!N}}$ denotes the $ M \times N$ complex space. 
  $\bf{A}^{\star}$, ${\bf{A}}^T$, ${\bf{A}}^H$, ${\bf{A}}^{-1}$ and $\rm Tr({\bf{A}})$ represent the conjugate, transpose, Hermitian, inverse and trace of matrix ${\bf{A}}$, respectively. 
 ${\bf{I}}_{ d}$ denotes a $ d \times d$ identity matrix, and  
  $[{\bf{a}}]_{ i}$  denotes the $ i$-th element of vector ${\bf{a}}$.
   $[{\bf{A}}]_{ ij}$  represents the $(i,j)$-th element  of matrix ${\bf{A}}$.  The notations ${\rm vec}\left( {\bf{A}} \right)$,
$\otimes$  and  $\odot$ denote the matrix vectorization,  Kronecker product and  Hadamard product, respectively.
  ${\rm{ diag}}({\bf{a}})$ indicates a square diagonal matrix whose diagonal elements 
  consist  of
  a vector ${\bf{a}}$.
   ${\rm{\widetilde{diag} }}({\bf{A}})$ represents a square diagonal matrix whose diagonal elements are the same as those  of the matrix ${\bf{A}}$.
  $\Re\{\}$ returns the real part of the complex input, and  
  $| a|$ represents the modulus of the complex input $ a$. The words
   “with respect to” and “ circularly symmetric complex Gaussian” are abbreviated as “w.r.t.” and “CSCG”, respectively. ${\cal{U}}[ a,b] $ denotes the uniform distribution over the interval $ [a,b]$.
   { The main symbols used in this paper are listed in Table I.}
   \vspace{-1mm}
\begin{table*}[!t]  %\vspace{-10mm}
  \caption{List of Symbols}
  \vspace{-7pt}
  \centering
  \begin{tabular}{|c|c|c|c|}
  \hline
  Symbol                  & Description  & Symbol     & Description  \\ 
  \hline
  $ N_T/N_R $         & Number of BS/user antennas  
  &
  $M/d$               & Number of RIS elements/data streams   \\ 
  \hline
  $ {\bf{x}} $             & Transmit data symbols &
  $ {\bf{s}}/{\bf{y}} $  & Transmitted/Received signal at BS/user\\ 
  \hline 
  $ {\bf{W}}/{\bf{C}} $ & Transmit precoder/Linear equalizer &
  $b$ & Number of quantization bits \\
  \hline
  $ {{\bm{\kappa}}_T}/{{\bm{\kappa}}_R} $  & Transmitter/Receiver distortion noise &
  $ {\beta_T}/{\beta_R} $ & Normalized transmitter's/receiver's distortion level 
    \\
  \hline
  $ {{\bf{H}}_{\rm cas}} $ & Cascaded BS-RIS-user channel &
  $ {{\bf{G}}_{m}} $ & Compound channel associated \\
  & & & with $m$-th RIS element  \\
  \hline
  $ {{\bf{H}}_d}/{{\bf{H}}_I}/{{\bf{H}}_r} $ & BS-user/BS-RIS/RIS-user channel &
  ${{\bf{\bar H}}_d}/{{\bf{\bar G}}_m}$  & Estimated direct/compound channel \\
  \hline
  $ \Delta {\bf{ H}}_d/\Delta {\bf{ G}}_m$ & CSI errors  &
   $ {\sigma_d^2}/{\sigma_m^2}$ & Channel estimation inaccuracy \\
   \hline
   $ {\bm{\Theta}}/{\bm{\theta}} $ & RIS reflection matrix/vector  &
   ${\phi_m}/{\Delta \phi_m}$ & RIS phase shift/distortion \\
   \hline
   $P$ & Transmit power &
   $\sigma^2$ & Noise variance \\
   \hline
    $\epsilon_b$ &Average phase distortion level 
    & ${\bf{\bar H}}_{\bm{\theta}}$ & Available cascaded BS-RIS-user channel \\
   \hline
   ${\bf{T}}^{\rm SI}_{{\bf{W}}}$ & Joint impacts of system imperfections   &   ${\bf{\ H}}_{\rm cat}$ & Concatenated channel \\
   \hline
   ${\bf{T}}^{\rm cas}_{{\bf{W}},\bm{\theta}}$/${\bf{T}}^{\rm com}_{{\bf{W}}}$ & Received signal covariance matrix &  $\nu_I$/$\nu_r$ & Complex channel gain of the LoS path \\
   & associated with ${\bf{\bar H}}_{\bm{\theta}}$/${\bf{\bar G}}_{m}$ & & for the BS-RIS/RIS-user channel    \\ 
   \hline                    
${\bf a}_{TX}$/${\bf a}_{RX}$ & Array response vector at BS/user & ${\bf a}_{RA}$/${\bf a}_{RD}$ & Array response vector of AoA/AoD at RIS  \\ 
\hline          
  \end{tabular}
  \end{table*}

\section{System Model and Problem Formulation}  

\subsection{System Model}
%\vspace{-4pt}
    As shown in  Fig.~\ref{Fig1}, 
    a narrowband RIS-aided MIMO communication system  is considered, where an RIS  equipped with $M$ reflecting elements is deployed to assist in downlink communications from a base station (BS) equipped with $ N_T$ antennas to a user equipped with $N_R$ antennas. 
    Denoted by $ {\bf{x}} \in  {\mathbb{C}^{d \times 1}}$ with $ d \leq {\rm{min}}(N_T,N_R)$  and $\mathbb{E} \left[ {{\bf{x}}{{\bf{x}}^H}} \right] \!=\! {{\bf{I}}_d}$ the {\color{black}transmit} data symbols, the transmit signal at the  BS  considering the realistic hardware impairments
   is then  expressed as \footnote{Most existing works related to hardware impairments aware transceiver designs have mathematically  modeled  the effects of transmitter  hardware distortion,  such as nonlinearities, IQ-imbalance and phase noise, as shown in  (\ref{transmit_signal}) \cite{hong_shen_beamforming_2021}. } 
    \begin{equation}\label{transmit_signal}
        {\bf{s}}={{\bf{W}}}{{\bf{x}}}+ {{\bm{\kappa}}_T}, \vspace{-2mm}
    \end{equation}
    where ${\bf{W}}\in  {\mathbb{C}^{N_T \times d}}$ represents the transmit precoder and ${{\bm{\kappa}}_T} \in \mathbb{C}^{N_T \times 1}$ denotes the transmitter distortion noise, which models the aggregate residual hardware impairments after calibration or pre-distortion\cite{RF_imperfections_book_2008} at the BS and is independent of the data symbols {\bf{x}}.  The elements of ${{\bm{\kappa}}_T} $ are usually assumed to be  CSCG distributed  variables  with zero mean and variance being proportional to the transmit power of each antenna\cite{energy_efficiency_HI_2020_Oluwatayo},
    namely, $ {{\bm{\kappa}}_T} \!\sim\!  {\cal C}{\cal N}( {{\bf{0}}, {\beta _T^2}{\rm{ \widetilde{diag}}}( { {{{\bf{W}}}{{\bf{W}}}^H} } )} )$, where $ {\beta _T}\in[0,1]$ characterizes the normalized distortion level whose value is much less than 1\cite{Capacity_Emil_Bjornson_2013}.  The BS-user channel, the BS-RIS channel and the RIS-user channel are denoted as $ {\bf{H}}_d \in \mathbb{C}^{N_R \times N_T}$, $ {\bf{H}}_I \in \mathbb{C}^{ N_T \times M} $ and $ {\bf{H}}_r \in \mathbb{C}^{N_R \times M}$, respectively. The reflection coefficient of the $m$-th RIS reflecting element  is  given by   ${{\theta}}_{m}=a_m e^{j\phi _m},m\in\!\mathcal{M}\!=\!\{1,2,\cdots,M\}$, 
    where $a_m$ and $\phi _m$ denote the  reflection  amplitude and   phase shift, respectively, and the maximum reflection  amplitude $a_m=1$ is assumed for simplicity. As such,  the cascaded BS-RIS-user channel   can be  expressed   as 
    $ {\bf{H}}_{\rm cas}={\bf{H}}_d+{\bf{H}}_r {\bm{\Theta}} {\bf{H}}_I^{H}$, where 
    $\bm{\Theta} \! =\! \rm diag \left( [ \theta_1, \cdots, \theta_M] \right)$ stands for the RIS reflection matrix.
    {
     Specifically, the direct link between  the BS and the user may be unavailable due to short wavelengths at high frequencies and severe blockage from buildings and trees\cite{WuXianda_millimeter} in the realistic scenarios. As a remedy, the RIS  can be deployed in the high altitudes to intelligently reflect the incident signal to the target user for  assisting in wireless  communications.
     In such scenarios, the RIS-related channels are assumed to have only LoS components.\footnote{ The simplified RIS-related MIMO channel model is adopted to assist in analyzing the optimality of our proposed MM-based AO algorithm in Section III.} 
    }

    \begin{figure}[t] \vspace{-5mm}
      \centering
      \includegraphics[width=0.45\textwidth]{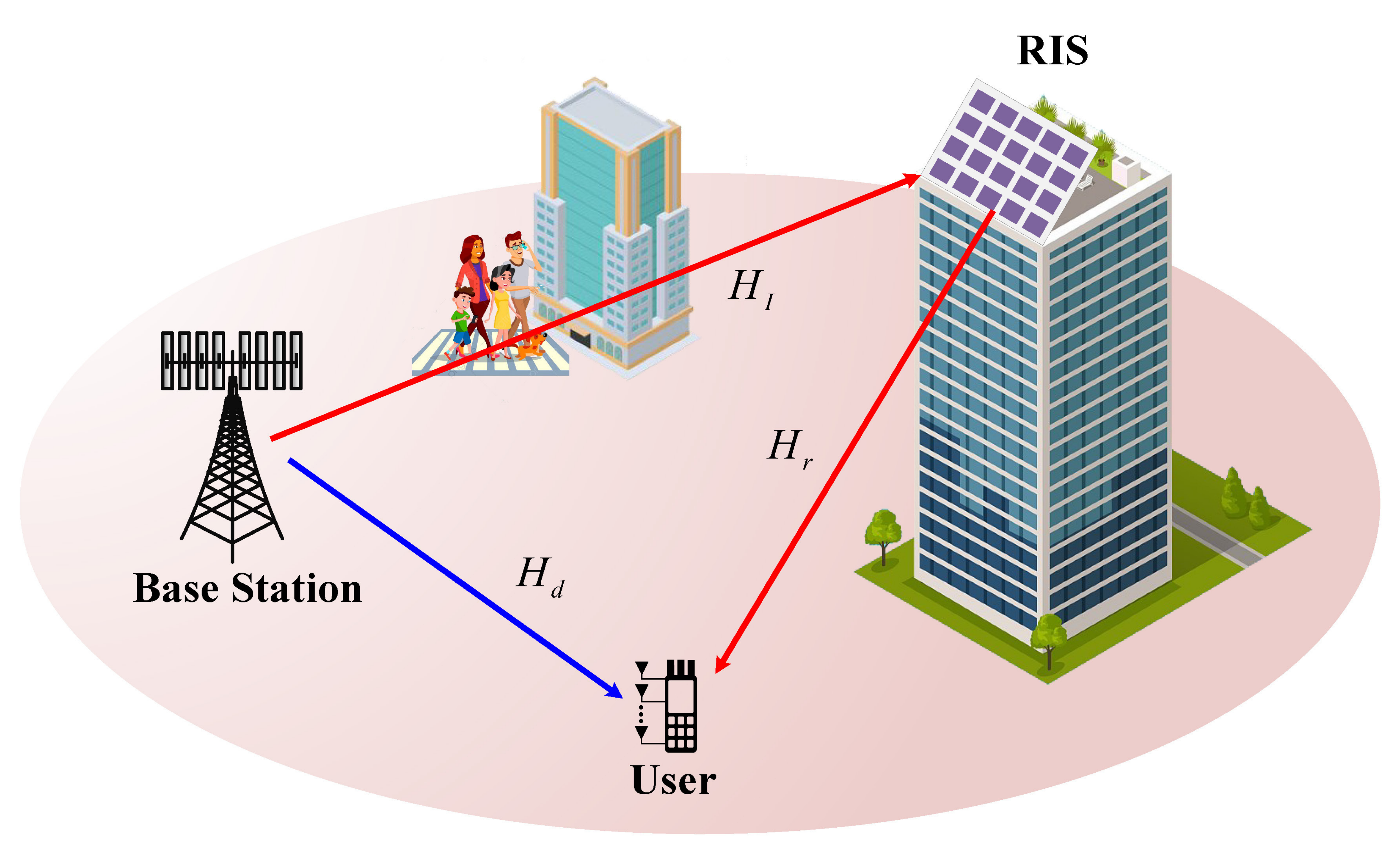}\label{system model}
      %\vspace{-5pt}
      \caption{An RIS-aided point-to-point  MIMO communication system.}\label{Fig1}
      \vspace{-5mm}
    \end{figure}

    Due to the passive reflection property of the RIS, the cascaded BS-RIS-user channel estimation   is generally more cost-effective than the separate channel estimation, in which the BS-RIS channel ${\bf{H}}_I$ and the  RIS-user channel ${\bf{H}}_r$  are estimated independently. 
    In terms of our work, we  {\color{black}assume that} channel reciprocity holds, based on which  the  cascaded BS-RIS-user channel estimation is performed 
    at the BS via the uplink pilot transmission. 
    Specifically, we first  divide  the whole  uplink training phase into $M+1$ subphases.
    %\footnote{\color{red} The training overhead and estimation complexity could be considerably reduced by the RIS elements grouping method \cite{yang2020intelligent}. Note that since the RIS elements are usually tightly deployed, the channels between the adjacent reflection elements are generally correlated. By leveraging on grouping the adjacent elements into a block and applying the same reflection pattern, the number of estimated channels could be decreased quadratically.}.
    The ``ON/OFF" mode is adopted by  each RIS reflecting  element by setting $ a_m\!=\!1/0,~\forall m\!\in\! \mathcal{M}$. 
    In the first subphase, the direct channel $ {\bf{H}}_d$ is estimated by setting all the reflecting elements to ``OFF''.
    Then, the compound  channel $ {\bf{G}}_m={\bf{h}}_{r,m}{\bf{h}}_{I,m}^{H}$ associated with the $m$-th RIS reflecting element is obtained by turning the $m$-th RIS reflecting element ``ON" while {\color{black}keeping} the others ``OFF",  where ${\bf{h}}_{r,m}$ and ${\bf{h}}_{I,m}$ denote the $m$-th columns of  ${\bf{H}}_{r}$ and ${\bf{H}}_{I}$, respectively.
    In this context, the cascaded BS-RIS-user channel ${\bf{H}}_{\rm cas}$ is reexpressed as 
    %\vspace{-2mm}
    \begin{equation}\label{channel model 1}
      \setlength{\abovedisplayskip}{3pt}
      \setlength{\belowdisplayskip}{3pt}
        {\bf{H}}_{\rm cas}= {\bf{H}}_d+{\sum\nolimits_{i = 1}^M{\theta_m}{\bf{G}}_m}.  
    \end{equation}
    Then, the received signal $\rm \bf{y}$ at the user can be written as 
    %\vspace{-2mm}
        \begin{equation}\label{received signal}
          \setlength{\abovedisplayskip}{3pt}
          \setlength{\belowdisplayskip}{3pt}
          {{\bf{y}}}= \underbrace{  {\bf{H}}_{\rm cas} \left({{\bf{W}}}{{\bf{x}}}+ {{\bm{\kappa}}_T} \right)+ {\rm {\bf{n}}}}_{\rm {\bf{\tilde y}}} + {{\bm{\kappa}}_R},            
        \end{equation}
         where $\tilde {\bf{y}}\in {\mathbb{C}^{N_R \times 1}}$ denotes the undistorted received signal and $ {\bf{n}} \in {\mathbb{C}^{N_R \times 1}}$ represents the additive white Gaussian noise (AWGN) drawn from  
         $\mathcal{C} \mathcal{N} \left( {\bf{0}}, \sigma^2 {\bf{I}}\right)$. Analogous to the definition of ${{\bm{\kappa}}_T}$, 
         ${{\bm{\kappa}}_R} \sim  {\cal C}{\cal N}  ( {\bf{0}},{  \beta _R^2}  {\rm{ \widetilde{diag}}} \left( {\mathbb{E} \left[ {{\bf{\tilde y}}{\bf{\tilde y}}^H} \right] } \right) )$ denotes the CSCG distributed receiver distortion noise, 
          where $ { \beta _R}$ indicates the distortion level at the user and %\vspace{-2mm}
          \begin{subequations}
            \begin{alignat}{2}      
              {  \beta _R^2}  {\rm diag} \left({\mathbb{E}\rm  \left[ {{\bf{\tilde y}}{\bf{\tilde y}}^H} \right] }\right) 
              & \mathop  = \limits^{({a_1})}  {  \beta _R^2}  {\rm{ \widetilde{diag}}} \big(  {\bf{H}}_{\rm cas} {\bf{W}}{\bf{W}}^H {\bf{H}}_{\rm cas}^H +\sigma ^2{{\bf{I}}_{{N_R}}}  \nonumber \\
              & \qquad  +{\beta _T^2} {\bf{H}}_{\rm cas} {\rm{ diag}}\left( { {{{\bf{W}}}{{\bf{W}}}^H} } \right){\bf{H}}_{\rm cas}^H \big)
              \nonumber \\
              & \mathop \approx \limits^{({a_2})} {  \beta _R^2}  {\rm{ \widetilde{diag}}} \left(  {\bf{H}}_{\rm cas} {\bf{W}}{\bf{W}}^H {\bf{H}}_{\rm cas}^H +\sigma ^2{{\bf{I}}_{{N_R}}}\right). \tag{4}
            \end{alignat} %\vspace{-2mm}
          \end{subequations}
             The equality $({a_1})$ holds based on \eqref{received signal} and $({a_2})$ holds since the term $\rm { \beta _R^2} {\beta _T^2}$ is sufficiently small.

             {{\textit{Remark:}} The training overhead and estimation complexity for this ``ON/OFF'' channel estimation scheme could be considerably reduced by leveraging the RIS elements grouping method \cite{youIntelligentReflectingSurface2020,yang2020intelligent}. Specifically, since the RIS elements are usually densely deployed, the channels of the adjacent RIS reflection elements are generally correlated. By grouping the adjacent RIS elements into a block and applying the same reflection pattern to each block element, the number of estimated channels could be decreased quadratically. Such grouping method is suitable for practical implementation.  
             Inspired by that, the compound channels of each group are estimated sequentially with the pilot transmission by turning on the RIS elements of the corresponding group, significantly reducing the training overhead.}
             %Inspired by that, the training overhead for the large-scale RIS are significantly reduced by setting a common reflection coefficient for all the RIS elements in each group, indicating that our scheme is still practical in the large-scale RIS system.}
  %\vspace{-3pt}
  \subsection{Realistic Channel Modeling }
  %\vspace{pt}
  It follows from \eqref{channel model 1} that the cascaded BS-RIS-user channel consists of three parts: the direct channel $ {\bf{H}}_d$, the compound channels $ {\bf{G}}_m$'s, and the RIS reflection coefficients $ \theta_{m}$'s. In realistic communication systems, due to the presence of  the non-negligible hardware impairments, channel estimation errors and feedback delays, the accurate cascaded BS-RIS-user channel  ${\bf{H}}_{\rm cas}$ is   difficult to obtain. That is to say,  the  perfect CSI  knowledge of the  channels $ {\bf{H}}_d$ and $ {\bf{G}}_m $'s at the BS is hard to realize. Motivated by this fact, in this paper,  we consider the imperfect CSI model composed of the channel estimate and statistical CSI errors{\cite{wang2014outage,zhang2020robust,secure2021}}. Accordingly, the actual  channels  $\{{\bf{H}}_d,{\bf{G}}_m\}$ can be  modeled  as  %\vspace{-3pt}
 \begin{equation} \label{channel model 2}
   {\bf{H}}_{d} =  {\bf{\bar H}}_{d} + \Delta {\bf{H}}_{d} ,~~ \ {\bf{G}}_{m} = {\bf{\bar G}}_m + \Delta {\bf{ G}}_m, %\vspace{-2mm}
\end{equation} 
 where $ {\bf{\bar H}}_{d}$ and $ {\bf{\bar G}}_m$ denote the estimated direct channel and the estimated compound  channel associated with the $m$-th RIS reflecting element, respectively.  $\Delta {\bf{ H}}_{d}$ and $\Delta {\bf{ G}}_m$ are the corresponding CSI errors, respectively. 
 ${\rm vec}( \Delta {\bf{ H}}_{d})$ $({\rm  vec} ( \Delta {\bf{ G}}_m ))$ is assumed to follow the CSCG distribution with zero mean and covariance matrix $\sigma_{d}^2 {\bf{I}} $ ($\sigma_{m}^2 \bf{I}$),  i.e.
  $ {\rm vec}( \Delta {\bf{ H}}_{d} ) \sim \mathcal{C} \mathcal{N} ( {\bf{0}},\sigma_{d}^2 {\bf{I}} )$ and ${\rm  vec} ( \Delta {\bf{ G}}_m ) \sim \mathcal{C} \mathcal{N} ( {\bf{0}}, \sigma_{m}^2 \bf{I} ) $, 
  where $\sigma_{d}^2$ and $    \sigma_{m}^2$ indicate the  estimation inaccuracy of ${\bf{ H}}_{\rm d} $ and ${\bf{ G}}_m $, respectively.

     Additionally, since
      the RIS employs low-resolution, low-cost phase shifts for   practical implementation,  the phase shift $\phi_m$ at the $m$-th RIS reflecting element  usually takes the discrete value belonging to the set $\mathcal{F}\triangleq \{-\pi,{\frac{2\pi} {2^{\rm b}}}-\pi,...,{\frac {{2\pi}({2^{\rm b}}-1)}  {2^{\rm b}}}-\pi \} $, where $b$ represents the number of  quantization bits. Nevertheless,
     due to the presence of    quantization errors induced by the discrete RIS phase shifts, the RIS phase distortion  can be modeled as $\bigtriangleup \phi_m \sim {\cal{U}}[{\frac {-\pi}{2^{\rm b}}},{\frac {\pi} {2^{\rm b}}}] $. Consequently, the actual phase over the $m$-th RIS reflecting element is  given by \vspace{-1.5mm}
 \begin{equation} \label{RIS_distortion}
 \phi_m^{\rm act}= {\phi}_m+\bigtriangleup \phi_m, %\vspace{-1.5mm}
\end{equation} 
 where  $\rm \bigtriangleup \phi_m$ denotes the phase distortion of the $m$-th RIS reflecting  element. Accordingly, the actual  RIS reflection coefficient $\theta_m^{\rm act}$ is modeled as  $\theta_m^{\rm act}= {\theta}_m e^{j \Delta\phi _m}$. 

%\vspace{-6pt}
\subsection{Problem Formulation}
%\vspace{-3pt}
At the user side, the linear equalizer ${\bf{C}} \!\in\! {\mathbb{C}^{d \times N_R}}$ is applied to obtain the estimated data symbol vector $ {\bf{\hat{x} }}={\bf{C}} {\bf{y}}$. Then, the resultant MSE matrix is  divided into the following two parts  as defined in \eqref{MSE} at the top of the next page,
\begin{figure*}[t] %\vspace{-10mm}
        \vspace{-2mm}
        \begin{align}\label{MSE}
           {\bf{MSE}}= \mathbb{E} \left[ \left({\bf{\hat{x} }}-{\bf{x}}\right) \left({\bf{\hat{x} }}-{\bf{x}}\right)^H  \right] 
          =& \underbrace{ { {\bf{I}}_d} - {\bf{C}}{\bf{H}}_{\rm cas}{\bf{W}} - {{\bf{W}}^H}{{\bf{H}}_{\rm cas}^H}{{\bf{C}}^H}  + {\bf{C}}\left( {{\bf{H}}_{\rm cas}{\bf{W}}{{\bf{W}}^H}{{\bf{H}}_{\rm cas}^H} + {\sigma ^2}{\bf{I}}} \right){{\bf{C}}^H} }_{{\bf{E  }}^{\rm ideal}} \nonumber \\
          &+\underbrace{{\bf{C}} \big( 
            {\beta _T^2} {\bf{H}}_{\rm cas}{\rm{ \widetilde{diag}}}\left( { {{{\bf{W}}}{{\bf{W}}}^H} } \right)  {{\bf{H}}_{\rm cas}^H} +
            {  \beta _R^2}  {\rm{ \widetilde{diag}}} \left(  {\bf{H}}_{\rm cas} {\bf{W}}{\bf{W}}^H {\bf{H}}_{\rm cas}^H \right) +
             \sigma ^2 { \beta _R^2} {{\bf{I}}_{{N_R}}}
          \big)   {{\bf{C}}^H}}_{{\bf{E  }}^{\rm add}},
        \end{align}
        \vspace{-2mm}
        \hrulefill
        %\vspace{-6mm}
  \end{figure*} 
      where ${\bf{E  }}^{\rm ideal}$ denotes the MSE matrix  of the ideal RIS-aided  MIMO system,  and ${\bf{E  }}^{\rm add}$ is an  additional MSE matrix introduced by  transceiver hardware impairments. 
    By recalling the  statistical CSI  errors in \eqref{channel model 2} and  the RIS phase distortion in \eqref{RIS_distortion}, 
      the total average MSE of the RIS-aided MIMO system is  derived as
      \begin{align} \label{obj1}
        f_{\rm MSE}({\bf{C}},{\bf{W}},{\bm{\theta}})= & \mathbb{E}_{{\bf{H}},{\bm{\theta}}}  \left[ \rm  Tr \left( {\bf{MSE}} \right) \right] \nonumber \\ 
         = & {\rm Tr} \big( {{\bf{I}}_d} \!-\! {\bf{C{\bar H_{\bm{\theta}}}W}} \!-\! {{\bf{W}}^H}{{{\bf{\bar H}_{\bm{\theta}}}}^H}{{\bf{C}}^H} \nonumber \\
        & 
        + {\bf{C}} {(  {\bf{T}}^{\rm cas}_{{\bf{W}},\bm{\theta}} \!+\! \epsilon_b {\bf{T}}^{\rm com}_{{\bf{W}}} +{\bf{T}}^{\rm SI}_{{\bf{W}}} ) 
      }{{\bf{C}}^H} \big) ,
      \end{align}
      \begin{figure*}[t] \vspace{-2mm}
        \begin{subequations}\label{TW}
          \setlength{\abovedisplayskip}{3pt}
          \setlength{\belowdisplayskip}{3pt}
        \begin{align}
            &{\epsilon_b ={1-{{{4^b}} \over {{\pi }^2}}  {{\sin }^2}  ( {{\pi  \over {{2^b}}}} )}},  \label{9a} \\
            &{\bf{\bar H}}_{\bm{\theta}} = \ {{{\bf{\bar H}}}_d} + {{{2^b}} \over \pi }\sin ( {{\pi  \over {{2^b}}}} )\sum\nolimits_{m = 1}^M {{\theta}_m{{{\bf{\bar G}}}_m}},  \label{9b} \\
              &{\bf{T}}^{\rm cas}_{{\bf{W}},\bm{\theta}} =
               {{\bf{{\bar H_{\bm{\theta}}}W}}{{\bf{W}}^H}{{{\bf{\bar H_{\bm{\theta}}}}}^H} }  +
            {\beta _T^2}{\bf{\bar H_{\bm{\theta}}}} {\rm{ \widetilde{diag}}} ({\bf{W}}{\bf{W}}^H) {\bf{\bar H_{\bm{\theta}}}}^H +
              {\beta _{R}^2}{\rm{ \widetilde{diag}}}({\bf{\bar H_{\bm{\theta}}}} {\bf{W}}{\bf{W}}^H {\bf{\bar H_{\bm{\theta}}}} ),   \label{9c} \\
              &{\bf{T}}^{\rm com}_{{\bf{W}}} =  
              {\sum\nolimits_{m = 1}^M  \left( {{{{\bf{\bar G}}}_m}{\bf{W}}{{\bf{W}}^H}{{{\bf{\bar G}}}_m}^H}  + {\beta_T^2} {{{{\bf{\bar G}}}_m}{\rm{ \widetilde{diag}}}( {{\bf{W}}{{\bf{W}}^H}} ){{{\bf{\bar G}}}_m}^H}  + {\beta _R^2} {{\rm{ \widetilde{diag}}}( {{{{\bf{\bar G}}}_m}{\bf{W}}{{\bf{W}}^H}{{{\bf{\bar G}}}_m}^H} ) } \right)  },    \label{9d}  \\
                &{\bf{T}}^{\rm SI}_{{\bf{W}}} =
               \left(1 \! + \! {\beta_{R}^2}\right)\sigma^2{{\bf{I}}_{{N_R}}}+\left( {1 + {\beta _T^2} + {\beta_R^2}} \right){\rm Tr}\left( {{\bf{W}}{{\bf{W}}^H}} \right)\big( {{\sigma}_d^2 + \sum\nolimits_{i = 1}^M {{\sigma}_m^2} } \big){{\bf{I}}_{{N_R}}}. \label{9e} 
        \end{align}        \vspace{-3pt}
        \vspace{-3pt}
        \end{subequations}
            \vspace{-2mm}
              \hrulefill
              %\vspace{-6mm}
        \end{figure*}
      \noindent where the involved  auxiliary variables are defined in \eqref{TW}  at the top of the next page. 
      {\color{black}
      The term \eqref{9a} represents the average phase distortion level at the RIS,  
     and  \eqref{9b}  denotes the available  cascaded BS-RIS-user channel  based on  the estimated channels $\bar{\bf H}_d$ and $\bar{\bf G}_m$'s and the  discrete (quantized)  RIS  phase shifts $\theta_m$'s.} The equations
      \eqref{9c}  and   \eqref{9d}  represent the received signal covariance matrices  associated with the available cascaded BS-RIS-user channel ${\bf{\bar H}}_{\bm{\theta}}$  and the estimated compound  channels ${{\bf{\bar G}}}_m$'s, respectively. Finally, the {\color{black}term in} \eqref{9e}  models  the  joint impacts of  system imperfections, such as transceiver hardware  distortion and  CSI  errors, on the   total average MSE.

In this paper, we aim to  minimize the  total  average MSE $f_{\rm MSE}({\bf{C}},{\bf{W}},{\bm{\theta}}) $   by jointly designing the active transceiver and passive RIS beamforming under both the transmit power constraint and the  discrete  RIS phase shift constraints, which is formulated as
 \begin{subequations}\label{Problem P1}
 \begin{align}
 (\text{P1}): \ \min_{{\bf{C}},{\bf{W}},{\bm{\theta}}} \quad  \!\!&  f_{\rm MSE}({\bf{C}},{\bf{W}},{\bm{\theta}})  \\
\!\!\mbox{s.t.} \quad 
      &{ {(1+{\beta _T^2})}{\rm  Tr}\left( {\bf{W}}{\bf{W}}^H \right) \leq P, \label{P1C1} } \\
 & \phi_m \in \mathcal{F},~\forall m \in \mathcal{M}, \label{P1C2} 
 \end{align} 
\end{subequations}
where  $P$ denotes the maximum transmit power. In general, problem (P1) is jointly non-convex  w.r.t.  $\{ {\bf{C}},{\bf{W}},{\bm{\theta}} \}$, since  the optimization variables  are tightly coupled in the objective function and the discrete unit-modulus  phase constraints are  also non-convex.  It is worth noting that most  existing joint active and passive beamforming designs in the RIS-aided point-to-point  MIMO system with perfect CSI are not straightforwardly  applicable  on account of the existence of ${\bf{T}}^{\rm cas}_{{\bf{W}},\bm{\theta}}$ and ${\bf{T}}^{\rm com}_{{\bf{W}}}$, both  of  which are  non-convex  matrix-valued functions of $\{{\bf{W}},{\bm{\theta}}\}$. In the following section,  we propose an efficient AO algorithm by leveraging the philosophy of MM technique to 
 overcome the intractability of problem (P1).
 %\vspace{-4mm}

\section{MM-based  Alternating Optimization Algorithm}
  In this section, with the aid of MM technique,  we propose an  alternating optimization algorithm  to solve the non-convex  problem (P1). 
    Firstly, for any  given $\{ {\bf{W}}, {\bm{\theta}} \}$, we  find that problem (P1) is convex on   the unconstrained  linear equalizer $\bf{C}$.  
    {Inspired by this fact, the optimal ${\bf{C}}^{\star}$ can be readily derived by setting the first derivative of the objective function w.r.t $\bf{C}$ to 0, which is known as the well-known Wiener filter, i.e.,} 
\begin{equation}
 {\bf{C^\star}}={{\bf{W}}^H}{{{\bf{\bar H}_{\bm{\theta}}}}^H}{(  {\bf{T}}^{\rm cas}_{{\bf{W}},\bm{\theta}} \!+\! \epsilon_b {\bf{T}}^{\rm com}_{{\bf{W}}} +{\bf{T}}^{\rm SI}_{{\bf{W}}} )}^{-1}.
\end{equation}         
Substituting $\bf{C^\star}$ into the objective function in \eqref{Problem P1}, we can rewrite problem (P1)  as
\begin{subequations}
 \begin{alignat}{2}\label{P2}
 (\text{P2}):\  \max_{{\bf{W}},{\bm{\theta}}} \quad  & g_{\rm MSE}({\bf{W}},{\bm{\theta}}) \!=\!{\rm Tr}\big(  {{\bf W}^H}{\bf{\bar H}}_{\bm{\theta}}^H \big(  {\bf{T}}^{\rm cas}_{{\bf{W}},\bm{\theta}} \!+\! \epsilon_b {\bf{T}}^{\rm com}_{{\bf{W}}} \nonumber \\
 & \qquad \qquad \qquad \quad +{\bf{T}}^{\rm SI}_{{\bf{W}}} \big)^{-1}{\bf{\bar H}}_{\bm{\theta}}{{\bf W}}    \big),~~~\\
  \mbox{s.t.} \quad 
 & \eqref{P1C1},~ \eqref{P1C2},
\end{alignat}
\end{subequations}
  Unfortunately,   problem (P2) is  
    still jointly non-convex w.r.t. ${\bf W}$ and ${\bm \theta}$  owing to the complicated objective function
    with strong variable coupling, which thus motivates us to  develop an efficient AO algorithm  to solve it.
 In the AO procedure,   the MM technique can be applied to   find the  optimal closed-form solutions of  both  sub-problems in terms of  the transmit precoder ${\bf W}$ and the passive RIS reflection vector ${\bm \theta}$,    as elaborated below.
 
  %\vspace{-6pt}
  \subsection{Optimization of the Transmit Precoder}
  %\vspace{-3pt}
    In this subsection, we resort to   optimize the transmit precoder   $\bf{W}$ while keeping the passive RIS beamforming $\bm{\theta}$ fixed.  
    By defining $\rm {\bf{X}}_{{\bf W}}={\bf{{\bar H_{\bm{\theta}}}W}}$ and $
    \rm {\bf{Y}}_{{\bf W}}= {\bf{T}}^{\rm cas}_{{\bf{W}},\bm{\theta}} \!+\! \epsilon_b {\bf{T}}^{\rm com}_{{\bf{W}}} +{\bf{T}}^{\rm SI}_{{\bf{W}}}$,  
    the objective function of  problem (P2) turns into $g_{\rm sub1}({{\bf W}})=\rm Tr\big({\bf{X}}_{{\bf W}}{\bf{Y}}_{{\bf W}}^{-1}{\bf{X}}_{{\bf W}}^H\big)$,
    which is jointly convex w.r.t. ${\bf{X}}_{{\bf W}} $ and ${\bf{Y}}_{{\bf W}}$. Furthermore, 
    armed with  the MM technique, a surrogate function (locally tight lower bound) of $g_{\rm sub1}{({\bf W})}$ at a given point $\left\{{\bf{X}}_{{\bf W}_{t}},{\bf{Y}}_{{\bf W}_{t}} \right\}$ can be constructed  as
  \begin{subequations}
  \begin{align}\label{MMsub}
    &{g_{\rm sub1}^{\rm Low}({\bf{W}};{\bf W}_{t})} \nonumber \\
    &\!\overset{(b_1)} {=} \!\!{2\Re\left\{ {\rm Tr} ( {\bf{X}}_{{\bf W}_{t}}^H {\bf{Y}}_{{\bf W}_{t}}^{-1} {\bf{X}}_{{\bf W}} ) \right\}\!\!-\!\!{\rm Tr} ( {\bf{Y}}_{{\bf W}_{t}}^{-1} {\bf{X}}_{{\bf W}_{t}}{\bf{X}}_{{\bf W}_{t}}^{H} {\bf{Y}}_{{\bf W}_{t}}^{-1} {\bf{Y}}_{{\bf W}}) }\nonumber\\
    &\!\overset{(b_2)}{=}\! 2\Re \left\{ {\rm Tr} \left( {\bf{X}}_{{\bf W}_{t}}^H {\bf{Y}}_{{\bf W}_{t}}^{-1} {\bf{{\bar H}_{\bm{\theta}}W}}   \right) \right\}\!-\!{\rm Tr} \big( {\bf{Z}}_{{\bf W}_{t}}   {\bf{W }}{\bf{W}}^{H}\big) 
    , \tag{13}
  \end{align} 
\end{subequations}

\begin{figure*}[!t] %\vspace{-10mm}
  \begin{subequations}\label{MM_eq1}
  \begin{align}
    & {\bf{Z}}_{{\bf W}_{t}} =  {\bf{Z}}_{{\bf W}_{t}}^{\rm cas} +\epsilon_b {\bf{Z}}_{{\bf W}_{t}}^{\rm com} + {\bf{Z}}_{{\bf W}_{t}}^{\rm SI},  \tag{16a} \\
    & {\bf{Z}}_{{\bf W}_{t}}^{\rm cas} ={{{\bf{\bar H_{\bm{\theta}}}}}^H} {\bf{\hat W}}_{t}^H {\bf{\hat W}}_{t} {\bf{{\bar H_{\bm{\theta}}}}}+
     \beta_T^2 {\rm{ \widetilde{diag}}} ( {{{\bf{\bar H_{\bm{\theta}}}}}^H} {\bf{\hat W}}_{t}^H {\bf{\hat W}}_{t} {\bf{{\bar H_{\bm{\theta}}}}} )+
      \beta_R^2 {{{\bf{\bar H_{\bm{\theta}}}}}^H} {\rm{ \widetilde{diag}}} ({\bf{\hat W}}_{t}^H {\bf{\hat W}}_{t} ){\bf{{\bar H_{\bm{\theta}}}}}, \tag{16b} \\                
     & {\bf{Z}}_{{\bf W}_{t}}^{\rm SI} = \left( {1 + {\beta _T^2} + {\beta _R^2}} \right)( {{\sigma _d}^2 + \sum\nolimits_{i = 1}^M {{\sigma _m}^2} } ) {\rm Tr} ( {\bf{\hat W}}_{t}^H {\bf{\hat W}}_{t} ) {{\bf{I}}_{{N_T}}}, \tag{16c} \\  
     & {\bf{Z}}_{{\bf W}_{t}}^{\rm com}=  \sum\nolimits_{m = 1}^M (  {{{{\bf{\bar G}}}_m}^H {\bf{\hat W}}_{t}^H {\bf{\hat W}}_{t} {{{\bf{\bar G}}}_m}}  \!+\! 
     {\beta _T^2} {{{{\bf{\bar G}}}_m}^H{\rm{ \widetilde{diag}}}( {\bf{\hat W}}_{t}^H {\bf{\hat W}}_{t} ){{{\bf{\bar G}}}_m}} \!+\! 
     {\beta _R^2} {{\rm{ \widetilde{diag}}}( {{{{\bf{\bar G}}}_m}^H {\bf{\hat W}}_{t}^H {\bf{\hat W}}_{t} {{{\bf{\bar G}}}_m}} ) } ). \tag{16d}
    \end{align}
    \vspace{-3pt}
  \end{subequations}
      \vspace{-2mm}
        \hrulefill
  \end{figure*}

    \noindent where ${\bf W}_{t}$ denotes  the current iterate in the $t$-th  AO iteration. 
    The involved auxiliary  matrix ${\bf{Z}}_{{\bf W}_{t}} $ are defined in (16) at the top of the next page by setting ${\bf{\hat W}}_{t}= {\bf{X}}_{{\bf W}_{t}}^H {\bf{Y}}_{{\bf W}_{t}}^{-1}$.  
    The  equality $(b_1)$  is due to the first-order Taylor expansion,
     and  the equality $(b_2)$ holds based on  the definitions of  ${\bf{X}}_{{\bf W}}$ and ${\bf{Y}}_{{\bf W}}$ together with the matrix identity $\rm Tr ( {\bf{A}} {\rm{ \widetilde{diag}}}{(\bf{B})} )=Tr (  {\rm{ \widetilde{diag}}}{(\bf{A})}{\bf{B}} )$.
    Then, according to the MM principle, the subproblem w.r.t. the transmit precoder ${\bf{W}}$  can be  expressed as 
    \begin{subequations}
      \begin{alignat}{2}
        (\text{P3}): \ \ \max_{{\bf{W}}} \quad &{g_{\rm sub1}^{\rm Low}({\bf{W}};{\bf W}_{t})},~~
        \mbox{s.t.} \quad 
        &\eqref{P1C1}. \tag{14}
      \end{alignat}
    \end{subequations}

  \noindent We notice that problem (P3) is convex in $\bf{W}$ and thus the KKT  conditions can be applied to find the global optimal solution to  this subproblem. Specifically, the Lagrangian function of problem (P3) is derived as 
    %\vspace{-2mm}
    \begin{subequations}
      \setlength{\abovedisplayskip}{3pt}
      \setlength{\belowdisplayskip}{3pt}
    \begin{align}
     L( {{\bf{W}},\lambda } ) = &2\Re\left\{ {\rm Tr} \left( {\bf{X}}_{{\bf W}_{t}}^H {\bf{Y}}_{{\bf W}_{t}}^{-1} {\bf{{\bar H}_{\bm{\theta}}W}}  \right)\right\}\! \nonumber \\
     &-\!{\rm Tr} \left(\big( {\bf{Z}}_{{\bf{W}}_{t}}\!+\!\lambda {\bf{I}}_{N_T} \big) {\bf{W }}{\bf{W}}^H  \right)\!+\!  \frac{\lambda P}{1+{\beta _T^2}}, \tag{15}
    \end{align}
    \end{subequations}

    \noindent where $\lambda$ denotes the Lagrangian multiplier associated with the total power  constraint.
    By setting the first derivative of  the Lagrangian function $L\left( {{\bf{W}},\lambda } \right)$ w.r.t ${\bf{W}} $ to 0, the  optimal  $\bf{W}^{\star}$  can be derived in the following closed  form 
        \begin{equation}\label{update_W}
          {\bf{W}}^{\star}=\left( {\bf{Z}}_{{\bf{W}}_{t}}+\lambda {\bf{I}}_{N_T}\right)^{-1}{{{\bf{\bar H_{\bm{\theta}}}}}^H}  {\bf{Y}}_{{\bf W}_{t}}^{-1} {\bf{X}}_{{\bf W}_{t}},%\vspace{-1mm}
        \end{equation}
where the optimal $ \lambda$  should be  chosen to satisfy the complementary slackness  condition, i.e., $\lambda((1+{\beta _T^2}){\rm Tr}( {\bf{W}}{\bf{W}}^H )\!-\!P)\!=\!0$.  Note that if ${\rm Tr}( {\bf{W}}{\bf{W}}^H ) \leq \frac{P}{1+{\beta _T^2}} $, {\color{black}it follows that the optimal  $ \lambda^{\star }\!=\!0$}. Otherwise, we have
\begin{equation}\label{lambda}
  \sum\limits_{i = 1}^{N_T} \frac{[{{\bf{U}}_Z {{{\bf{\bar H}}}_{\bm{\theta}}^H} {\bf{Y}}_{{\bf W}_{t}}^{-1} {\bf{X}}_{{\bf W}_{t}}  {\bf{X}}_{{\bf W}_{t}}^H {\bf{Y}}_{{\bf W}_{t}}^{-1} {\bf{\bar H}}_{\bm{\theta}} {{\bf{U}}^H_Z} }]_{i,i}}
  {\left( [{\bm \Lambda_Z}]_{i,i} + \lambda  \right)^2} \!=\!\frac{P}{1+{\beta _T^2}},
\end{equation}
where the  unitary matrix ${{\bf{U}}_Z}$  and the diagonal matrix ${{\bf \Lambda}}_Z$ come from the eigenvalue decomposition (EVD) of  ${\bf{Z}}_{{\bf{W}}_t}$, i.e. $ {\bf{Z}}_{{\bf{W}}_t}={{\bf{U}}_Z} {{\bf \Lambda}_Z} {{\bf{U}}^H_Z}$.
It is easily inferred that the left-hand side of \eqref{lambda} is a monotonically non-increasing function w.r.t. $\lambda$. Hence,  the optimal $\lambda^{\star}$ can be uniquely determined by the one dimensional search method, e.g., bisection method \cite{Multicell_2020_Pan}. 
%\vspace{-12pt}

\subsection{Optimization of the Passive RIS }
%\vspace{-3pt} 
    In this subsection, from the perspective of low complexity,  
    we propose two different  algorithms for  the RIS reflection vector design   given the transmit precoder ${\bf W}$.  
    The first algorithm still adopts the MM technique to find the closed-form optimal ${\bm{\theta}}$,  similarly to the transmit precoder optimization. The second algorithm 
   modifies the traditional  RGA algorithm  so that it is suitable for the discrete RIS phase shift design.   Both the two algorithms are  guaranteed to converge to a finite objective value.
  
   \subsubsection{Two-tier MM-based Algorithm}
   Firstly, by rewriting  ${\bf{\bar H}}_{ \bm{\theta} }$ in \eqref{9b} as $
    {\bf{\bar H}}_{ \bm{\theta} }\!=\!{\bf{\ H}}_{\rm cat}({ \bm{\tilde \theta}} \otimes{\bf{I}}_{N_T})$, 
    where $\rm { \bm{\tilde \theta}}\!=\!\left[ 1 \  {{{\bm{\theta }}^H}} \right]^H$ and
    $ {\bf{\ H}}_{\rm cat} \in \mathbb{C}^{ N_R \times (1+M)N_T}$  is the concatenated channel defined as
    $ {\bf{\ H}}_{\rm cat}\!=\! 
    [
    \begin{matrix}
      {{{{\bf{\bar H}}}_d}} & {{{{2^b}} \over \pi }\sin ({{\pi  \over {{2^b}}}}){{{\bf{\bar G}}}_1}} &  \cdots  & {{{{2^b}} \over \pi }\sin ({{\pi  \over {{2^b}}}}){{{\bf{\bar G}}}_{\rm{M}}}}  \cr 
    \end{matrix} 
    ],$ 
  and substituting  it  into problem (P2), we  can reexpress $g_{\rm MSE} ({\bf{W}},{\bm{\theta}})$ in  an explicit  form w.r.t. $\tilde{\bm \theta}$  as  %\vspace{-2mm}
 \begin{align} \label{thetaf}
   g_{\rm sub2}
({\bm{\tilde \theta}})\!=& \!{\rm Tr} \big(    {\bf{W}}^H  ( { \bm{\tilde \theta}} \!\otimes\! {\bf{I}}_{N_T} )^H {\bf{\ H}}_{\rm cat}^H   
 \big(
  {\bf{\tilde T}}^{\rm cas}_{{\bm{\tilde \theta}}}   \nonumber \\
  & +
\epsilon_b {\bf{T}}^{\rm com}_{ {\bf{W}}} +{\bf{T}}^{\rm SI}_{{\bf{W}}} 
 \big)^{-1}
 \rm {\bf{\ H}}_{\rm cat} ( { \bm{\tilde \theta}} \!\otimes\! {\bf{I}}_{N_T} ) {\bf{W}}   \big),
  \end{align} 
  where 
${\bf{\tilde T}}^{\rm cas}_{{\bm{\tilde \theta}}} $ is obtained  by replacing $ {\bf{\bar H}}_{\bm{\theta}}$ involved in $ {{\bf{T}^{\rm cas}_{{\bf{W}},\bm{\theta}}}} $ with $ {\bf{\ H}}_{\rm cat}( { \bm{\tilde \theta}} \!\otimes\! {\bf{I}}_{N_T})$. 
It is readily inferred that ${g_{\rm sub2}}({\bm{\tilde \theta}})$ is non-convex w.r.t $ {\bm{\tilde \theta}}$. 
Analogous to the optimization of $\bf{W}$, we can still apply  the MM technique to find a tractable  surrogate function of ${g_{\rm sub2}}({\bm{\tilde \theta}})$. Specifically, let us define   ${\bf{M}}_{ \bm{\tilde \theta}}\!=\!{\bf{H}}_{\rm cat} ( { \bm{\tilde \theta}} \otimes {\bf{I}}_{N_T} ) {\bf{W}}$ and 
    ${\bf{N}}_{\bm{\tilde \theta}}\!= \!{\bf{\tilde T}}^{\rm cas}_{{\bm{\tilde \theta}}}\!+\!
    \epsilon_b {\bf{T}}^{\rm com}_{ {\bf{W}} } \!+\!{\bf{T}}^{\rm SI}_{{\bf{W}}} $. Then, 
    a surrogate function  ${g_{\rm sub2}^{\rm Low}({\bm{\tilde \theta}};{\bm{\tilde \theta}}_{t})}$ at the given point 
    $ {\bm{\tilde \theta}}_{t}$
  can be  constructed as 
  \begin{subequations} \label{thetaMMf}
  \begin{align}
    &{g_{\rm sub2}^{\rm Low} ({\bm{\tilde \theta}}; {\bm{\tilde \theta}}_{t})} \nonumber \\
    &\mathop  = \limits^{({c_1})} 2\Re\left\{ \rm tr ( {\bf{M}}_{\bm{\tilde \theta}_{t}} ^H {\bf{N}}_{\bm{\tilde \theta}_{t}}^{-1} {\bf{M}}_{\bm{\tilde \theta}} ) \right\}\!-\!{\rm tr ( {\bf{N}}_{ \bm{\tilde \theta}_{t}}^{-1} {\bf{M}}_{\bm{\tilde \theta}_{t}} {\bf{M}}_{\bm{\tilde \theta}_{t}}^{H} {\bf{N}}_{ \bm{\tilde \theta}_{t}}^{-1} {\bf{N}}_{ \bm{\tilde \theta}} )} \label{c1} \\
     &\mathop  = \limits^{({c_2})}   2\Re\left\{ \rm {\bm{\xi}^H} 
      \rm vec    ( { \bm{\tilde \theta}} \!\otimes\! {\bf{I}}_{N_T} )
      \right\} \!-\!
       \rm vec^H ( { \bm{\tilde \theta}} \!\otimes\! {\bf{I}}_{N_T})
{\bf{K}}
 \rm vec ( { \bm{\tilde \theta}} \!\otimes\! {\bf{I}}_{N_T}) \nonumber \\
 & \qquad \!+\! {\rm const} \label{c2} \\   
 & \mathop  = \limits^{({c_3})}
 2\Re \left\{ \rm  {\bm{\bar \xi }}^H  {\bm {\tilde \theta}}  \right\}
 \!-\!{\bm {\tilde \theta}}^H  
 {\bf{\bar K}}
 {\bm {\tilde \theta}}
 + {\rm const},  \label{c3}
\end{align}
\end{subequations}
 where ${\rm const}$ denotes the constant term  irrelevant to ${\bm {\tilde \theta}}$ and  the auxiliary matrices are defined in \eqref{auxiliary matrix} at the top of the next page.
  \begin{figure*}[t] %\vspace{-10mm}
    \begin{subequations} \label{auxiliary matrix}
      \setlength{\abovedisplayskip}{3pt}
      \setlength{\belowdisplayskip}{3pt}
      \begin{align}
        & {\bm{\xi }} ={\rm vec} (  {\bf{H}}_{\rm cat}^H {\bf{N}}_{ \bm{\tilde \theta}^{(t)}}^{-1}  {\bf{M}}_{ \bm{\tilde \theta}^{(t)}} {\bf{W}}^H ), \\   
         &{\bf{K}} = ({\bf{W}}{\bf{W}}^H+\beta_T^2 {\rm{ \widetilde{diag}}}({\bf{W}}{\bf{W}}^H))^T \otimes ({ {\bf{H}}_{\rm cat}^H {\bf{N}}_{ \bm{\tilde \theta}_{t}}^{-1} {\bf{M}}_{\bm{\tilde \theta}_{t}} {\bf{M}}_{\bm{\tilde \theta}_{t}}^{H} {\bf{N}}_{ \bm{\tilde \theta}_{t}}^{-1} {\bf{H}}_{\rm cat}}) \nonumber \\
         & \qquad  + ({\bf{W}}{\bf{W}}^H)^T \otimes \beta_T^2 {\bf{H}}_{\rm cat}^H {\rm{ \widetilde{diag}}} ({\bf{N}}_{ \bm{\tilde \theta}_{t}}^{-1} {\bf{M}}_{\bm{\tilde \theta}_{t}} {\bf{M}}_{\bm{\tilde \theta}_{t}}^{H} {\bf{N}}_{ \bm{\tilde \theta}_{t}}^{-1}) {\bf{H}}_{\rm cat},  \\
         &
         \bm{\bar \xi }=\sum\nolimits_{\alpha  = 1}^{N_T} {{\bm{ \xi }}_{i \in \mathcal{I}}(\alpha)},  
         \quad  
         {\bf{\bar K}}=\sum\nolimits_{\alpha = 1}^{N_T} \sum\nolimits_{\beta = 1}^{N_T}{\bf{ K}}_{i \in \mathcal{I}, j \in \mathcal{I}}(\alpha,\beta). 
        \end{align}
        %\vspace{-10mm}
      \end{subequations}    
      \vspace{-2mm}
      \hrulefill
      %\vspace{-2mm}
      \end{figure*}
    The equation \eqref{thetaMMf} describes the construction of an effective surrogate function of $g_{\rm sub2}({\bm {\tilde \theta}})$ and the process of transformation from the implicit function to the explicit one w.r.t ${\bm {\tilde \theta}}$.
    The equality $(c_1)$ holds similar to $ (b_1)$, and $ (c_2)$ is due to the matrix identity
     $ \rm Tr( {\bf{A}}{\bf{B}}{\bf{C}}{\bf{B}}^H)=vec^H({\bf{B}})({\bf{C}}^T\otimes {\bf{A}}) vec({\bf{B}})$.
     {\color{black} Due to the sparsity of the vector $\rm vec ( { \bm{\tilde \theta}} \otimes {\bf{I}}_{N_T} )$, we define the  index set  $\mathcal{I} $ which includes  the indexes of all non-zero elements in $\rm vec ( { \bm{\tilde \theta}} \otimes {\bf{I}}_{N_T} )$. Then, ${\bm{ \xi }}_{i \in \mathcal{I}}$  denotes the auxiliary vector composed of the elements of ${\bm{ \xi }}$ indexed by the index set $\mathcal{I}$. ${\bm{ \xi }}_{i \in \mathcal{I}}(\alpha)$ represents the $\alpha$-th sub-vector of ${\bm{ \xi }}_{i \in \mathcal{I}}$ and  ${\bf{ K}}_{i \in \mathcal{I}, j \in \mathcal{I}}(\alpha,\beta)$ is similarly defined.} 
     {\color{black}With} simple  manipulations, it is easily inferred that $(c_3)$ holds. 
     
     Recalling  $\rm { \bm{\tilde \theta}}=\left[ 1, \  {{{\bm{\theta }}^H}} \right]^H$, we next intend  to equivalently   transform 
    ${g_{\rm sub2}^{\rm Low}({\bm{\tilde \theta}};{\bm{\tilde \theta}}_{t})}$ into a function  w.r.t.  ${\bm{\theta}}$. 
    Specifically, we rewrite ${\bm{\bar \xi}}$ as  ${\bm{\bar \xi }}=[{\bar \xi_1} ; {\bm{\bar \xi}}_2]$ and  divide ${\bf{\bar K}}$ into a $2 \times 2$ block matrix, i.e., ${\bf{\bar K}}=[{\rm \bar k_{1}} \ {\bf{\bar k_{2}}}^H ; {\bf{\bar k_{2}}} \ {\bf{\bar K_3}}]$,  thus yielding
          \begin{equation} \label{f_4_theta}
              {g_{\rm sub2}^{\rm Low}({\bm{\theta}};{\bm{\theta}}_{t})}=
              2\Re\left\{ {\bm{\theta}}^H ({\bm{\bar \xi}}_2\!-\!{\bf{\bar k_{2}}})\right\}\!-\!{\bm{\theta}}^H {\bf{\bar K_3}} {\bm{\theta}} + \rm const2, 
          \end{equation} 
    where $\rm const2$ represents the constant term independent of ${\bm{\theta}}$. 
    It is clear that  ${g_{\rm sub2}^{\rm Low}({\bm{\theta}};{\bm{\theta}}_{t})} $ is a   quadratic
    function w.r.t $ {\bm{\theta}}$. Nevertheless,   the optimal closed-form  $\bm{\theta}$ for  minimizing  ${g_{\rm sub2}^{\rm Low}({\bm{\theta}};{\bm{\theta}}_{t})}$ under the discrete RIS phase constraints is still difficult to obtain. 
%%%%%%% 
    To tackle such difficulty, 
    we  employ the MM strategy  to  perform  the  majorization on    ${g_{\rm sub2}^{\rm Low}({\bm{\theta}};{\bm{\theta}}_{t})}$.
    Since the gradient of  ${g_{\rm sub2}^{\rm Low}({\bm{\theta}};{\bm{\theta}}_{t})} $ is easily found to be  Lipschitz continuous  with  constant { ${\bf{L}}_{g_{\rm sub2}^{\rm Low}}\!=\!\lambda_{\rm max}({\bf{\bar K_3}}) {\bf{I}}$ },   a locally tight lower  bound of ${g_{\rm sub2}^{\rm Low}({\bm{\theta}};{\bm{\theta}}_{t})} $ at the point ${\bm{\theta}}_{t}$  is given by \cite[Lemma~12]{MM}
        {
        \begin{subequations}
          \begin{align}\label{thetasubsub}
            {g_{\rm sub2}^{\rm Low}({\bm{\theta}}; {\bm{\theta}}_{t})} 
            &\geq {{\tilde g}_{\rm sub2}^{\rm Low}({\bm{\theta}};{\bm{\theta}}_{t})}  \!=\!2\Re \left\{ {{{\bm{\theta }}^H}} {\bf{b}}\right\}\!\nonumber \\
            &
            \quad -2\lambda_{\rm max}({\bf{\bar K_3}})M +{\bm{\theta}}_{{t}}^H  {\bf{\bar K_3}}   {\bm{\theta}}_{t} \!+\!{\rm const2}, \tag{23}
          \end{align}
        \end{subequations} }
         \noindent where ${\bf{b}} \!=\! \lambda_{\rm max}({\bf{\bar K_3}}) {{\bm{\theta }}_{t}} \!-\! {{{\bf{\bar K}}}_3}{{\bm{\theta }}_{t}} \!+{\bm{\bar \xi}}_2 \!-\! {{{\bf{\bar k}}}_2} $.
         After these  two majorization operations,  an alternative  lower bound optimization problem  w.r.t.  the  RIS reflection phase ${\bm{ \phi}}$ can be  formulated as 
         \begin{subequations}
         \begin{alignat}{2}\label{P}
           (\text{P4}):\  \max_{{\bm{ \phi}} } \quad \! &
           2   \sum\nolimits_{m = 1}^M { \left| {{{\bf b}_m}} \right|{cos{\left( {{{\bm \phi} _m} \!-\! \angle {{\bf b}_m}} \right)}} } , \nonumber \\ 
           \mbox{s.t.} \ \ \  
           & \eqref{P1C2}. \tag{24}
         \end{alignat}
        \end{subequations}
         It is readily inferred that  ${\phi}_m$ in problem (P4) can be decoupled from the objective function and constraints, which means that the variables can be updated  in parallel. 
         Hence, the optimal RIS beamforming vector to problem  (P4)   is given by 
          \begin{equation}\label{update_theta_MM}
              {\bm{\theta}}^{\star}=e^{j \ {\rm arg} \min_{\bm{\phi} \in \mathcal{F}} |\angle { \bf b}-{\bm{ \phi}} |  }.    
          \end{equation}

        Notice that by  iteratively solving the approximate problem (P4), the objective value of ${g_{\rm sub2}^{\rm Low}({\bm{\theta}})}$ keeps non-decreasing and converges {\color{black} to a finite value}. 

 \subsubsection{Modified  RGA Algorithm}
      Traditionally, the  RGA algorithm has been widely  applied   to  optimize  the continuous    RIS reflection coefficients over the complex circle manifold (CCM)  space. When extending it into the case of  RIS discrete phase shifts, an intuitive solution  is {\color{black}to quantize} the  obtained continuous {\color{black}solution} to find its nearest value. Unfortunately, such {\color{black} a direct quantization} usually leads to significant performance loss, especially {\color{black} for the case with} low-resolution RIS  phase shifts. To alleviate this issue,  discrete RIS  phase shifts are directly considered in each iteration of the modified RGA method,  as elaborated below. 

      \textit{(a) Riemannian gradient:}
      We first consider relaxing the discrete RIS phase shifts to the continuous ones. Since the search space of the continuous phase shifts is the product of $M$ complex circles, denoted as $  \mathcal{S}^{M}\!=\!\{{\bm{ \theta}} \!\in\! \mathbb{C} ^{M}\!: \!|\theta_m|\!=\!1,~ m=1,2,...,M \} $, the maximization of  ${g_{\rm sub2}^{\rm Low}({\bm{\theta}}; {\bm{\theta}}_{t})} $ can be solved by  the classical  CCM algorithm. 
      Specifically, the Riemannian gradient of $f({\bf x})$ at the point ${\bf x}_k$ is essentially  the projection of the Euclidean gradient onto the tangent space of the CCM at the current point  $ {\bf{x}}_k$, and is defined as: 
      $ P_{\tau_{{\bf{x}}_k \mathcal{S}^{M+1} }}({\bm{\eta}}_k)\!=\!{\bm{\eta}}_k\!-\!
      \Re\{ {\bm{\eta}}_k \odot {\bf{x}}_k^*   \}\odot {\bf{x}}_k $, 
      where $ P_{\tau_{{\bf{x}}_k \mathcal{S}^{M+1} }}({\bm{\eta}}_k) $ denotes the projection operator and  ${\bm{\eta}}_k$ represents the Euclidean gradient of $f({\bf x})$ \cite{CCM2019}.
      Based on this, the Riemannian gradient of ${g_{\rm sub2}^{\rm Low}({\bm{\theta}}; {\bm{\theta}}_{t})} $ at the  iteration point ${\bm{\theta}}_{n}$ can be expressed as
      $
        \nabla_{\mathcal{S}^{M+1} }{g_{\rm sub2}^{\rm Low}({\bm{\theta}}_{n})}\!=\!\nabla g_{\rm sub2}^{\rm Low}({\bm {\theta}}_{n})\!-\!
      \Re\left\{ \nabla g_{\rm sub2}^{\rm Low}({\bm {\theta}}_{n}) \odot {\bm {\theta}}_{n}^*   \right\}\odot {\bm {\theta}}_{n} , 
      $
      where $  \nabla {g_{\rm sub2}^{\rm Low}({\bm{\theta}}_{n})}\!=\!2\left({\bm{\bar \xi}}_2\!-\!{\bf{\bar k_{2}}}\!-\!{\bf{\bar K_3}} {\bm {\theta}}_{n}
       \right) $ and $n$ is the iteration index  of the modified RGA algorithm.

      \textit{(b) Update Rule:} 
      We then update ${\bm {\theta}}_{n}$ along the direction of Riemannian gradient with the step size $\rho_n$. Specifically, the update rule of ${\bm {\theta}}_{n}$ is given by ${\bm {\theta}}_{n+1}^{'}={\bm {\theta}}_{n}+\rho_n \nabla_{\mathcal{S}^{M+1} }{g_{\rm sub2}^{\rm Low}({\bm{\theta}}_{n})}$. In particular, ${\bm {\theta}}_{n}$ belongs to  the discrete feasible  set, whereas  ${\bm {\theta}}_{n+1}^{'}$ is generally a non-feasible point. 
      With regard to this fact, we consider  a retraction operator in this step to find the feasible RIS phase shifts. This retraction operator obtains the new point ${\bm {\theta}}_{n+1}$ by mapping the updated point ${\bm {\theta}}_{n+1}^{'}$ to the nearest discrete point in the feasible set, i.e.
      \begin{equation}\label{update_theta_CCM}
        {\bm { \theta}}_{n+1}=e^{j\ {\rm arg} \min_{{\bm \phi} \in \mathcal{F}} |\angle {\bm {\theta}}_{n+1}^{'}-{\bm \phi} | }.
      \end{equation}
      Moreover, in order to preserve the monotonicity of the modified RGA algorithm, we adopt the backtracking line search to determine the step size $\rho_n$.
      The detail of this modified RGA algorithm is presented in Algorithm~\ref{Algorithm1}.
      
        %\vspace{-4pt}
        \begin{algorithm}
          \caption{Modified  RGA Algorithm}\label{Algorithm1}
         \SetKwInOut{Input}{Input}
         \SetKwInOut{Output}{Output}
         \Input{{the optimal transmit precoder ${\bf{W}}_{t+1}$ at the $t+1$ iteration}.}
         \Output{the optimal phase shifts ${\bm{\theta}}_{t+1}$.}
         Initialization: $n=0$, $\beta \in (0,1)$, ${\bm{\theta}_{n}}={\bm{\theta}_{t}}$

          \Repeat{$|{g_{\rm sub2}^{\rm Low}({\bm{\theta}}_{n}; {\bm{\theta}}_{t})}-{g_{\rm sub2}^{\rm Low}({\bm{\theta}}_{n-1};{\bm{\theta}}_{t})}|<\epsilon$}
          {
            {Update ${\bm{\theta}_{n+1}}$ by \eqref{update_theta_CCM};} \\
            {\While{${g_{\rm sub2}^{\rm Low}({\bm{\theta}}_{n+1}; {\bm{\theta}}_{t})} < {g_{\rm sub2}^{\rm Low}({\bm{\theta}}_{n}; {\bm{\theta}}_{t})}$}{
              {$\rho_n=\beta \rho_n$;} \\
              {update ${\bm{\theta}_{n+1}}$ by \eqref{update_theta_CCM}}.
            }} 
            {update $n=n+1$;}
          }
          Return ${\bm{\theta}}_{t+1}={\bm{\theta}}_{n} $.
      \end{algorithm} 
      %\vspace{-4pt}

      The whole MM-based AO  algorithm   is summarized  in Algorithm~\ref{Algorithm2}. 
       For the initialization, we conduct the  joint design of the MIMO  transceiver  and  RIS  reflection matrix  for the ideal  RIS-aided MIMO system with  $ \sigma_d^2\!=\!\sigma_m^2\!=\!0$ and $\beta_T\!=\!\beta_R\!=\!0$, and choose the {\color{black}obtained} solution as the initial point. 
      
        \begin{algorithm}
        \normalsize
          \caption{Proposed  MM-based AO Algorithm }\label{Algorithm2}
         \SetKwInOut{Input}{Input}
         \SetKwInOut{Output}{Output}
         \Input{System parameters  $d,N_T,N_R,M$, the threshold $\epsilon$, etc.}
         \Output{The transmit beamforming ${\bf{W}}^{\star}$  and RIS phase shifts ${\bm{\theta}}^{\star}$.}
         Set the obtained solution of the ideal RIS-aided MIMO system as the initial point $ ({\bf{W}}_{0},{\bm{\theta}}_{0})$.

          \Repeat{$|g_{\rm MSE}({\bf{W}}_{t+1},{\bm{\theta}}_{t+1})\!-\! g_{\rm MSE}({\bf{W}}_{t},{\bm{\theta}}_{t})|<\epsilon$}{
            For fixed ${\bm{\theta}}_{t}$,  update ${\bf{W}}_{t+1}$ according to  \eqref{update_W}, \\
            For fixed ${\bf{W}}_{t+1}$, compute  ${\bm{\theta}}_{t+1}$ according to \eqref{update_theta_MM} or Algorithm~\ref{Algorithm1},
          }
          Return ${\bf{W}}^{\star}={\bf{W}}_{t+1} $ and ${\bm{\theta}}^{\star}={\bm{\theta}}_{t+1} $.        
      \end{algorithm}
      %\vspace{-8pt}

%\vspace{-12pt}
\section{Optimality and Performance Analysis}
%\vspace{-6pt}
        In this section,  in order to explore the optimality of the proposed MM-based AO  algorithm, we focus on  the joint MIMO transceiver and RIS reflection matrix design in the special scenario of {\color{black} the RIS-aided MIMO system with only LoS BS-RIS-user cascaded channels.
        } Moreover,  we  reveal  the  irreducible MSE floor effect  induced  by the hardware distortions  and CSI errors in  the RIS-aided MISO system  in the high SNR-regime. Also,  the convergence behavior and  the complexity  of the proposed MM-based AO algorithm are analyzed.

  \subsection{Optimality Analysis }
  %\vspace{-3pt}
  { In this subsection, we assume that the RIS-related channels only have LoS components to  obtain insight into the system behavior.} To be specific,  
        suppose that the uniform linear arrays (ULAs) are  deployed at both the BS and the  user, and a uniform planar array (UPA) is employed at the RIS. {\color{black}Then,}
        the BS-RIS channel ${\bf H}_{I}$ and the RIS-user channel ${\bf H}_{r}$ can be respectively expressed as the well-known Saleh-Valenzuela (SV) channel model  
          \begin{align}\label{SVchannel}
            &{\bf{H}}_{I}=\nu_I {\bf a}_{TX}({\psi}_{TX}) {\bf a}_{R}^H({\psi}_{A},{\vartheta}_{A}),  \nonumber \\ 
            &{\bf{H}}_{r}=\nu_r {\bf a}_{RX}({\psi}_{RX}) {\bf a}_{R}^H({\psi}_{D},{\vartheta}_{D}), %\tag{27}
          \end{align}
        where $\nu_I$/$\nu_r$ denotes the complex channel gain of the LoS path for the BS-RIS/RIS-user channel. ${\psi}_{TX}$  denotes the angle of departure (AoD) associated with the BS, while ${\psi}_{RX}$  denotes the angle of arrival (AoA) associated with the user. ${\psi}_{i}$ and ${\vartheta}_{i}$ with $i\in\{A,D\}$ are the azimuth and elevation angles of arrival/departure associated with the RIS, respectively.    
        ${\bf a}_{TX}({\psi}_{TX})$/${\bf a}_{RX}({\psi}_{RX})$ and ${\bf a}_{R}({\psi}_{i},{\vartheta}_{i})$ denote  the array response vectors at the BS/user and the RIS, respectively.
        For an $N_T$-antenna ULA and an $M$-element UPA, the array response vectors at the BS and RIS are respectively given by 

        \begin{small}
          \begin{align}
          & {\bf a}_{R}({\psi}_{i},{\vartheta}_{i})  
          \!\!=\!\!\frac{1}{\sqrt{M}}\big[ 1, \cdots,e^{j {\frac{2 \pi D}{\gamma}} [{{\rm sin}({\psi_{i}}) {\rm sin}({\vartheta_{i}})+{\rm sin}({\vartheta}_{i})}]  }, \cdots, \nonumber \\ 
           & \qquad \qquad \qquad \qquad \quad  e^{j {\frac{2 \pi D}{\gamma}} [{  (M_x-1) {\rm sin}({\psi_{i}}) {\rm sin}({\vartheta_{i}})+(M_y-1){\rm sin}({\vartheta}_{i})}]  }\big]^T, \nonumber\\
          &{\bf a}_{TX}({\psi}_{TX})\!\!=\!\!\frac{1}{\sqrt{N}}\left[1,e^{j {\frac{2 \pi D}{\gamma}} {{\rm sin}({\psi_{TX}})}},\cdots , e^{j {\frac{2 \pi D}{\gamma}} (N_t-1){{\rm sin}({\psi_{TX}})}} \right]^T,  %\tag{28}
          \end{align} 
        \end{small}

        \noindent where $D$ denotes the antenna spacing and  $\gamma$ is the signal wavelength. $M_x$ and $M_y$ represent the number of elements in the row and column of the UPA array, respectively. The array response vector ${\bf{a}}_{RX}$ at the user is similarly defined as ${\bf{a}}_{TX}$ at the BS. For ease of notation, we simply denote the steering vector  ${\bf a}_{R}({\psi}_{A},{\vartheta}_{A})$ and ${\bf a}_{R}({\psi}_{D},{\vartheta}_{D})$  as ${\bf a}_{RA}$ and ${\bf a}_{RD}$ in the following sections, respectively.
        Based on the above channel model, the estimated compound channel ${\bf{\bar G}}_m$ associated with the $m$-th RIS element can be written as ${\bf{\bar G}}_m= \nu_m {\bf a}_{RX} {\bf a}_{TX}^H$ with $\nu_m=\nu_r\nu_I^* [{\bf a}_{RD}]_m^* [{\bf a}_{RA}]_m$.  Based on \eqref{SVchannel}, the original optimization problem (P2) can be further reduced to 
        \begin{subequations}
        \begin{alignat}{2}\label{P5} 
          (\text{P5}):\  \max_{{\bf{w}},{\bm{\theta}}} \quad  
          & g_{\rm MSE}({\bf{w}},{\bm{\theta}}) , \nonumber \\
          \mbox{s.t.} \quad 
           &(1+\beta_T^2)||{\bf{w}}||^2\leq P,~ \eqref{P1C2}, \tag{29}
         \end{alignat} 
        \end{subequations}
         where $g_{\rm MSE}({\bf{w}},{\bm{\theta}})$ and the intermediate variables are defined in \eqref{g5} at the top of the next page.

         \begin{figure*}[t] %\vspace{-10mm}
          \begin{subequations} \label{g5}
            \begin{align}
              & g_{\rm MSE}({\bf{w}},{\bm{\theta}})  =\!  {\bf a}_{RX}^H \left( \frac{c_1\!+\!\sum\nolimits_{m = 1}^M  |\nu_m|^2}{c_1}  \left( {\bf R}_A \!+ \!\frac{{\bf{w}}^H {\rm{ \widetilde{diag}}} \left({\bf a}_{TX}{\bf a}_{TX}^H \right) {\bf{w}}} {{{\bf w}^H} {\bf a}_{TX} {\bf a}_{TX}^H{\bf w}} {\bf R}_B \right) \!+ \!\frac{c_2\!+\!c_3{{\bf{w}}^H{{\bf{w}}}}}{c_1{{\bf w}^H} {\bf a}_{TX} {\bf a}_{TX}^H{\bf w}} {{\bf{I}}_{{N_R}}} \right)^{-1} {\bf a}_{RX}, \tag{30a} \\
              & {\bf R}_A={\bf a}_{RX}{\bf a}_{RX}^H+{\beta _{R}^2}{\rm{ \widetilde{diag}}} \left( {\bf a}_{RX}{\bf a}_{RX}^H \right), 
                          \qquad \qquad \qquad \qquad \quad
                          {\bf R}_B={\beta _T^2} {\bf a}_{RX}{\bf a}_{RX}^H, \tag{30b}  \\  
          
              & c_1\!=\!(1\!-\!\epsilon_b) |\nu_r|^2|\nu_I|^{2}|{\bf a}_{RD}^H {\bm{\Theta}}  {\bf a}_{RA}|^2, \qquad 
                          
              c_2\!=\! (1 \! + \! {\beta_{R}^2})\sigma^2 \label{c_1}, \qquad
                          
              c_3\!=\!\left( {1 \!+\! {\beta _T^2} \!+\! {\beta _R^2}} \right) ( {{\sigma _d}^2 \!+\! \sum\limits_{i = 1}^M {{\sigma _m}^2} } ).
              \tag{30c}
          
            \end{align}
          \end{subequations}
              \hrulefill
              \vspace{-3mm}
          \end{figure*}

It is worth noting that problem (P5) is still hard to solve due to the tightly coupled variables and  non-convex constraints. Thanks to the  proposed MM-based AO algorithm, the globally optimal solution of problem (P5) admits a closed form.
Specifically, for the transmit precoder design with fixed RIS phase shifts, the function $g_{\rm MSE}({\bf{w}},{\bm{\theta}})$ can be equivalently  transformed into the following generalized Rayleigh quotient problem:
        \begin{alignat}{2}\label{P6}
          (\text{P6}):\  \max_{{\bf{w}}} \quad  & g_{\rm sub1}({\bf w})=\frac{{{\bf w}^H} {\bf a}_{TX} {\bf a}_{TX}^H{\bf w}} {{{\bf w}^H}{\bf R} {{\bf w}}}, \nonumber \\
           \mbox{s.t.} \quad 
          & (1+\beta_T^2)||{\bf{w}}||^2\leq P ,
         \end{alignat}
         where 
         $
         {\bf R}=(c_1 \! + \! \sum\nolimits_{m = 1}^M  |\nu_m|^2)(N_R \!+ \!{\beta _{R}^2}){\bf a}_{TX} {\bf a}_{TX}^H \! +\!(c_3\! +\! \frac{c_2}{P}\!+\!(c_1\!+\!\sum\nolimits_{m = 1}^M  |\nu_m|^2){\beta _{T}^2} N_R ){{\bf{I}}_{{N_R}}}.
         $  
         {\color{black}
         Since the generalized Rayleigh quotient problem has the tractable property, 
         the function $g_{\rm sub1}({\bf w})$ itself can be regarded as an  MM-based surrogate function.
         }
         It is readily inferred that the  optimal closed-form solution to the above generalized Rayleigh quotient can be  derived as 
         \begin{equation}
         {\bf w}^{\star}={\sqrt{\frac{P}{1+\beta_T^2} }} \frac{{{\bf a}_{TX}}}{||{{\bf a}_{TX}}||}.
         \end{equation}
         Substituting the optimal transmit precoder ${\bf w}^{\star}$ into  $g_{\rm MSE}({\bf{w}},{\bm{\theta}})$, the objective function w.r.t. the RIS reflection beamforming  can be  simplified as  $g_{\rm sub2}({\bm \theta})$ as defined in \eqref{g_sub2} at the top of the next page.
         \begin{figure*}[!t]
         \begin{align}\label{g_sub2}
          \begin{autobreak}
          g_{\rm sub2}({\bm \theta})= \frac{c_1 N_T N_R}{c_3+\frac{c_2}{P}+(c_1 \! + \! \sum\nolimits_{m = 1}^M  |\nu_m|^2){\beta _{T}^2} N_R +(c_1 \! + \! \sum\nolimits_{m = 1}^M  |\nu_m|^2)(N_R+{\beta _{R}^2})N_T}.
        \end{autobreak}
         \end{align} 
         \hrulefill
         \vspace{-3mm}
        \end{figure*}
         Note that the objective function $g_{\rm sub2}({\bm \theta})$ is monotonically increasing w.r.t. $c_1$  which is a function of ${\bm{\Theta}}$ as defined in \eqref{c_1}. Hence, the optimization of the discrete RIS phase shifts can be  formulated as  
         \begin{subequations} 
         \begin{alignat}{2}\label{Px}
          (\text{P7}):\ \max_{{\bm{\theta}}} \quad  & |{\bf a}_{RD}^H {\bm{\Theta}}  {\bf a}_{RA}| ,~~~
            \mbox{s.t.} \quad 
           & \eqref{P1C2}.    \tag{34}
          \end{alignat}
        \end{subequations}
           Similarly, the objective function of problem (P7) itself can also be regarded as an MM-based surrogate function. 
          Then, the optimal phase shifts ${\bm{\theta}}^{\star}$  is given by $ {\bm{\theta}}^{\star}=e^{j ~{\rm arg} \min_{{\bm \phi} \in \mathcal{F}} |\angle{{\bf a}_{RD}}-\angle{{\bf a}_{RA}}-{\bm \phi}|}$.
          Based on the derived global optimal solutions of ${\bf W}$ and ${\bm \theta}$, the optimality of the  proposed  MM-based AO algorithm can  be verified in the special scenario of the RIS-aided MIMO system with only LoS BS-RIS-user cascaded channels.
          
          %  
          %\vspace{-12pt}
          \subsection{MSE Floor Analysis} 
          %\vspace{-3pt}
          {\color{black} In this subsection, we mainly  consider the RIS-aided MISO communication system, where the optimization problem is more tractable and favorable to analyze the MSE performance.}
          In this context, the estimated direct channel ${\bf{\bar H}}_d$ and compound channels  ${\bf{\bar G}}_m$ are reduced to ${\bf{\bar h}}_d$ and  ${\bf{\bar g}}_m$, respectively. 
          Then, the original  MSE minimization problem  (P1) reduces to 
          \begin{subequations} 
          \begin{alignat}{2}\label{MISO_opti}
            (\text{P8}): \min_{{\bf{w}},{\bm{\theta}}} \quad  & f_{\rm MSE}({{\bf{w}},{\bm{\theta}}})= \nonumber \\ 
            & 1\!-\!  
             \frac{{\bf w}^H  {\bf {\bar h}}{\bf {\bar h}}^H {\bf w}}   { c_2\!\!+\!\!c_3{\bf w}^H {\bf w}\!\!+\!\! {\bf {\bar h}}^H {\bf {R}}_{\rm w} {\bf {\bar h}} \!\!+\!\!\epsilon_b \sum\nolimits_{m \!=\! 1}^{M} {{{\bf{\bar g}}}_m}^H {\bf {R}}_{\rm w} {{{\bf{\bar g}}}_m}},\nonumber \\
             \mbox{s.t.} \quad 
            & (1+\beta_T^2)||{\bf{w}}||^2\leq P,~ \eqref{P1C2}, \tag{35}
           \end{alignat}
          \end{subequations} 
          where ${\bf {\bar h}}$ denotes the available cascaded BS-RIS-user channel in the MISO case,  and ${\bf {R}}_{\rm w}$ represents the signal covariance matrix, defined as 
              $ {\bf {\bar h}}={\bf {\bar h}}_d+{{{2^b}} \over \pi }\sin \left( {{\pi  \over {{2^b}}}} \right) \sum\nolimits_{m = 1}^M {{\theta}_m{{{\bf{\bar g}}}_m}}$ and
              ${\bf {R}}_{\rm w} =(1+\beta_R^2) {\bf w}{\bf w}^H+ \beta_T^2 {\rm{ \widetilde{diag}}} {({\bf w}{\bf w}^H)}$, respectively.      
          
          Note that the term of ${\bf {\bar h}}^H {\bf {R}}_{\rm w} {\bf {\bar h}}$ in $f_{\rm MSE}({{\bf{w}},{\bm{\theta}}})$ causes the main intractability for solving the optimization problem due to the tightly coupled variables of ${\bf w}$ and ${\bm \theta}$. 
          Since $\beta_T\geq 0$ always holds, $f_{\rm MSE}({{\bf{w}},{\bm{\theta}}})$ can be lower bounded as
          \begin{align}\label{f_lowerbou}
            &f_{\rm MSE}({{\bf{w}},{\bm{\theta}}}) \!\geq\! f_{\rm lower}({{\bf{w}},{\bm{\theta}}}) \nonumber \\ 
            &\!\!=\!\!1\!-\!\frac{{\bf w}^H  {\bf {\bar h}}{\bf {\bar h}}^H {\bf w}}   { c_2\!\!+\!\!c_3{\bf w}^H {\bf w}\!\!+\!\! (1\!\!+\!\!\beta_R^2){\bf {\bar h}}^H {\bf w}{\bf w}^H {\bf {\bar h}} \!\!+\!\!\epsilon_b \sum\nolimits_{m = 1}^{M} {{{\bf{\bar g}}}_m}^H {{\bf R}_{\rm w}} {{{\bf{\bar g}}}_m}},
          \end{align}
          where the equality holds when $\beta_T\!=\!0$.
          Then, the lower bound of the total average MSE for the RIS-aided MISO system is revealed in the following proposition.
          %\vspace{-3pt}

          \begin{proposition}\label{proposition1}
            Considering the hardware impairments and CSI errors, the average MSE, i.e. $f_{\rm MSE}({{\bf{w}},{\bm{\theta}}})$, of the RIS-aided MISO system can be lower bounded by 
            \begin{align}
              &f_{\rm MSE}({{\bf{w}},{\bm{\theta}}}) > f_{\rm lower}^{\rm opt} \! \nonumber \\ 
              & =\!1\!- 
              \!\frac{(M\!+\!1)\lambda_{\rm max}({\bf {\tilde H}}_{\rm cat}^H ( {\bf{Q}}\!+\! (1\!+\!\beta_T^2) \frac{\sigma^2}{P} {\bf{I}}_{N_T} )^{-1} {\bf {\tilde H}}_{\rm cat})} {1\!\!+\!\!(1\!\!+\!\!\beta_R^2)(M\!\!+\!\!1)\lambda_{\rm max}({\bf {\tilde H}}_{\rm cat}^H ( {\bf{Q}}\!\!+\!\! (1\!\!+\!\!\beta_T^2) \frac{\sigma^2}{P} {\bf{I}}_{N_T} )^{-1} {\bf {\tilde H}}_{\rm cat})},
            \end{align}
           \noindent where 
            $ {\bf Q}=( {1 \!\!+\! \!{\beta _T^2} \!\!+\!\! {\beta _R^2}} ) ( {\sigma _d}^2 \!\!+\!\! \sum\nolimits_{i = 1}^M {{\sigma _m}^2} ) {\bf I}_{N_T} \!\!+\! \epsilon_b \sum\nolimits_{m = 1}^{M} \big( \big(1 $ $ + \beta_R^2 \big) {{{\bf{\bar g}}}_m} {{{\bf{\bar g}}}_m}^H \!+\!\beta_T^2 {\rm{\widetilde{diag}}} \big({{{\bf{\bar g}}}_m} {{{\bf{\bar g}}}_m}^H\big) \big),
              $ and ${\bf{\tilde H}}_{\rm cat}\!=\! \big[
                {{{{\bf{\bar h}}}_d}} \ {{{{2^b}} \over \pi }\!\!\sin( {{\pi  \over {{2^b}}}} ) {{{\bf{\bar g}}}_1}} $ $ \  \cdots  \ {{{{2^b}} \over \pi }\sin ( {{\pi  \over {{2^b}}}} ){{{\bf{\bar g}}}_{\rm{M}}}} 
                 \big] $.

            In the high-SNR regime, the lower bound is asymptotically equal to
            \begin{align}
              & \lim_{{\frac{P}{\sigma_2}} \to \infty}  f_{\rm lower}^{\rm opt}\!=\! f_{\rm floor}\! \nonumber \\
              &=\!1\!-\!\frac{(M+1)\lambda_{\rm max}({\bf {\tilde H}}_{\rm cat}^H  {\bf{Q}} ^{-1} {\bf {\tilde H}}_{\rm cat})}{1+(1+\beta_R^2)(M+1)\lambda_{\rm max}({\bf {\tilde H}}_{\rm cat}^H  {\bf{Q}}^{-1} {\bf {\tilde H}}_{\rm cat})}.
            \end{align}        
          \end{proposition}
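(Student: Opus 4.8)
The plan is to derive both bounds starting from the relaxation $f_{\rm MSE}({\bf w},{\bm\theta})\ge f_{\rm lower}({\bf w},{\bm\theta})$ already established in \eqref{f_lowerbou}, by (i) rewriting the denominator of $f_{\rm lower}$ as a single quadratic form in ${\bf w}$, (ii) absorbing the noise constant into that quadratic form by means of the power budget, (iii) maximizing the resulting rank-one generalized Rayleigh quotient over ${\bf w}$ in closed form, and (iv) bounding the leftover dependence on the unit-modulus vector ${\bm\theta}$. For step (i), I would use the identity ${\rm Tr}({\bf A}\,{\rm\widetilde{diag}}({\bf B}))={\rm Tr}({\rm\widetilde{diag}}({\bf A})\,{\bf B})$: with ${\bf R}_{\rm w}=(1+\beta_R^2){\bf w}{\bf w}^H+\beta_T^2{\rm\widetilde{diag}}({\bf w}{\bf w}^H)$ each term $\bar{\bf g}_m^H{\bf R}_{\rm w}\bar{\bf g}_m$ equals ${\bf w}^H\big((1+\beta_R^2)\bar{\bf g}_m\bar{\bf g}_m^H+\beta_T^2{\rm\widetilde{diag}}(\bar{\bf g}_m\bar{\bf g}_m^H)\big){\bf w}$, and together with $c_3{\bf w}^H{\bf w}={\bf w}^H c_3{\bf I}_{N_T}{\bf w}$ the sum $c_3\|{\bf w}\|^2+\epsilon_b\sum_m\bar{\bf g}_m^H{\bf R}_{\rm w}\bar{\bf g}_m$ collapses exactly to ${\bf w}^H{\bf Q}{\bf w}$ with ${\bf Q}$ as in the statement. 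For step (ii), since $c_2=(1+\beta_R^2)\sigma^2\ge\sigma^2$ and the constraint gives $\|{\bf w}\|^2\le P/(1+\beta_T^2)$, one has $c_2\ge\frac{(1+\beta_T^2)\sigma^2}{P}\|{\bf w}\|^2={\bf w}^H\big(\frac{(1+\beta_T^2)\sigma^2}{P}{\bf I}_{N_T}\big){\bf w}$. Writing $\bar{\bf h}={\bf{\tilde H}}_{\rm cat}{\bm{\tilde\theta}}$ with ${\bm{\tilde\theta}}=[1;{\bm\theta}]$ and ${\bf Q}_0:={\bf Q}+\frac{(1+\beta_T^2)\sigma^2}{P}{\bf I}_{N_T}\succ0$, enlarging the denominator only decreases the subtracted ratio, so every feasible point satisfies $f_{\rm MSE}({\bf w},{\bm\theta})\ge 1-\frac{{\bf w}^H\bar{\bf h}\bar{\bf h}^H{\bf w}}{{\bf w}^H({\bf Q}_0+(1+\beta_R^2)\bar{\bf h}\bar{\bf h}^H){\bf w}}$.

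Next I would maximize the right-hand side over $({\bf w},{\bm\theta})$ to obtain a bound uniform over all feasible points. For fixed ${\bm\theta}$ the ratio is a scale-invariant generalized Rayleigh quotient with rank-one numerator, so its supremum over ${\bf w}\neq{\bf 0}$ equals $\bar{\bf h}^H\big({\bf Q}_0+(1+\beta_R^2)\bar{\bf h}\bar{\bf h}^H\big)^{-1}\bar{\bf h}$, which by the Sherman–Morrison formula simplifies to $\frac{\bar{\bf h}^H{\bf Q}_0^{-1}\bar{\bf h}}{1+(1+\beta_R^2)\bar{\bf h}^H{\bf Q}_0^{-1}\bar{\bf h}}$. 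Because $t\mapsto t/(1+(1+\beta_R^2)t)$ is increasing on $t\ge0$, it remains only to upper-bound $\bar{\bf h}^H{\bf Q}_0^{-1}\bar{\bf h}={\bm{\tilde\theta}}^H{\bf A}{\bm{\tilde\theta}}$ with ${\bf A}={\bf{\tilde H}}_{\rm cat}^H{\bf Q}_0^{-1}{\bf{\tilde H}}_{\rm cat}\succeq0$. Since ${\bm{\tilde\theta}}$ has unit-modulus entries, $\|{\bm{\tilde\theta}}\|^2=M+1$, and the Rayleigh bound gives ${\bm{\tilde\theta}}^H{\bf A}{\bm{\tilde\theta}}\le(M+1)\lambda_{\rm max}({\bf A})$. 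Chaining these inequalities yields $f_{\rm MSE}({\bf w},{\bm\theta})>f_{\rm lower}^{\rm opt}$ for every feasible $({\bf w},{\bm\theta})$; the strict inequality is inherited from the fact that a unit-modulus ${\bm{\tilde\theta}}$ (with first entry fixed to $1$) cannot, in general, be aligned with the principal eigenvector of ${\bf A}$, so the last bound is not attained.

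Finally, for the high-SNR statement, note ${\bf Q}_0={\bf Q}+\frac{(1+\beta_T^2)\sigma^2}{P}{\bf I}_{N_T}\to{\bf Q}$ as $P/\sigma^2\to\infty$; provided ${\bf Q}\succ0$ — which holds whenever CSI errors are present, since then $\sigma_d^2+\sum_m\sigma_m^2>0$ makes the leading term of ${\bf Q}$ positive definite — we have ${\bf Q}_0^{-1}\to{\bf Q}^{-1}$, and since $\lambda_{\rm max}(\cdot)$ is continuous and $t\mapsto t/(1+(1+\beta_R^2)t)$ is continuous, $f_{\rm lower}^{\rm opt}\to f_{\rm floor}$ as claimed. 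The main obstacle I anticipate is bookkeeping the directions of the relaxations: dropping the $\beta_T^2$ term in \eqref{f_lowerbou}, absorbing $c_2$ via the power constraint, and applying the Rayleigh/eigenvalue bounds must each be oriented so that the denominator only grows and the subtracted ratio only shrinks, so that the whole chain consistently lower-bounds $f_{\rm MSE}$; the rank-one Rayleigh-quotient maximization together with the Sherman–Morrison simplification constitutes the computational core, and verifying ${\bf Q}\succ0$ (hence invertibility throughout) is the one place where the imperfect-CSI/finite-resolution assumption is genuinely used.
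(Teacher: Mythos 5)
Your proposal is correct and follows essentially the same route as the paper's Appendix A: drop the $\beta_T^2$ coupling term to obtain $f_{\rm lower}$, absorb the noise constant $c_2$ into a quadratic form in ${\bf w}$ via the power budget, maximize the resulting rank-one generalized Rayleigh quotient over ${\bf w}$ (the paper arrives at your Sherman--Morrison form implicitly in its problem (P10)), relax the unit-modulus constraint to $\|{\bm{\tilde\theta}}\|_2^2 \le M+1$ and apply the largest-eigenvalue bound, then let $P/\sigma^2 \to \infty$. Your explicit collapse of $c_3\|{\bf w}\|^2 + \epsilon_b\sum_{m}\bar{\bf g}_m^H{\bf R}_{\rm w}\bar{\bf g}_m$ into ${\bf w}^H{\bf Q}{\bf w}$, and your remarks on the strictness of the inequality and on ${\bf Q}\succ 0$ being needed for invertibility, are details the paper leaves implicit, but the argument is the same.
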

          \begin{proof}
            See Appendix~\ref{appendix_A} for detailed proof.
             %F.            
          \end{proof}

            Proposition~\ref{proposition1} implies that in the high-SNR regime, there exists  an irreducible MSE floor, i.e., $f_{\rm floor}$,  in the RIS-assisted MISO  system  owing  to the existence of   hardware distortions  and  CSI estimation errors. 
            Specially, for the ideal case without hardware distortions and CSI errors, the term $\lambda_{\rm max}({\bf {\tilde H}}_{\rm cat}^H ( {\bf{Q}}\!+\! (1\!+\!\beta_T^2) \frac{\sigma^2}{P} {\bf{I}}_{N_T} )^{-1} {\bf {\tilde H}}_{\rm cat})$ in Proposition~\ref{proposition1} reduces to $\lambda_{\rm max}(\frac{P}{\sigma^2}{\bf {\tilde H}}_{\rm cat}^H  {\bf {\tilde H}}_{\rm cat})$. Accordingly, $\lambda_{\rm max}$ asymptotically approaches infinity in the high-SNR regime, namely, $\lim_{{\frac{P}{\sigma_2}} \to \infty} \lambda_{\rm max} \to \infty$. Then, we can conclude that the lower bound $ f_{\rm lower}^{\rm opt}$ in Proposition~\ref{proposition1} asymptotically approaches zero for sufficient large SNR.
            To better illustrate the respective impacts of hardware impairments, RIS phase noise, and imperfect CSI on the MSE performance, we have the following corollaries.
            \begin{remark} \label{Corollary1}
              Considering the ideal RIS implementation and perfect CSI knowledge, the average MSE of the RIS-aided MISO system with transceiver hardware impairments in the high-SNR regime is asymptotically equal to 
              \begin{equation}
                \lim_{{\frac{P}{\sigma_2}} \to \infty}  f_{\rm MSE}^{\rm opt}({{\bf{w}},{\bm{\theta}}})=1-\frac{N_T}{\beta_T^2+(1+\beta_R^2)N_T}.
              \end{equation}
            \end{remark}
            
            \begin{remark}\label{Corollary2}
              Assuming that the perfect hardware is implemented both in transceiver and RIS, for any given RIS beamforming vector ${\bm{\theta}}_t$, the average MSE of the RIS-aided MISO system with channel estimation errors in the high-SNR regime is asymptotically equal to 
              \begin{equation}
                \lim_{{\frac{P}{\sigma_2}} \to \infty}  f_{\rm MSE}^{\rm opt}({{\bf{w}},{\bm{\theta}}_{t}})=1 \!-\! \frac{||{\bf{\bar h}}_d+{\bf{\bar G}}{{\bm{\theta}}_{t}} ||^2_2}{\sigma_d^2\!+\! \sum\nolimits_{i = 1}^M {{\sigma _m}^2} \!+\! ||{\bf{\bar h}}_d \!+\! {\bf{\bar G}}{{\bm{\theta}}_{t}} ||^2_2 },
              \end{equation}
              where ${\bf{{\bar G}}}= \big[
                \begin{matrix}
                 {{{{\bf{\bar g}}}_1}} & {{{{\bf{\bar g}}}_2}} & \cdots  & {{{{\bf{\bar g}}}_{\rm{M}}}}  \cr 
                \end{matrix} \big] $. 
            \end{remark}
            Corollary \ref{Corollary1} and Corollary \ref{Corollary2} show that the MSE floor is a monotonically increasing function concerning transceiver hardware impairments and channel estimation errors, respectively. On the other hand,
            it also indicates that the receiver hardware distortion level affects the MSE performance more than that of the transmitter. 
            Similarly, the impact of RIS phase noise on the average MSE is investigated in Corollary \ref{Corollary3}.
            \begin{remark}\label{Corollary3}
              Assuming the ideal transceiver hardware and perfect CSI knowledge, for any given RIS beamforming vector ${\bm{\theta}}_t$, the average MSE of the RIS-aided MISO system with RIS phase noise in the high-SNR region is asymptotically equal to
              \begin{align}
                & \lim_{{\frac{P}{\sigma_2}} \to \infty}  f_{\rm MSE}^{\rm opt}({{\bf{w}},{\bm{\theta}}_{t}}) = \nonumber \\
                & \frac{(1\!-\!\omega_b^2) }{(1\!\!-\!\!\omega_b^2)\!\!+\!\!\omega_b^2 {\bm{\theta}}_{t}^H {\bf{\bar G}}^{\dagger} {\bf{\bar G}} {\bm{\theta}}_{t}  \!\!+\!\! 2\omega_b \Re \{   {\bm{\theta}}_{t}^H {\bf{\bar G}}^{\dagger} {\bf{\bar h}}_d \} \!\!+\!\!  {\bf{\bar h}}_d^H ({\bf{\bar G}}{\bf{\bar G}}^H)^{-1} {\bf{\bar h}}_d },
              \end{align}
              where $\omega_b=\frac{2^b}{\pi}{\rm sin}(\frac{\pi}{2^b})$ and ${\bf{\bar G}}^{\dagger}={\bf{\bar G}}^H({\bf{\bar G}}{\bf{\bar G}}^H)^{-1}$.
                %${\bm{\theta}}_{\rm opt}$ is the optimal phase shifts maximizing the quadratic function w.r.t ${\bm{\theta}}$, i.e., 
            \end{remark} 
            From Corollary \ref{Corollary3}, we can conclude that the average MSE decreases monotonically with the number of quantization bits increasing. Especially, the result of MSE floor effect can be extended to the general case of RIS-aided MIMO system.

          \vspace{-12pt}
          \subsection{Convergence and Complexity Analysis:}
          \vspace{-3pt}
          In this subsection, we aim to analyze  the convergence  property and computational  complexity  of the proposed two-tier MM-based AO (AO-MM) algorithm and modified RGA-based AO (AO-RGA) algorithm, as illustrated in Proposition~\ref{proposition2}. 
          \begin{proposition}\label{proposition2}
           Suppose the solution sequence generated by the proposed algorithm (AO-MM or AO-RGA) is $\{{\bf W}_{t}, {\bm{\theta}}_{t}\}_{t=0,1,\cdots,\infty}$. Then, the objective value $g_{\rm MSE}({\bf{W}},{\bm{\theta}})$ keeps monotonically non-decreasing and finally converges to a {\color{black}finite} value. 
          \end{proposition}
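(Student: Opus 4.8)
The plan is to verify that each subproblem solved in one AO round produces an iterate that does not decrease the objective $g_{\rm MSE}$, and that $g_{\rm MSE}$ is uniformly bounded above; the claim then follows from the monotone convergence theorem. Throughout, note that the equalizer is always the Wiener filter ${\bf C}^\star$ (it is eliminated in (P2)), so it suffices to track $g_{\rm MSE}({\bf W},{\bm\theta})$, and substituting ${\bf C}^\star$ into \eqref{MSE} yields $f_{\rm MSE}({\bf C}^\star,{\bf W},{\bm\theta})={\rm Tr}({\bf I}_d)-g_{\rm MSE}({\bf W},{\bm\theta})$.

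First I would treat the precoder update with ${\bm\theta}_t$ fixed, for which $g_{\rm MSE}(\cdot,{\bm\theta}_t)\equiv g_{\rm sub1}(\cdot)$. Since ${\rm Tr}({\bf X}{\bf Y}^{-1}{\bf X}^H)$ is jointly convex in $({\bf X},{\bf Y}\succ{\bf 0})$, the first-order Taylor expansion used to build (13) is a global minorizer of $g_{\rm sub1}$ that is tight at ${\bf W}_t$, and this remains true after re-expressing it through the (quadratic) map ${\bf W}\mapsto({\bf X}_{\bf W},{\bf Y}_{\bf W})$. Because $g_{\rm sub1}^{\rm Low}(\cdot;{\bf W}_t)$ is concave (a linear term minus a PSD quadratic form, as ${\bf Z}_{{\bf W}_t}\succeq{\bf 0}$), the KKT point ${\bf W}_{t+1}$ of \eqref{update_W} is the global maximizer of the convex problem (P3), hence $g_{\rm sub1}^{\rm Low}({\bf W}_{t+1};{\bf W}_t)\ge g_{\rm sub1}^{\rm Low}({\bf W}_t;{\bf W}_t)$, and the standard MM chain gives
\[
g_{\rm sub1}({\bf W}_{t+1})\ \ge\ g_{\rm sub1}^{\rm Low}({\bf W}_{t+1};{\bf W}_t)\ \ge\ g_{\rm sub1}^{\rm Low}({\bf W}_t;{\bf W}_t)\ =\ g_{\rm sub1}({\bf W}_t).
\]

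Next I would treat the RIS update with ${\bf W}_{t+1}$ fixed; here the only extra care is the nested (two-tier) majorization. As above, $g_{\rm sub2}^{\rm Low}({\bm\theta};{\bm\theta}_t)$ in \eqref{thetaMMf} is a global minorizer of $g_{\rm sub2}$ that is tight at ${\bm\theta}_t$. For AO-MM, the descent-lemma inequality \cite[Lemma~12]{MM} makes $\tilde g_{\rm sub2}^{\rm Low}({\bm\theta};{\bm\theta}_t)$ in \eqref{thetasubsub} a global minorizer of $g_{\rm sub2}^{\rm Low}(\cdot;{\bm\theta}_t)$ coinciding with it at ${\bm\theta}_t$, and the per-element closed form \eqref{update_theta_MM} is the exact maximizer of $\tilde g_{\rm sub2}^{\rm Low}(\cdot;{\bm\theta}_t)$ over the discrete feasible set, so $\tilde g_{\rm sub2}^{\rm Low}({\bm\theta}_{t+1};{\bm\theta}_t)\ge\tilde g_{\rm sub2}^{\rm Low}({\bm\theta}_t;{\bm\theta}_t)$; chaining,
\[
g_{\rm sub2}({\bm\theta}_{t+1})\ \ge\ g_{\rm sub2}^{\rm Low}({\bm\theta}_{t+1};{\bm\theta}_t)\ \ge\ \tilde g_{\rm sub2}^{\rm Low}({\bm\theta}_{t+1};{\bm\theta}_t)\ \ge\ \tilde g_{\rm sub2}^{\rm Low}({\bm\theta}_t;{\bm\theta}_t)\ =\ g_{\rm sub2}({\bm\theta}_t),
\]
where the last equality uses $\tilde g_{\rm sub2}^{\rm Low}({\bm\theta}_t;{\bm\theta}_t)=g_{\rm sub2}^{\rm Low}({\bm\theta}_t;{\bm\theta}_t)=g_{\rm sub2}({\bm\theta}_t)$. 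For AO-RGA I would instead argue that the backtracking line search terminates: as $\rho_n\downarrow0$ the point ${\bm\theta}_{n+1}'$ approaches ${\bm\theta}_n$, which already lies on the quantization grid, so the retraction \eqref{update_theta_CCM} returns ${\bm\theta}_n$ and the strict-decrease test in the \emph{while} loop of Algorithm~\ref{Algorithm1} fails; hence every accepted inner step obeys $g_{\rm sub2}^{\rm Low}({\bm\theta}_{n+1};{\bm\theta}_t)\ge g_{\rm sub2}^{\rm Low}({\bm\theta}_n;{\bm\theta}_t)$, and since $g_{\rm sub2}^{\rm Low}$ takes finitely many values on the discrete set and the inner loop starts from ${\bm\theta}_n={\bm\theta}_t$, the returned ${\bm\theta}_{t+1}$ satisfies $g_{\rm sub2}^{\rm Low}({\bm\theta}_{t+1};{\bm\theta}_t)\ge g_{\rm sub2}^{\rm Low}({\bm\theta}_t;{\bm\theta}_t)$; combined with the minorization and tightness of $g_{\rm sub2}^{\rm Low}$ this again yields $g_{\rm sub2}({\bm\theta}_{t+1})\ge g_{\rm sub2}({\bm\theta}_t)$.

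Composing the two updates gives $g_{\rm MSE}({\bf W}_{t+1},{\bm\theta}_{t+1})\ge g_{\rm MSE}({\bf W}_{t+1},{\bm\theta}_t)\ge g_{\rm MSE}({\bf W}_t,{\bm\theta}_t)$ for every $t$, so the objective sequence is monotonically non-decreasing. Since $f_{\rm MSE}({\bf C}^\star,{\bf W},{\bm\theta})$ is an average mean-square error it is non-negative, hence $g_{\rm MSE}({\bf W},{\bm\theta})={\rm Tr}({\bf I}_d)-f_{\rm MSE}({\bf C}^\star,{\bf W},{\bm\theta})\le d$ uniformly; a monotone non-decreasing real sequence bounded above converges, which proves the proposition. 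I expect the bookkeeping of the ${\bm\theta}$-update to be the main obstacle: one must confirm that \eqref{thetasubsub} is a genuine \emph{lower} bound of $g_{\rm sub2}^{\rm Low}$ that is tight at the expansion point (respectively, that the RGA line search cannot fail to terminate) and that every inequality in the chain composes in the correct direction; the remaining steps are routine.
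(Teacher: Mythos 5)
Your proof is correct and follows essentially the same route as the paper's Appendix~B: establish the minorization/tightness properties of each surrogate, chain the resulting sandwich inequalities through the precoder update, the two-tier MM update (or the RGA update), and conclude by monotone convergence. You in fact supply two details the paper leaves implicit --- the termination of the backtracking line search in Algorithm~1 and the explicit upper bound $g_{\rm MSE}\le d$ from $f_{\rm MSE}\ge 0$ --- but the argument is otherwise the same.
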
  
          \begin{proof}
           See Appendix~\ref{appendix_B} for detailed proof.
          \end{proof}

          {\color{black} In the sequel, we  mainly  analyze the complexity of the proposed AO algorithm for solving the sum-MSE minimization problem (P1). Clearly, this AO algorithm consists of an alternation between the transmit precoder ${\bf{W}}$   and  the RIS reflection vector  ${\bm{\theta}}$. Firstly, in terms of the optimization of the transmit precoder ${\bf{W}}$, the complexity of  computing ${\bf{W}}^{\star}$  lies in  the matrix inversion and  multiplication involved  in \eqref{TW},  which is $O ( N_T^3\!+\!N_R^3\!+\!MN_TN_R^2\!+\!MN_T^2N_R)$. Secondly, by recalling Section III.B, we propose two different algorithms, namely the two-tier MM-based algorithm and  the modified RGA algorithm,   for the optimization of the RIS reflection vector  ${\bm{\theta}}$.  On the one hand,  the complexity of the two-tier MM-based algorithm is given by  $O( (1\!+\!M)^2(N_T^4\!+\!N_R^2N_T^2)\!+\!T_{\rm MM}{(2M^2\!+\!2^b)})$, where $ T_{ \rm MM}$ denotes the number of iterations in the two-tier MM-based algorithm. Therefore, the MM-based  AO algorithm has a  total complexity of $O( T_{\rm iter1}((1\!+\!M)^2(N_T^4\!+\!N_R^2N_T^2)\!+\!T_{\rm MM}{(2M^2\!+\!2^b)}))$, where $ T_{\rm iter1}$ denotes the total number of iterations. On the other hand, the modified RGA algorithm has a computational complexity of $O( (1\!+\!M)^2(N_T^4\!+\!N_R^2N_T^2)\!+\!T_{\rm RGA}(M^2\!+\!T_{\rm \rho}2^b))$, where $ T_{\rm RGA}$ and $ T_{\rm \rho}$ represent the total iteration number of the RGA updates and the number of iterations for searching the step size $\rho_n$, respectively.
          Accordingly, the total complexity  of the proposed AO  algorithm  becomes $O( T_{\rm iter2}((1\!+\!M)^2(N_T^4\!+\!N_R^2N_T^2)\!+\!T_{\rm RGA}(M^2\!+\!T_{\rm \rho}2^b)))$, where $ T_{\rm iter2}$ denotes  the total number of iterations. 
          }

\section{Simulation}
In this section,  numerical simulations are {\color{black}presented}  to demonstrate the superior performance of the proposed  MM-based AO   algorithm for  minimizing  the average total MSE of the RIS-aided MIMO system  under the hardware impairments and imperfect CSI. Unless otherwise stated, the basic system parameters are elaborated as follows: $ N_T\!$ = 8, $N_R\!$ = 8, $d\!$ = 8, $M\!$ = 64, $\sigma^2\!$ = -100 dBm and $P\!$ = 20 dBm. 
{In this paper, we consider a  three-dimensional coordinate system, where the BS is located at (0m,0m,5m),   the RIS is deployed at (0m,85m,10m), and the user locates at  (5m,120m,1.5m). For the large-scale fading, we choose the log-normal shadowing model given by 
  ${{\rm PL}(d_l)}$ [\rm dB] = ${\rm PL_0}\!+\!10 \alpha_l{\rm log} (d_l)+{X_{\sigma}}$, where $\rm PL_0$   denotes the path loss at the reference distance 1 m, $\alpha_l$  and   $d_l$ represent the  path loss exponent and the propagation distance, respectively \cite{cho2010mimo}. $X_{\sigma}$ denotes the random shadowing effect subject to the Gaussian distribution with zero mean and a standard derivation of $\sigma$.    
  In this simulation, $\rm PL_0$ is chosen as 30  dB and the $\sigma$ is set as 3 dB.
  Due to extensive obstacles and scatterers, the path-loss exponent between the BS and the users is given by $\alpha_{BU}$ = 3.75. By deploying the location of the RIS, the RIS-related link has a higher probability of experiencing nearly free-space path loss. Then, we set the path-loss exponents of the BS-RIS link and of the RIS-user link as $\alpha_{BR}\!=\! \alpha_{RU}$ = 2.2.}
As for the small-scale fading,  the direct channel $ {\bf{H}}_d$ is assumed to be Rayleigh fading due to the extensive scatters, while  the compound channels $\rm {\bf{G}}_m$'s obey  Rician  distribution, where  the  Rician factor is set as 0.75.
{ Furthermore,  the  variance of  CSI  errors  is assumed to be $ \sigma_d^2$ = $\sigma_m^2$ = 0.01 \cite{zhang2020robust,wang2014outage}, and the hardware distortion level is set as $ \beta_T$ = $\beta_R$ = 0.08 \cite{zhangMIMOCapacityResidual2014,bjornsonMassiveMIMOSystems2014,holma2011lte}.} 
The number of quantization bits  for the passive  RIS is  $b$ = 2. The average normalized MSE (ANMSE) is adopted as the performance metric, i.e. ${\rm ANMSE}={f_{\rm MSE}({\bf C}\!,\!{\bf W}\!,\!{\bm{\theta}})}/{d}$. All  simulation results  are obtained by averaging over 500 independent channel realizations.  

%\vspace{-12pt}
\begin{figure}[t] %\vspace{-2mm}
  \centering
  \includegraphics[width=0.45\textwidth]{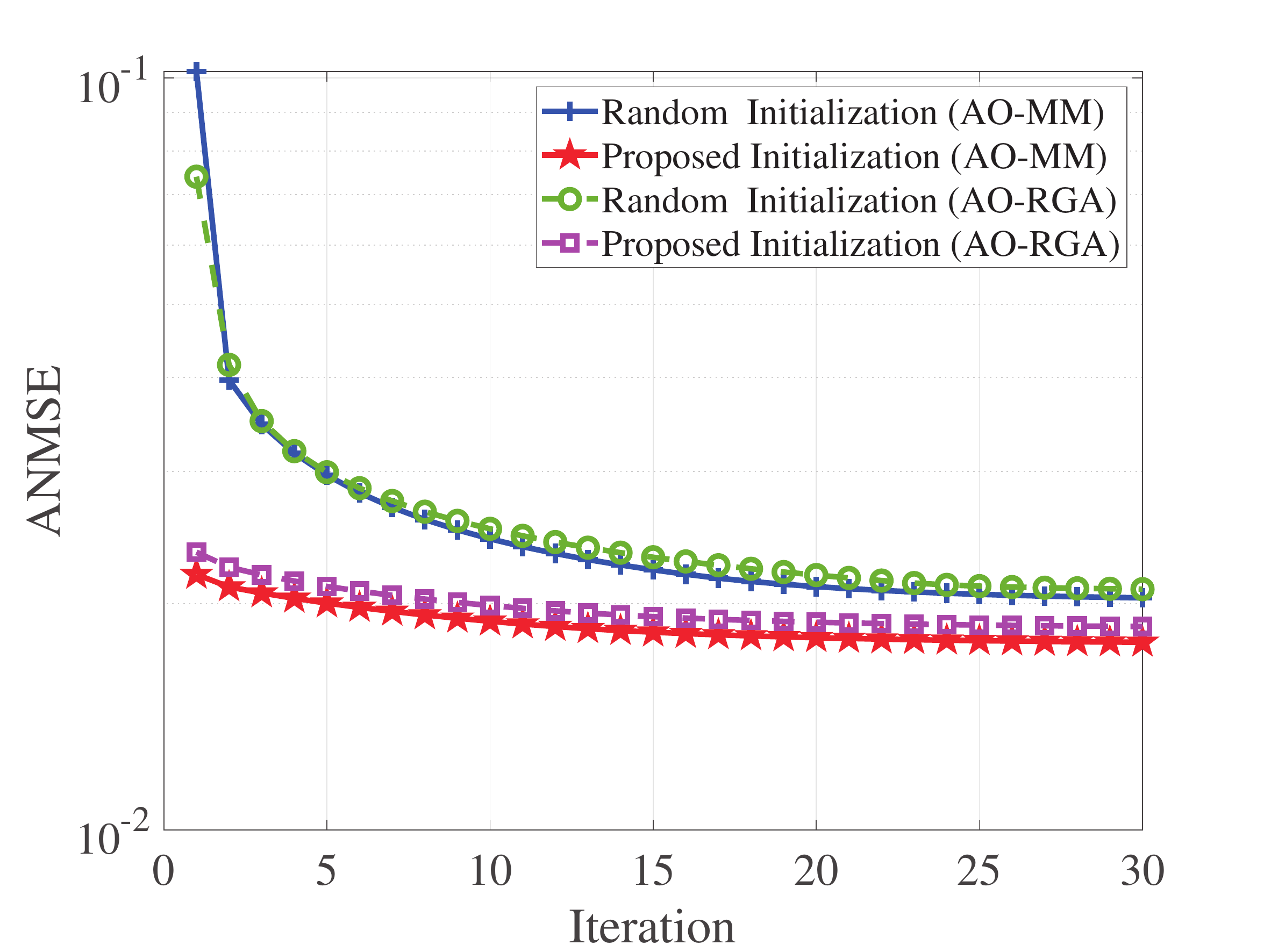}\label{iteration}  
  %\vspace{-12pt}
  \caption{Convergence behavior of the proposed AO-MM and AO-RGA 
  algorithms.}\label{Fig2}
  \vspace{-2mm}
\end{figure}

In addition, for a comprehensive  performance comparison, the following benchmark schemes are introduced in the simulation: 
      {\color{black}%.
      {\bf{1) Perfect hardware:}} We assume perfect hardware implementation in this strategy, i.e. $\beta_T\!=\!\beta_R\!=\!0$. Moreover, the continuous RIS phase shifts are assumed. Naturally, this system setup characterizes an upper bound on the performance of our considered practical system with hardware impairments.
      {\bf{2) Perfect CSI:}} The perfect CSI is assumed to be available in this scheme, namely $\sigma_d^2\!=\!\sigma_m^2\!=\!0$, while only the hardware distortions are considered.
      }
      {\bf{3) Random phase shift:}} The discrete phase shift at each RIS element is randomly chosen and kept fixed in the optimization procedure. 
      {\bf{4) Identity phase shift:}} We adopt the identity phase shifting strategy for the RIS optimization.
      {\bf{5) Nonrobust sheme:}} We adopt the transmit precoder and RIS phase shift design strategy in \cite{MSE6zhaoxin}, which neglects the impacts of hardware impairments and channel estimation errors. Due to its continuous phase shift design, we relax the optimal solution to the nearest discrete phase shifts.

To begin with, Fig.~\ref{Fig2} shows the convergence behavior of the proposed MM-based AO based algorithm under different initializations, where both the two proposed schemes  for optimizing  the RIS reflection matrix, {\color{black} as mentioned in Section III.B}, are considered.   It can be seen from Fig.~\ref{Fig2} that both the proposed AO-MM and AO-RGA algorithms under the ideal-based initialization proposed in Algorithm~\ref{Algorithm2} achieve a lower ANMSE  than that under the random initialization, which means that the choice of initial points significantly affects the ANMSE performance. This is {\color{black}expected} since a good initial point {\color{black}is more likely to lead to a superior solution}.
In addition, we find that the proposed AO-MM and AO-RGA algorithms  can  achieve almost the same converged value under each considered initialization.  Moreover,  these two AO algorithms both converge monotonically within $20$ iterations, which demonstrates their fast convergence speed.

\begin{figure}[t]  %\vspace{-5mm}
  \centering
  %\vspace{-0.5cm}
  \subfigbottomskip=0pt
  \subfigcapskip=0pt
  \subfigure[(a) Transmit power]
  {
  \includegraphics[width=0.45\textwidth]{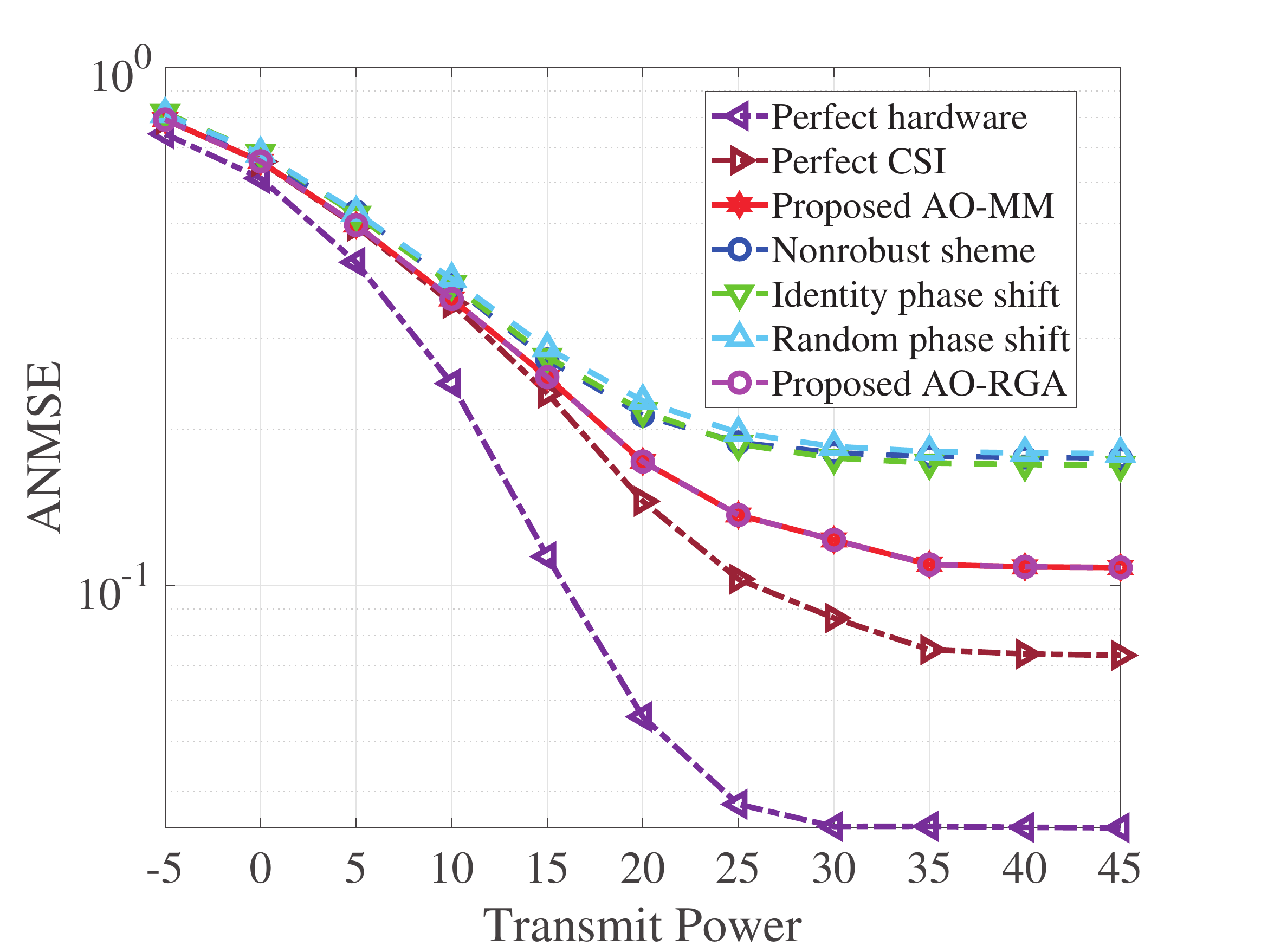}
  }
  %\hspace{-11mm}
  \subfigure[(b) Hardware distoriton]
  {
  \includegraphics[width=0.45\textwidth]{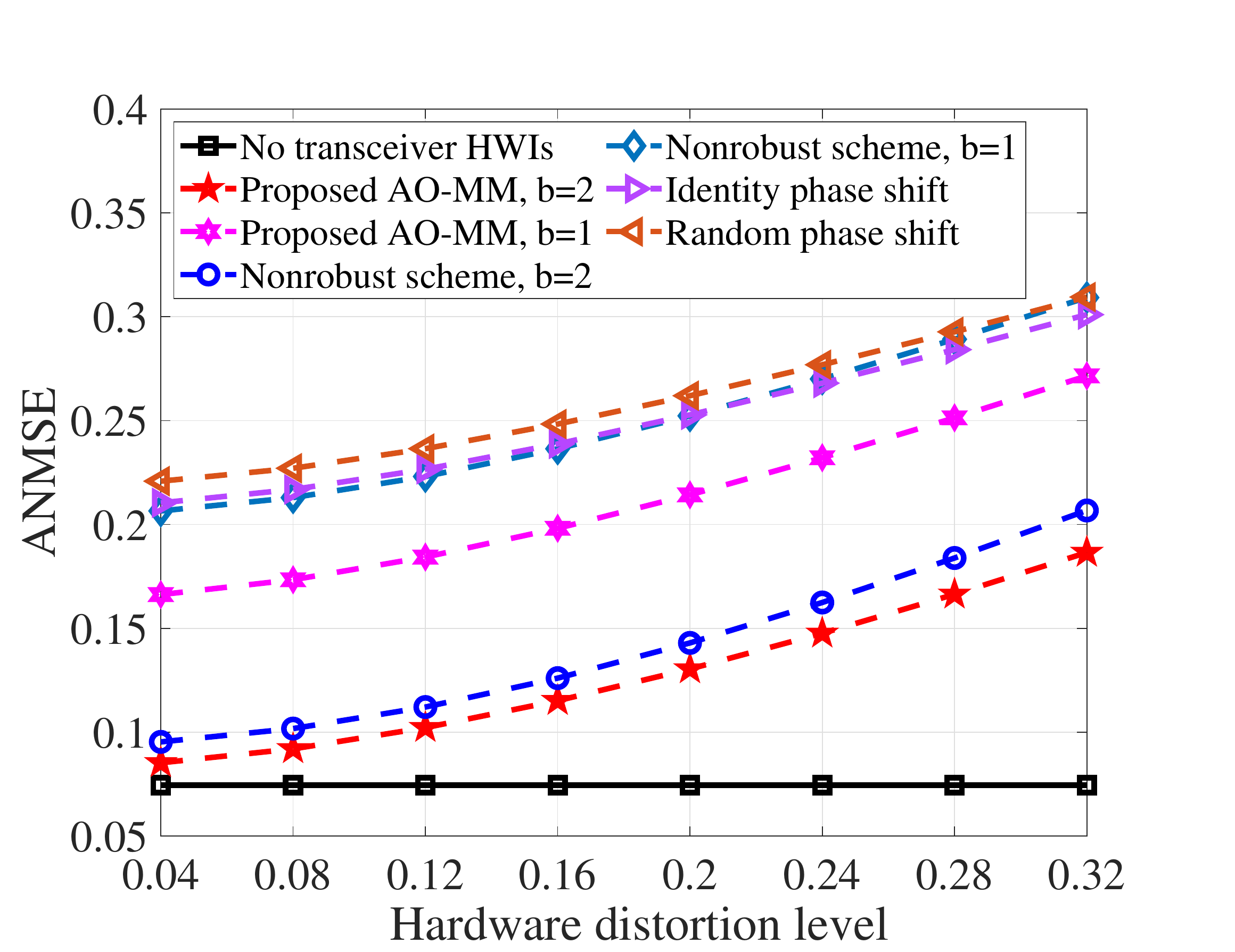}
  }
  %\hspace{-10mm}
  \subfigure[(c) CSI error]
  {
  \includegraphics[width=0.45\textwidth]{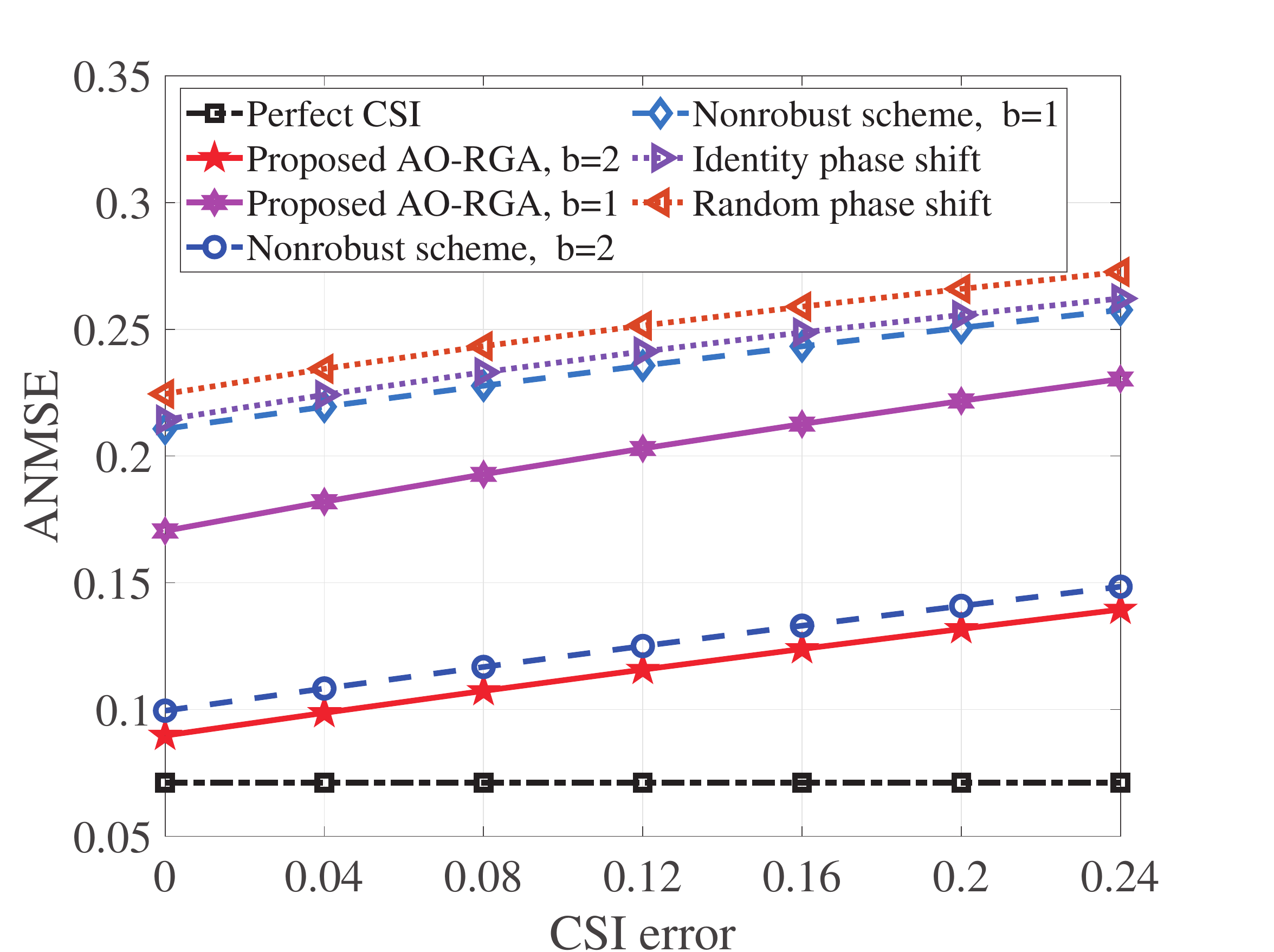}
  }
  \caption{ ANMSE performance comparison for different algorithms versus (a) the transmit power (b) the hardware distortions (c) the CSI errors.} \label{fig3}
  %\vspace{-10pt}
  \end{figure}

        { Next, Fig.~\ref{fig3} shows the ANMSE performance of all studied algorithms versus transmit power, hardware distortion, and CSI error levels. }
        First, Fig.~\ref{fig3} (a) reveals the performance comparison between the proposed algorithms and the benchmark schemes with $b=1$. The perfect hardware and perfect CSI schemes serve as the lower bounds of the proposed MM-based robust designs. It is clear that the perfect hardware scheme performs much better than the perfect  CSI scheme, which suggests that the perfect hardware implementation of the transceiver and RIS is much more important than the accurate channel information in the system design. It follows from Fig.~\ref{Fig2} that the proposed AO-RGA and AO-MM algorithms are able to achieve almost the same ANMSE  performance, both of which outperform the nonrobust scheme in \cite{MSE6zhaoxin}.  This is because the nonrobust scheme directly aims at the optimization of the transceiver and RIS reflection matrix under the assumption of perfect CSI and ideal hardware, regardless of the practical non-negligible hardware impairments and  CSI errors, which thus leads to an inevitable ANMSE performance loss.  We also find that the schemes with random phase shifts and identity phase shifts both perform much worse than the proposed AO-RGA (AO-MM)  algorithm with {\color{black}$1$ bit} phase shift optimization, thereby demonstrating the necessity of optimizing the RIS reflection matrix and the superiority of the proposed AO algorithms.

        {
        Fig.~\ref{fig3} (b) illustrates the ANMSE performance versus hardware distortion level $\beta_R$ for different algorithm comparisons, where $\beta_T\!=\!0.08$.
        {The scheme of `No transceiver HWIs' considers the perfect transceiver hardware for the RIS-aided MIMO system, which serves as a benchmark.}
        Firstly, it is readily observed that the ANMSE performance decreases with the increasing of hardware distortion level $\beta_R$.  In addition, as $\beta_R$ increases, the performance gap between the proposed MM-based algorithm and the nonrobust algorithm increases, which shows the importance of taking the hardware impairments into consideration. Clearly, both the random phase shift and identity phase shift schemes perform much worse than the proposed algorithm, which highlights the necessity of optimizing the RIS reflection matrix.

        In Fig.~\ref{fig3} (c), we also plot the ANMSE performance for algorithms comparison under different CSI errors of the compound channel, i.e. $\sigma_m^2$, where $\sigma_d^2\!=\!0.01$. The `perfect CSI' scheme, where only the hardware impairments are considered in the RIS-aided system, acts as a benchmark for other algorithms and keeps constant as the CSI error increases. 
        As the CSI error increases, the system performance of the proposed algorithm and other schemes decreases gradually.
        Naturally, the nonrobust scheme performs worse than the proposed AO-MM or AO-RGA algorithm, since the impact of the CSI errors is ignored in the system design. It is readily observed that the proposed RIS reflection matrix design can bring huge performance gains compared with the random and identity phase shift schemes, which further demonstrates the effectiveness of the proposed RIS design strategy.}
              %\vspace{-12pt}
      
%\vspace{-10mm}
\begin{figure}[t] %\vspace{-5mm}
  \centering
  %\vspace{-0.5cm}
  \subfigbottomskip=0pt
  \subfigcapskip=0pt
  \subfigure[(a) Number of quantization bits]{
  \includegraphics[width=0.45\textwidth]{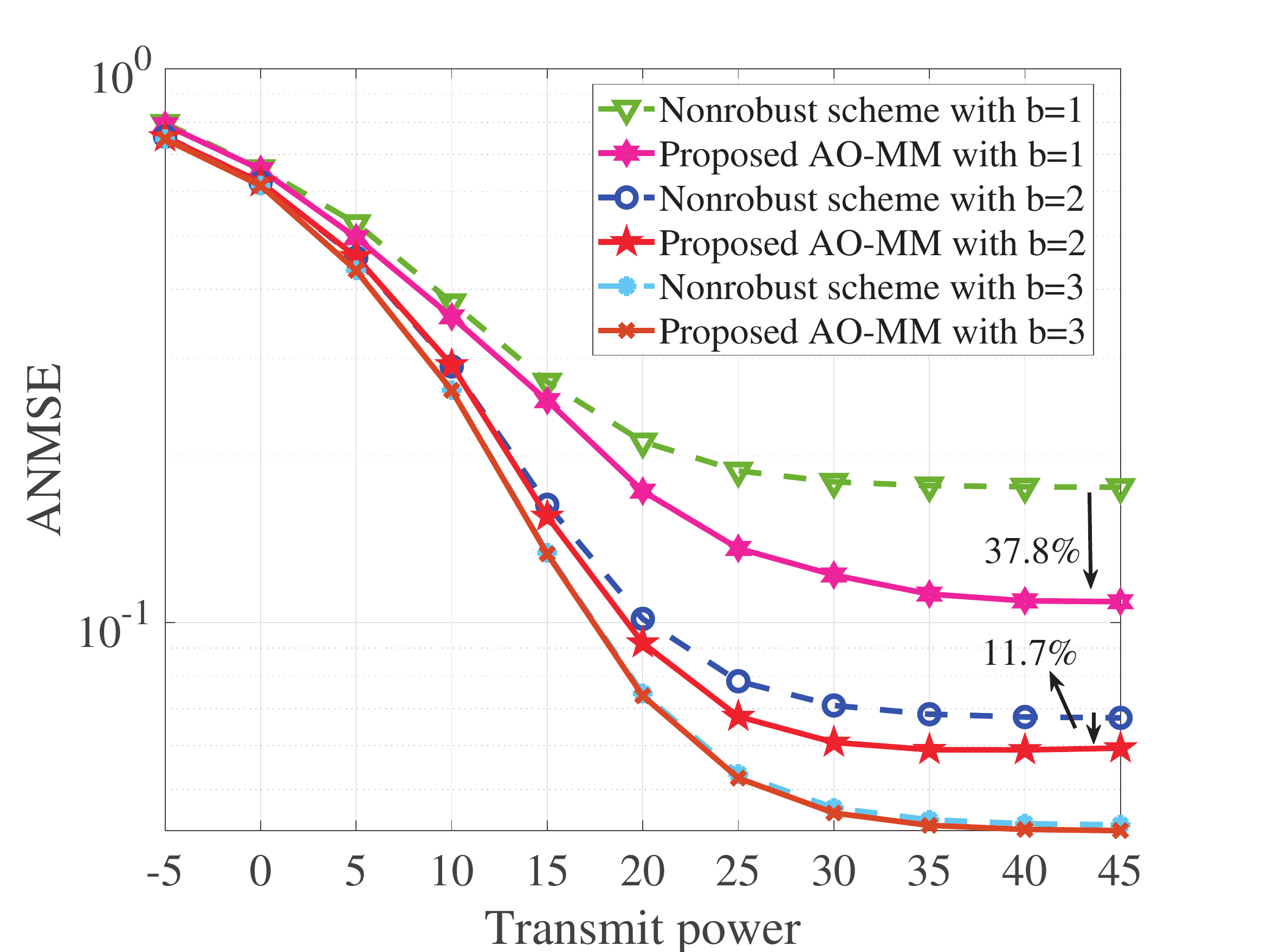}
  }
  %\hspace{-11mm}
  \subfigure[(b) Hardware distoriton]{
  \includegraphics[width=0.45\textwidth]{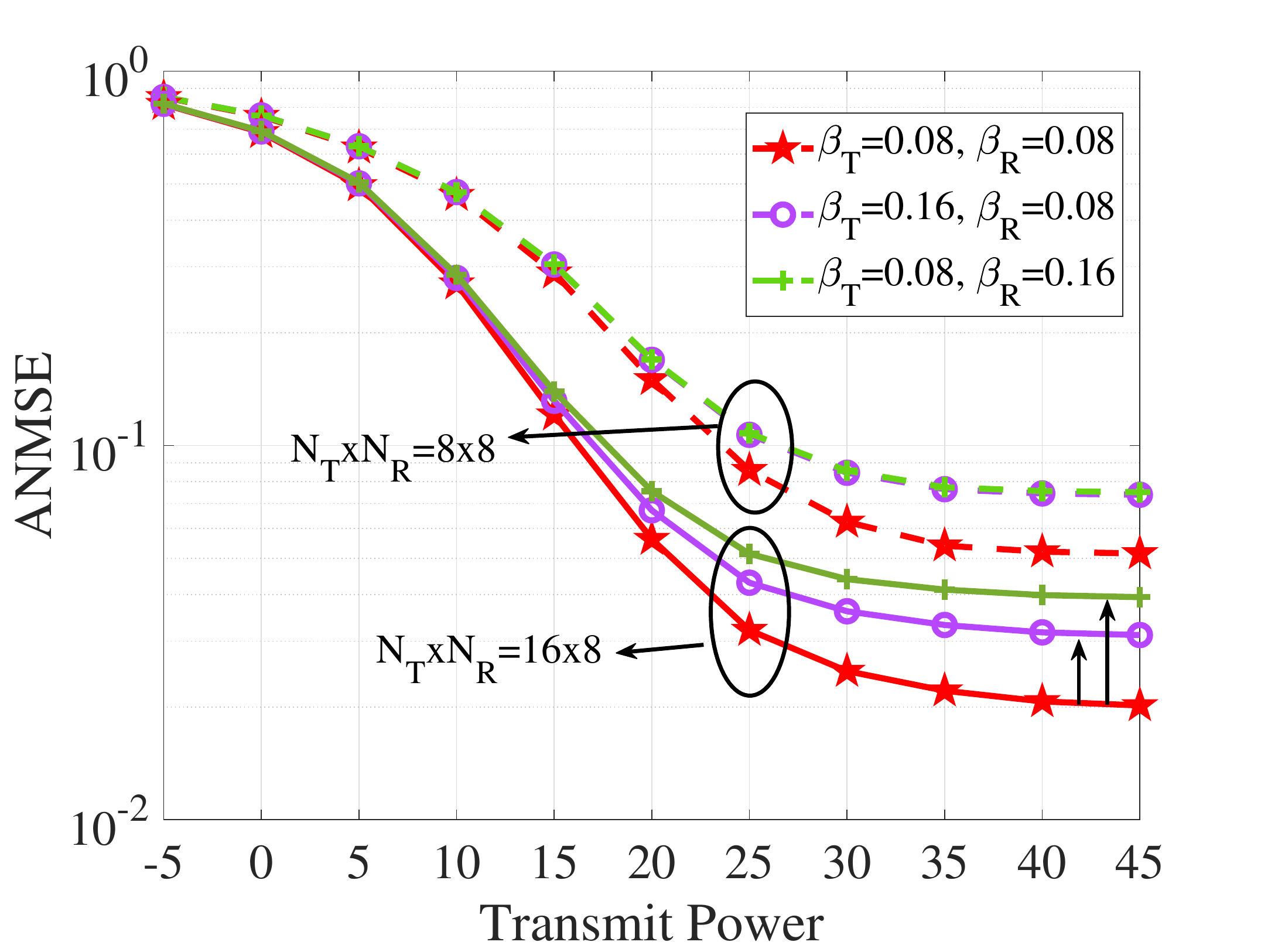}
  }
  %\hspace{-10mm}
  \subfigure[(c) CSI error]{
  \includegraphics[width=0.45\textwidth]{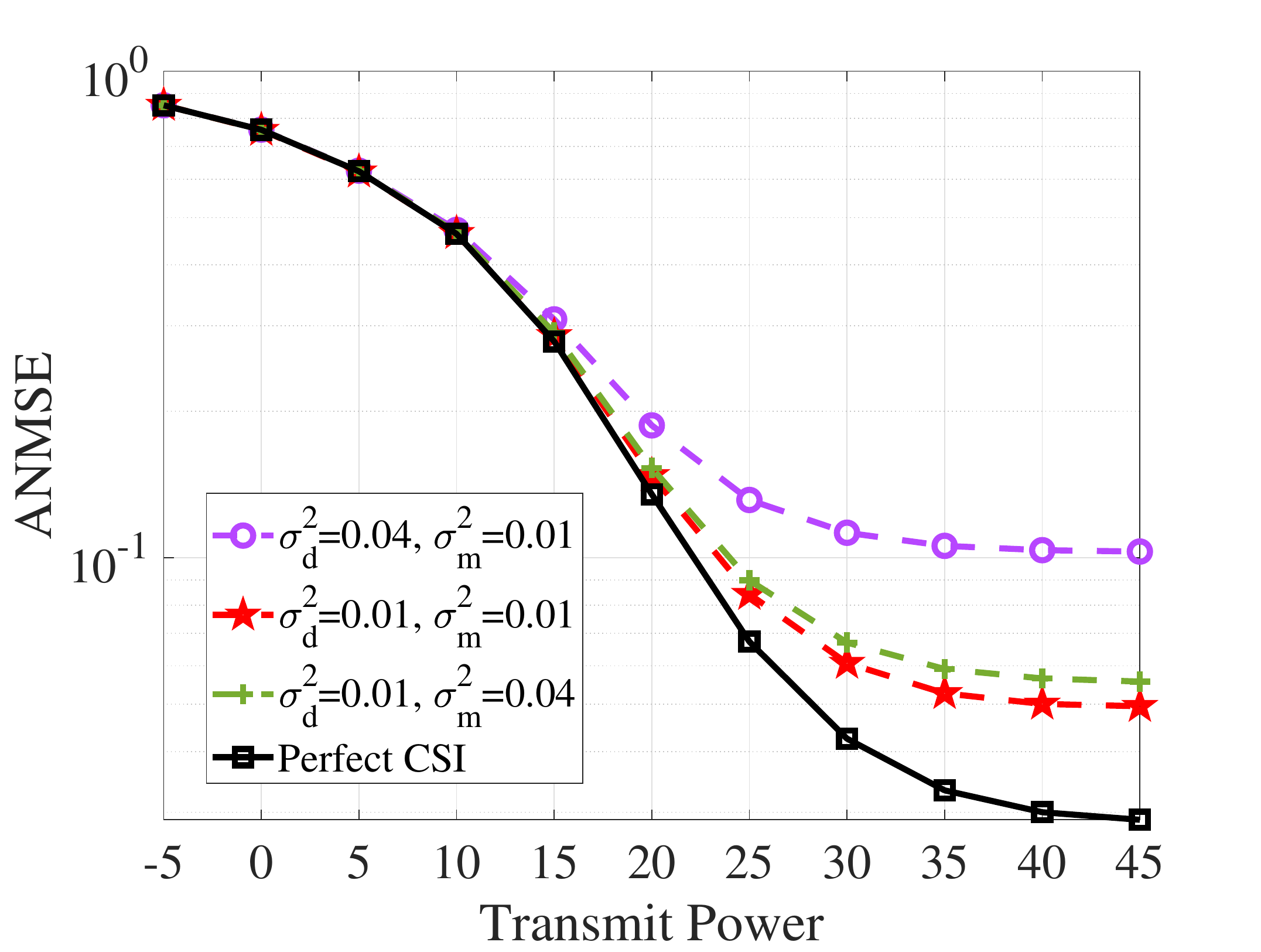}
  }
  \caption{ ANMSE performance comparison under different system imperfections: (a) the number of quantization bits (b) the hardware distortion levels (c) the CSI errors.} \label{differ_para}
  \vspace{-10pt}
  \end{figure}

        {
        Then, in Fig.~\ref{differ_para}, we study the impacts of system imperfections, i.e., the number of quantization bits, the transceiver hardware impairments, and the channel estimations errors, on the system ANMSE performance, respectively.
        To begin with, we compare the proposed AO-MM algorithm with the nonrobust scheme under different number of quantization bits in Fig.~\ref{differ_para} (a).} Obviously, the performance gap between the proposed AO-MM algorithm and the nonrobust scheme decreases from $37.8\%$ to $11.7\%$ as $b$ increases from 1 to 2, since the RIS phase shifts have more feasible values. This reminds us that using the proposed phase shift design strategy, low-resolution RIS phase shifts can achieve acceptable ANMSE performance in practical hardware implementation. Moreover, we readily find that an irreducible MSE floor exists in the high-SNR regime for each $b$, which is consistent with the theoretical results analyzed in Section IV.B. Obviously, increasing the number of quantization bits will decrease the value of the MSE floor, since the resolution of the RIS  phase shifts increases.

       {
        Next, Fig.~\ref{differ_para} (b)  reveals the effect of transceiver hardware distortion on the system ANMSE performance under different Tx-Rx antenna sizes,  represented by the dotted line (8x8 MIMO system) and solid line (16x8 MIMO system), respectively. It clearly shows that the ANMSE performance increases correspondingly with the increasing of transmit antenna size. Besides, the MSE floor is also influenced by the level of transceiver hardware impairments. On the other hand, we can readily conclude that when the number of transmit antennas is large than that of receiver antennas, the receiver's distortion will lead to a more severe impact on the system performance than the transmitter's distortion. Such influence will be further verified and quantized in the RIS-aided MISO system in Fig.~\ref{Fig5}.
        Finally, the impacts of channel estimation errors are illustrated in Fig.~\ref{differ_para} (c). From Fig.~\ref{differ_para} (c), it is readily inferred that the channel estimation inaccuracy clearly affects the system's performance and contributes to the performance floor in the high-SNR regime. Moreover, due to the double path loss effect of the RIS-related compound channels, the channel estimation inaccuracy of the direct channel induces a more severe influence on the system performance than the compound channels.}

     %\vspace{-12pt}
        \begin{figure}[t] %\vspace{-10mm}
        \centering
           %\vspace{-0.5cm}
            %\subfigbottomskip=0pt
            %\subfigcapskip=-8pt
            {\includegraphics[width=0.45\textwidth]{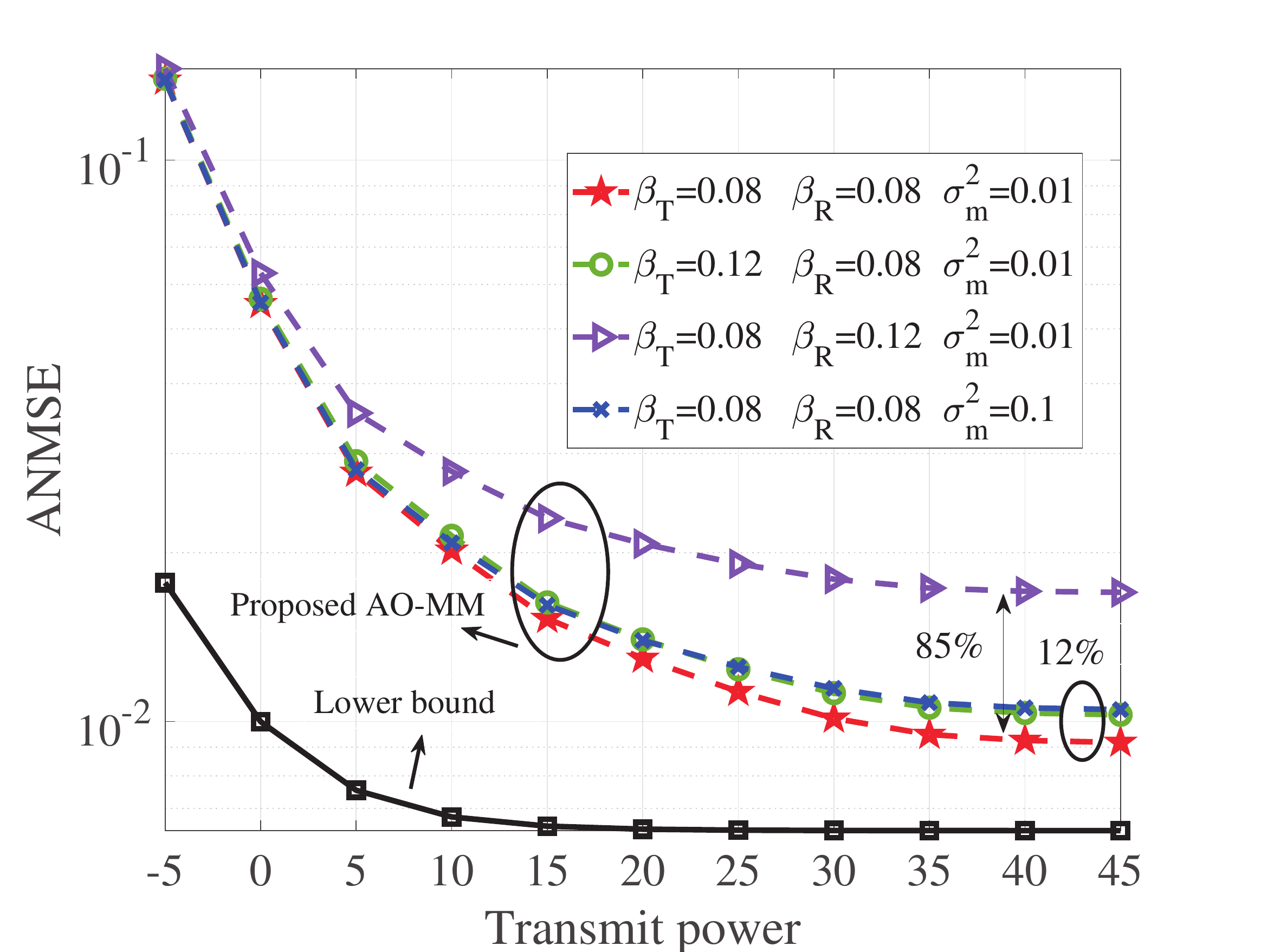}\label{lowerbound}} 
           %\vspace{-12pt}
            \caption{ANMSE performance versus transmit power under different system parameter setups in the RIS-aided MISO system.}\label{Fig5}
           \vspace{-10pt}
        \end{figure}

        In order to validate the theoretical analysis in Section IV.B,
        Fig.~\ref{Fig5} shows the ANMSE performance of the proposed AO-MM algorithm versus the transmit power $P$ for the RIS-aided MISO system, where $ N_T\!=\!8$, $N_R\!=d\!=\!1$. Firstly, it can be seen that the performance of the proposed algorithm is lower bounded by  $f_{\rm lower}^{\rm opt}$ mentioned in Proposition~\ref{proposition1}, which is denoted as `Lower bound' in Fig.~\ref{Fig5}, for each SNR. 
        Note that the hardware distortions at the transmitter's and receiver's sides have different impacts on the system performance in the RIS-aided MISO case.
        %Following from Fig.~\ref{Fig5}, we clearly find that the ANMSE performance exists an irreducible floor in the high-SNR regime. 
        {Specifically, the receiver distortion, i.e. $\beta_R$, causes more severe performance loss, about $70\%$, than transmitter distortion $\beta_T$, when the distortion level changes from 0.08 to 0.12.} This reminds us that more expensive hardware should be deployed on the user's side rather than the BS's side. 
        In addition, increasing the CSI errors of the compound channel, i.e. $\sigma_m^2$, also increases the MSE floor.
        { Therefore, combining the results of Fig.~\ref{differ_para} and Fig.~\ref{Fig5}, we conclude that the transceiver hardware distortions, CSI errors, and RIS phase noise all lead to the irreducible MSE floor in the RIS-aided MIMO/MISO communication system.}

    \begin{figure}[t] %\vspace{-10mm}
      \centering
      %\subfigbottomskip=-5pt
      %\subfigcapskip=-8pt
      %\vspace{-2mm}
      \subfigure[(a) Tx-Rx sizes]{
      \includegraphics[width=0.45\textwidth]{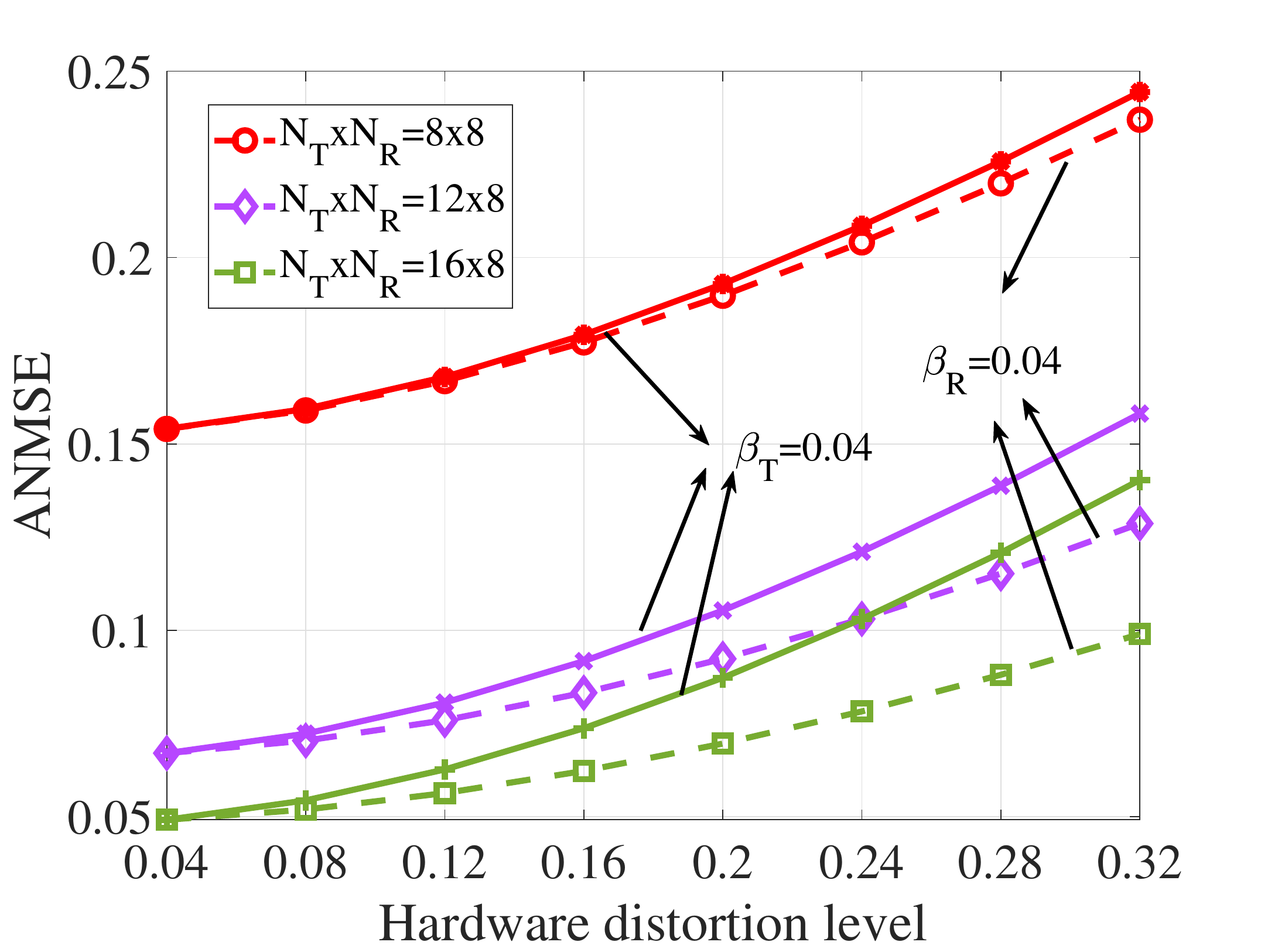}
      }
      \subfigure[(b) RIS elements]{
      \includegraphics[width=0.45\textwidth]{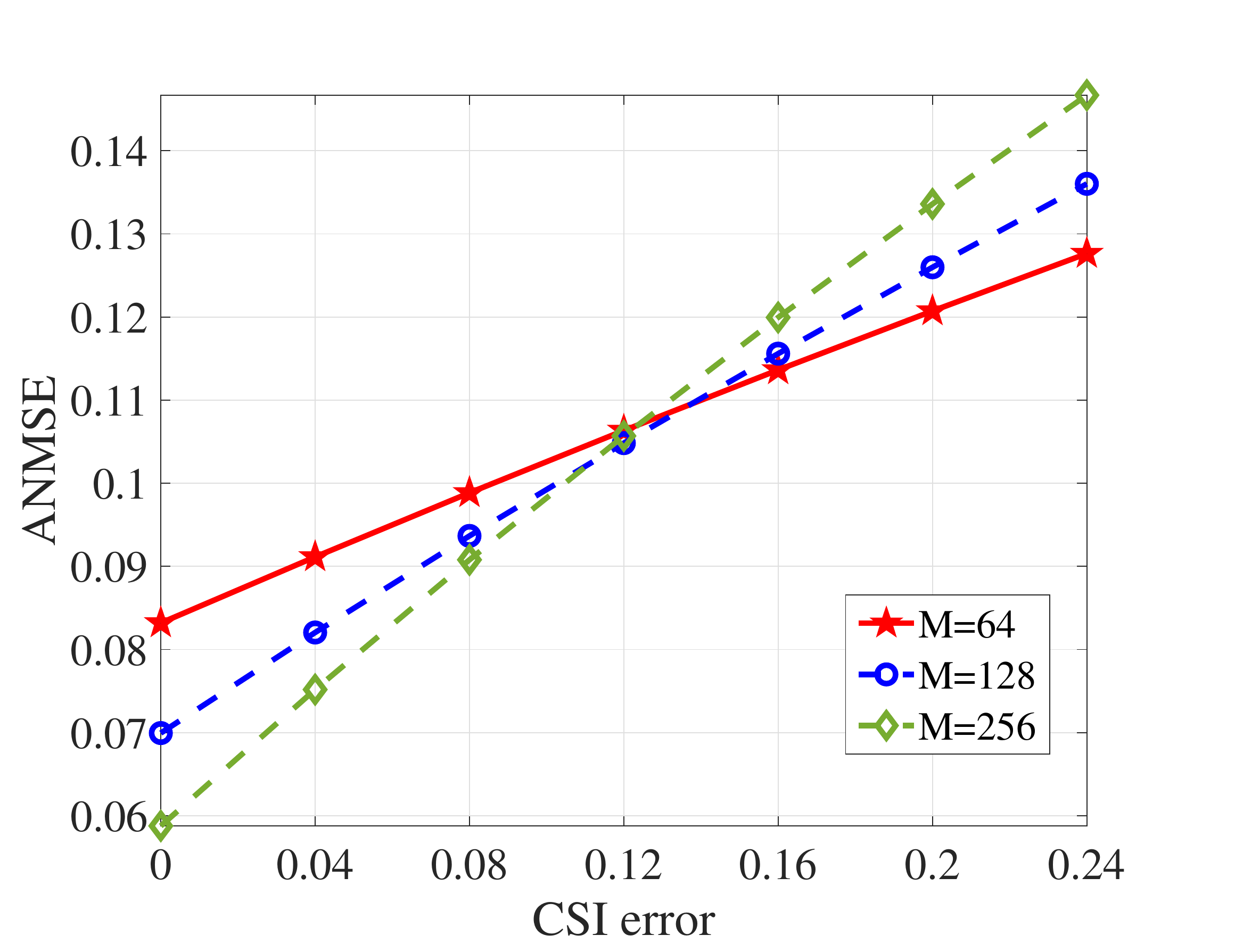}
      }
      \caption{ ANMSE performance versus the system imperfections under different number of (a) Tx-Rx sizes (b) RIS elements.}\label{differ_ant}
      \vspace{-10pt}
      \end{figure}
      %\vspace{-15mm}
    { 
    In Fig.~\ref{differ_ant}, the ANMSE performance versus the channel estimation errors and transceiver hardware impairments under different numbers of transmitter's antennas and RIS elements has been investigated. 
    First, Fig.~\ref{differ_ant} (a) shows the system's performance versus the hardware distortion level $\beta_T$ or $\beta_R$ for different Tx-Rx configurations, represented by the dotted line and solid line, respectively. It is evident that as the size of the transmitter antenna increases, the performance gap brought by the transmitter's and receiver's distortion increases, which further demonstrates that the user's hardware has more impact in the system performance than that at BS in realistic communications.}
    Besides, it can be seen from Fig.~\ref{differ_ant} (b) that the ANMSE performance increases as the number of RIS elements $M$ increases for a small region $0\thicksim 0.12$. The reason is that  the RIS benefits from the diversity gain by increasing number of RIS elements. However, in the high CSI error regime, increasing the number of RIS elements may decrease the ANMSE performance, since the impact induced by imperfect RIS-related channels dominates the diversity gain. 
   As such, we readily conclude that cares must be taken w.r.t. the size of RIS in enhancing the system's performance.

%\vspace{-4pt}
\section{conclusion}
%\vspace{-3pt}
  In this work, we studied the robust transceiver and passive beamforming design for the RIS-aided MIMO system with the existence of hardware impairments at the transceiver and RIS and channel errors. 
  We aimed to minimize the total average MSE for all data streams subject to the total power constraint at BS and discrete phase constraints at RIS. To tackle this non-convex NP-hard optimization problem, an efficient AO-based algorithm  was proposed. Moreover, to handle the non-convex discrete phase constraints, we also proposed two methods, namely two-tier MM-based algorithm and modified RGA method, to obtain the sub-optimal solution of the RIS reflection matrix.
  Futhermore, we also studied the special cases of RIS-aided MIMO system, e.g., RIS-aided MIMO system without direct link and RIS-aided MISO system, to demonstrate the optimality of the proposed algorithm and the MSE performance floor in the high-SNR regime, respectively. 
   Simulation results demonstrated the superiority of our proposed algorithms compared to the nonrobust design and other benchmark schemes. It was also found that increasing the number of RIS elements was not always beneficial under severe CSI errors. These results provided useful insights for the implementation of practical RIS-aided MIMO systems.

  %

  %~ \\
 
\appendices
%\vspace{-4pt} 
\section{ \label{appendix_A}}
%\vspace{-3pt}
Recalling the lower bound of $f_{\rm MSE}({{\bf{w}},{\bm{\theta}}})$  in \eqref{f_lowerbou}, the transmit precoder design for $f_{\rm lower}({{\bf{w}},{\bm{\theta}}})$ is equivalently transformed into the the maximization of a generalized Rayleigh quotient as
{\begin{subequations}
  \begin{alignat}{2}
  (\text{P9}): \  \max_{{\bf{w}}} \quad  & \frac{{\bf w}^H  {\bf {\bar h}}{\bf {\bar h}}^H {\bf w}}   { {\bf w}^H \left( \frac{c_2}{P}(1\!\!+\!\!\beta_T^2){\bf I} \!\!+\!\!(1\!\!+\!\!\beta_R^2) {\bf {\bar h}}{\bf {\bar h}}^H \!\!+\!\! {\bf Q}  \right) {\bf w} } \nonumber \\
   \mbox{s.t.} \ 
  & (1\!+\!\beta_T^2)||{\bf{w}}||^2\leq P, \tag{42}
 \end{alignat}
\end{subequations}}
\noindent where ${\bf{Q}}$ has been defined in Proposition~\ref{proposition1}.
 Hence, the optimal transmit precoder ${\bf w}^{\star}$ can be derived as  
  \begin{equation}
  {\bf w}^{\star}\!\!=\!\!\sqrt{\frac{P}{1\!\!+\!\!\beta_T^2}} \frac{( \frac{c_2}{P}(1\!\!+\!\!\beta_T^2){\bf I} \!+\!(1\!+\!\beta_R^2) {\bf {\bar h}}{\bf {\bar h}}^H \!+\! {\bf Q}  )^{-1} {\bf {\bar h}}}{\left\lVert ( \frac{c_2}{P}(1+\beta_T^2){\bf I} \!+\!(1\!+\!\beta_R^2) {\bf {\bar h}}{\bf {\bar h}}^H \!+\! {\bf Q}  )^{-1} {\bf {\bar h}} \right\rVert}.       
\end{equation}
%\vspace{-10pt}

By substituting the optimal ${\bf w}^{\star}$ and relaxing the unit-modulus discrete phase constraints of \eqref{P1C2}, the optimization problem for the RIS reflection beamforming is further given by 
\begin{subequations}
  \begin{alignat}{2}
    (\text{P10}): 
    \max_{{\bm{\theta}}} \  & \frac  { {\bm{\tilde \theta}}^H {\bf{\tilde H}}_{\rm cat}^H (   {\bf{Q}}\!+\! (1\!+\!\beta_T^2) \frac{\sigma^2}{P} {\bf{I}}_{N_T} )^{-1} {\bf{\tilde H}}_{\rm cat}{\bm{\tilde \theta}} } {1 \!\!+\!\! (1\!\!+\!\!\beta_R^2){\bm{\tilde \theta}}^H {\bf{\tilde H}}_{\rm cat}^H (   {\bf{Q}} \!\!+\!\! (1\!\!+\!\! \beta_T^2) \frac{\sigma^2}{P} {\bf{I}}_{N_T} )^{-1} {\bf{\tilde H}}_{\rm cat}{\bm{\tilde \theta}} }, \nonumber \\
    \mbox{s.t.} \ &  ||{\bm{\tilde \theta}}||_2^2\leq M+1, \tag{44}
  \end{alignat}
\end{subequations}
  where ${\bf{\tilde H}}_{\rm cat}$ is defined in Proposition~\ref{proposition1}.
  Note that the objective function in Problem (P10) is monotonically increasing with the term of {{${{ {\bm{\tilde \theta}}^H {\bf{\tilde H}}_{\rm cat}^H (   {\bf{Q}}\!+\! (1\!+\!\beta_T^2) \frac{\sigma^2}{P} {\bf{I}}_{N_T} )^{-1} {\bf{\tilde H}}_{\rm cat}{\bm{\tilde \theta}} }}$}}.
  Therefore, the problem (P10) for the RIS phase shift design can be equivalently transformed to the maximization of {${ {\bm{\tilde \theta}}^H {\bf{\tilde H}}_{\rm cat}^H (   {\bf{Q}}\!+\! (1\!+\!\beta_T^2) \frac{\sigma^2}{P} {\bf{I}}_{N_T} )^{-1} {\bf{\tilde H}}_{\rm cat}{\bm{\tilde \theta}} }$}.             
  It is readily inferred that maximum value can be obtained as {$(M\!+\!1)\lambda_{\rm max}({\bf {\tilde H}}_{\rm cat}^H ( {\bf{Q}}\!+\! (1\!+\!\beta_T^2) \frac{\sigma^2}{P} {\bf{I}}_{N_T} )^{-1} {\bf {\tilde H}}_{\rm cat}) $} when {${\bm{\theta}}\!=\!\sqrt{(M+1)}{\bf u}_{\rm max}$} by leveraging the Cauchy-Schwarz inequality, where ${\bf u}_{\rm max}$ denotes the corresponding eigenvector w.r.t the largest eigenvalue $\lambda_{\rm max}$ from the EVD of {${\bf {\tilde H}}_{\rm cat}^H ( {\bf{Q}}\!+\! (1\!+\!\beta_T^2) \frac{\sigma^2}{P} {\bf{I}}_{N_T} )^{-1} {\bf {\tilde H}}_{\rm cat}$}. Based on the above transformation and relaxation, the optimal value of $f_{\rm lower}({{\bf{w}},{\bm{\theta}}})$ is lower bounded by
    \begin{align}
    & f_{\rm lower}^{\rm opt} = \nonumber \\
    & \!\! 1 \!\!-\!\!  \frac{(M+1)\lambda_{\rm max}({\bf {\tilde H}}_{\rm cat}^H ( {\bf{Q}}\!+\! (1\!+\!\beta_T^2) \frac{\sigma^2}{P} {\bf{I}}_{N_T} )^{-1} {\bf {\tilde H}}_{\rm cat})}{1 \!\!+\!\! (1 \!\!+\!\! \beta_R^2)(M \!\!+\!\! 1)\lambda_{\rm max}({\bf {\tilde H}}_{\rm cat}^H ( {\bf{Q}}\!\!+\!\! (1\!\!+\!\!\beta_T^2) \frac{\sigma^2}{P} {\bf{I}}_{N_T})^{-1} {\bf {\tilde H}}_{\rm cat})}.
  \end{align}
  Moreover, in the high-SNR regime, i.e. $\frac{P}{\sigma^2} \to \infty$, the optimal value of $f_{\rm lower}^{\rm opt}$ is reduced to the MSE floor, expressed as 
  \begin{equation}
    f_{\rm floor}=1-\frac{(M+1)\lambda_{\rm max}({\bf {\tilde H}}_{\rm cat}^H  {\bf{Q}} ^{-1} {\bf {\tilde H}}_{\rm cat})}{1 \!+\!(1\!+\!\beta_R^2)(M\!+\!1)\lambda_{\rm max}({\bf {\tilde H}}_{\rm cat}^H  {\bf{Q}}^{-1} {\bf {\tilde H}}_{\rm cat})}.
  \end{equation}              
  Thus, we complete the proof.

    %\vspace{-12pt}
\section{ \label{appendix_B}}
    %\vspace{-3pt}
   To begin with, we generally characterize the relationship between surrogate function $g({\bf x};{\bf x}_{t})$ and original function $f({\bf x})$ in the MM technique. For any iteration, $g({\bf x};{\bf x}_{t})$ is the majoring function of $f({\bf x})$ at ${\bf x}_{t}$, which satisfies 
   1) $g({\bf x};{\bf x}_{t}) \leq f({\bf x}), \forall {\bf x} \in {\rm dom} f$  
   2)   $g({\bf x}_{t};{\bf x}_{t}) = f({\bf x}_{t})$,
   3) $\nabla_{{\bf x}} g({\bf x};{\bf x}_{t})|_{{\bf x}={\bf x}_{t}} = \nabla_{{\bf x}} f({\bf x})$, 
   4) $g({\bf x};{\bf x}_{t}) $ is continuous in both ${\bf x}$ and ${\bf x}_{t}$. 
   The first two conditions guarantee $g({\bf x};{\bf x}_{t})$ is a tight global lower bound of $ f({\bf x})$, while the last two 
   conditions guarantee convergence to a stationary solution.
   Recalling the surrogate function defined in \eqref{MMsub}, \eqref{thetaMMf} and \eqref{thetasubsub}, it is easily verified that they all satisfy these four conditions. 
   
   Specifically, in the optimization of the transmit precoder with RIS reflection matrix fixed at the $t$-th iteration, we have 
     \begin{subequations}
       \begin{align}
       &  {g_{\rm sub1}^{\rm Low}({\bf{W}};{\bf W}_{t} | {\bm{\theta}}_{t})} \leq  g_{\rm MSE}({\bf{W}},{\bm{\theta}}_{t}), 
       \label{28a} \\
       &{g_{\rm sub1}^{\rm Low}({\bf{W}}_{t};{\bf W}_{t}| {\bm{\theta}}_{t})} =  g_{\rm MSE}({\bf{W}}_{t},{\bm{\theta}}_{t}). \label{28b}  
       %&\nabla_{{\bf{W}}} {g_{\rm sub1}^{\rm Low}({\bf{W}}_{t};{\bf W}_{t}, {\bm{\theta}}_{t})} =\nabla_{{\bf{W}}} g_{\rm MSE}({\bf{W}}_{t},{\bm{\theta}}_{t}). \label{28c}
       \end{align}
     \end{subequations}
   The update rule for the transmit precoder as proposed in Section III.A can be rewritten as 
   \begin{equation}\label{update_W_2}
     {\bf{W}}_{t+1}={\rm arg} \max_{{\bf{W}}} {g_{\rm sub1}^{\rm Low}({\bf{W}};{\bf W}_{t})} 
   \end{equation}
   where the closed form has been derived in \eqref{update_W}. Hence, the following relationship between the objective values holds: 
   \begin{align} \label{first}
        g_{\rm MSE}({\bf{W}}_{t},{\bm{\theta}}_{t}) 
        & \mathop   = \limits^{({d_1})} 
       {g_{\rm sub1}^{\rm Low}({\bf{W}}_{t};{\bf W}_{t}| {\bm{\theta}}_{t})} \nonumber \\
       & \mathop  \leq \limits^{({d_2})} 
       {g_{\rm sub1}^{\rm Low}({\bf{W}}_{t+1};{\bf W}_{t}, {\bm{\theta}}_{t})}
       \nonumber \\
       & \mathop  \leq \limits^{({d_3})}
       g_{\rm MSE}({\bf{W}}_{t+1},{\bm{\theta}}_{t}) 
   \end{align}
   where $(d_1)$ holds because of \eqref{28b}, $(d_2)$ holds since \eqref{update_W_2} is optimally solved and $(d_3)$ is due to \eqref{28a}. 

   In the optimization of the passive RIS using the two-tier MM-based algorithm with transmit precoder fixed, we have
   \begin{subequations}
     \begin{align}
     &{g_{\rm sub2}^{\rm Low} ({\bm{\theta}};{\bm{\theta}}_{t}|{\bf W}_{t}})\leq  g_{\rm MSE}({\bf{W}}_{t},{\bm{\theta}}), \label{31a} \\

     &{g_{\rm sub2}^{\rm Low}({\bm{\theta}}_{t};{\bm{\theta}}_{t}|{\bf W}_{t})} =  g_{\rm MSE}({\bf{W}}_{t},{\bm{\theta}}_{t}), \label{31b} \\
     
     &{{\tilde g}_{\rm sub2}^{\rm Low}({\bm{\theta}}; {\bm{\theta}}_{t,r}|{\bf W}_{t})} \leq  {g_{\rm sub2}^{\rm Low}({\bm{\theta}}; {\bm{\theta}}_{t} | {\bf W}_{t})}, \label{31c} \\
     &{{\tilde g}_{\rm sub2}^{\rm Low}({\bm{\theta}}_{t,r};{\bm{\theta}}_{t,r} | {\bf W}_{t})} =  {g_{\rm sub2}^{\rm Low}({\bm{\theta}}_{t,r}; {\bm{\theta}}_{t} | {\bf W}_{t})}, \label{31d} 
     \end{align}
   \end{subequations}
   where $r$ denotes the $r$-th iteration for updating ${\bm{\theta}}$ using the MM technique. Denoting the maximum number of iteration as $R$,  
   we have ${\bm{\theta}}_{t,0}\!=\!{\bm{\theta}}_{t}$ and  ${\bm{\theta}}_{t,R-1}\!=\!{\bm{\theta}}_{t+1}$.
   The update rule for the passive RIS in the two-tier MM-based algorithm is re-expressed as 
   \begin{equation} \label{update_theta_MM_2}
    \setlength\abovedisplayskip{3pt}
    \setlength\belowdisplayskip{3pt}
     {\bm{\theta}}_{t,r+1}={\rm arg} \max_{{\bm{\theta}}} ~{{\tilde g}_{\rm sub2}^{\rm Low}({\bm{\theta}}; {\bm{\theta}}_{t,r} | {\bf W}_{t})},
   \end{equation}
   which is essentially defined as \eqref{update_theta_MM}. 
   Then, the following relationship holds:
   \begin{subequations}
     \begin{align} \label{second}
      & g_{\rm MSE}({\bf{W}}_{t+1},{\bm{\theta}}_{t}) \nonumber \\
      & \mathop  = \limits^{({e_1})}
       {g_{\rm sub2}^{\rm Low}({\bm{\theta}}_{t}; {\bm{\theta}}_{t}|{\bf W}_{t+1})} 
       \mathop  = \limits^{({e_2})}
       {{\tilde g}_{\rm sub2}^{\rm Low}({\bm{\theta}}_{t,0}; {\bm{\theta}}_{t,0} | {\bf W}_{t+1} )} \nonumber  \\
       & \mathop  \leq \limits^{({e_3})}
       {{\tilde g}_{\rm sub2}^{\rm Low}({\bm{\theta}}_{t+1};{\bm{\theta}}_{t,R-2} | {\bf W}_{t+1} )} 
       \mathop  \leq \limits^{({e_4})}
       {g_{\rm sub2}^{\rm Low}({\bm{\theta}}_{t+1}; {\bm{\theta}}_{t} | {\bf W}_{t+1} )} \nonumber \\
       &\mathop  \leq \limits^{({e_5})} 
       g_{\rm MSE}({\bf{W}}_{t+1},{\bm{\theta}}_{t+1}),  \tag{52}
     \end{align}
    \end{subequations}
   where 
   $(e_1)$ and  $(e_2)$ hold due to the properties of \eqref{31b} and \eqref{31d}, respectively. $(e_3)$ holds because of the update rule in \eqref{update_theta_MM_2}, $(e_4)$ is due to the equation in \eqref{31c}, and $(e_5)$ follows \eqref{31a}.
   
   Similarly, in the optimization of the passive RIS using the modified RGA algorithm, the relationship of \eqref{31a} and \eqref{31b} still hold. However, the update rule is conducted by the Riemannian gradient ascent on the CCM space, and can be rewritten as 
   \begin{equation} \label{update_theta_MM_3}
    {\bm{\theta}}_{t+1}={\rm arg} \max_{{\bm{\theta}}} ~{{ g}_{\rm sub2}^{\rm Low}({\bm{\theta}}; {\bm{\theta}}_{t} | {\bf W}_{t})},
  \end{equation}
   where the monotonicity is guaranteed by the backtracking search method. Thus, we have the following relationship: 
   \begin{subequations}
   \begin{align} \label{third} 
    g_{\rm MSE}({\bf{W}}_{t+1},{\bm{\theta}}_{t}) 
     & \mathop  = \limits^{({f_1})} 
    {g_{\rm sub2}^{\rm Low}({\bm{\theta}}_{t};{\bm{\theta}}_{t}|{\bf W}_{t+1} ) } \nonumber \\
    & \mathop  \leq \limits^{({f_2})} 
    {g_{\rm sub2}^{\rm Low}({\bm{\theta}}_{t+1};{\bm{\theta}}_{t}|{\bf W}_{t+1} ) } \nonumber \\
    & \mathop  \leq \limits^{({f_3})}\!
    g_{\rm MSE}({\bf{W}}_{t+1},{\bm{\theta}}_{t+1}), \tag{54} 
  \end{align}
\end{subequations}
  where $(f_1)$ holds due to  \eqref{31b}, $(f_2)$ holds since \eqref{update_theta_MM_3} is optimally solved and $(f_3)$ is because of \eqref{31a}.
   Therefore, with the fact of \eqref{first} and \eqref{second} \eqref{third}, the objective value $g_{\rm MSE}({\bf{W}},{\bm{\theta}})$  is monotonically non-decreasing in the proposed AO-MM or AO-RGA algorithm and finally converges to a finite value due to it is bounded.
   Thus, we complete the proof for Proposition~\ref{proposition2}.

% Can use something like this to put references on a page
% by themselves when using endfloat and the captionsoff option.
\ifCLASSOPTIONcaptionsoff
  \newpage
\fi

% trigger a \newpage just before the given reference
% number - used to balance the columns on the last page
% adjust value as needed - may need to be readjusted if
% the document is modified later
%\IEEEtriggeratref{8}
% The "triggered" command can be changed if desired:
%\IEEEtriggercmd{\enlargethispage{-5in}}

% references section

% can use a bibliography generated by BibTeX as a .bbl file
% BibTeX documentation can be easily obtained at:
% http://www.ctan.org/tex-archive/biblio/bibtex/contrib/doc/
% The IEEEtran BibTeX style support page is at:
% http://www.michaelshell.org/tex/ieeetran/bibtex/
%\bibliographystyle{IEEEtranTCOM}
% argument is your BibTeX string definitions and bibliography database(s)
%\bibliography{IEEEabrv,../bib/paper}
%
% <OR> manually copy in the resultant .bbl file
% set second argument of \begin to the number of references
% (used to reserve space for the reference number labels box)
%
%\begin{thebibliography}{1}

%\bibitem{IEEEhowto:kopka}
%H.~Kopka and P.~W. Daly, \emph{A Guide to \LaTeX}, 3rd~ed.\hskip 1em plus
%  0.5em minus 0.4em\relax Harlow, England: Addison-Wesley, 1999.
%\end{thebibliography}

% biography section

\bibliographystyle{IEEEtran}
\bibliography{IRS_MIMO}

\begin{IEEEbiography}[{\includegraphics[width=1in,height=1.25in,clip,keepaspectratio]{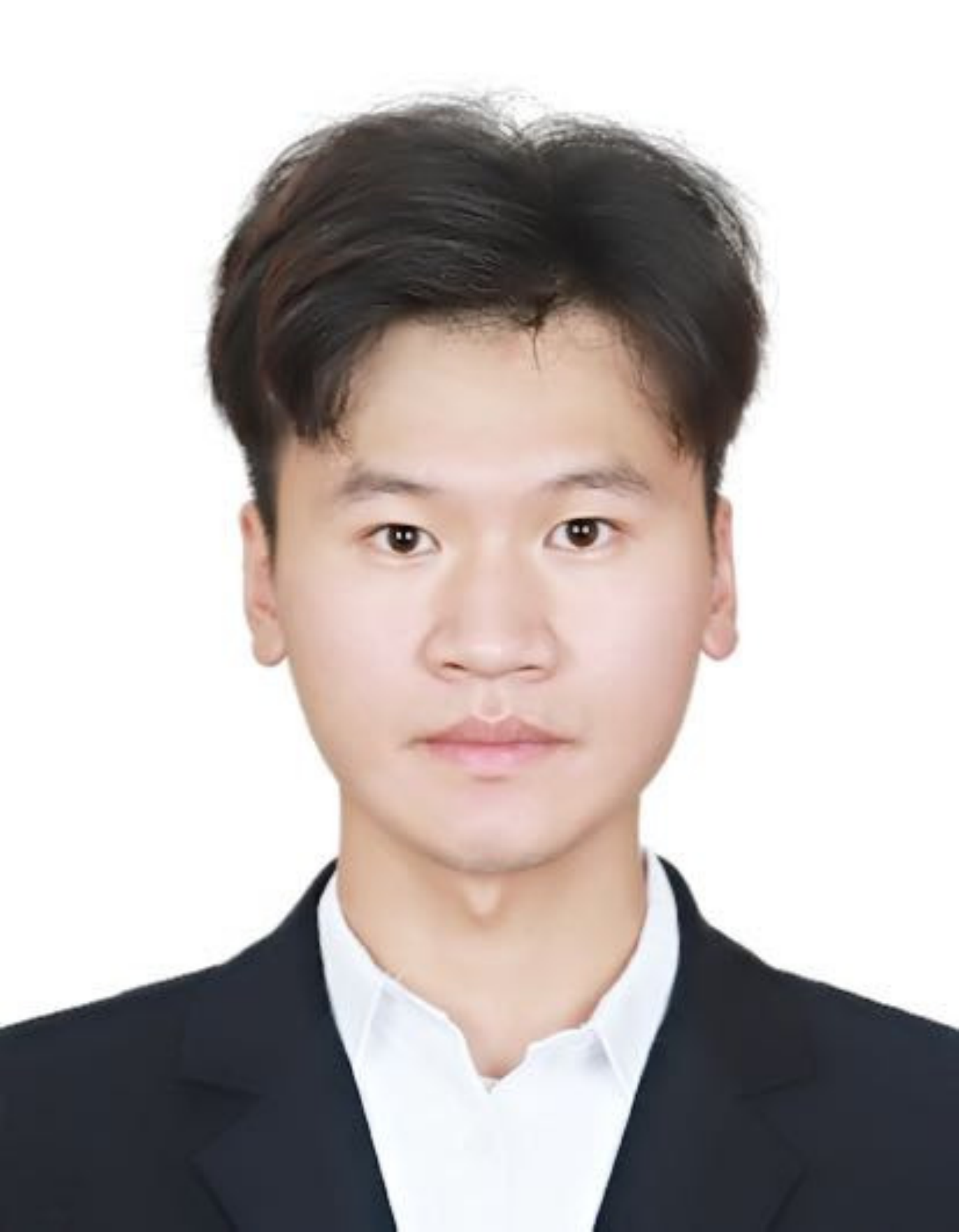}}]{Jintao Wang} received the B.S. degree in 2020 in communication engineering from Jilin University, Changchun, China. He is currently pursuing the Ph.D. degree with the State Key Laboratory of Internet of Things for Smart City and the Department of Electrical and Computer Engineering, University of Macau, Macau, China. His main research interests include RIS-aided communication, mmWave communication, transceiver design and convex optimization.
\end{IEEEbiography}

\begin{IEEEbiography}[{\includegraphics[width=1in,height=1.25in,clip,keepaspectratio]{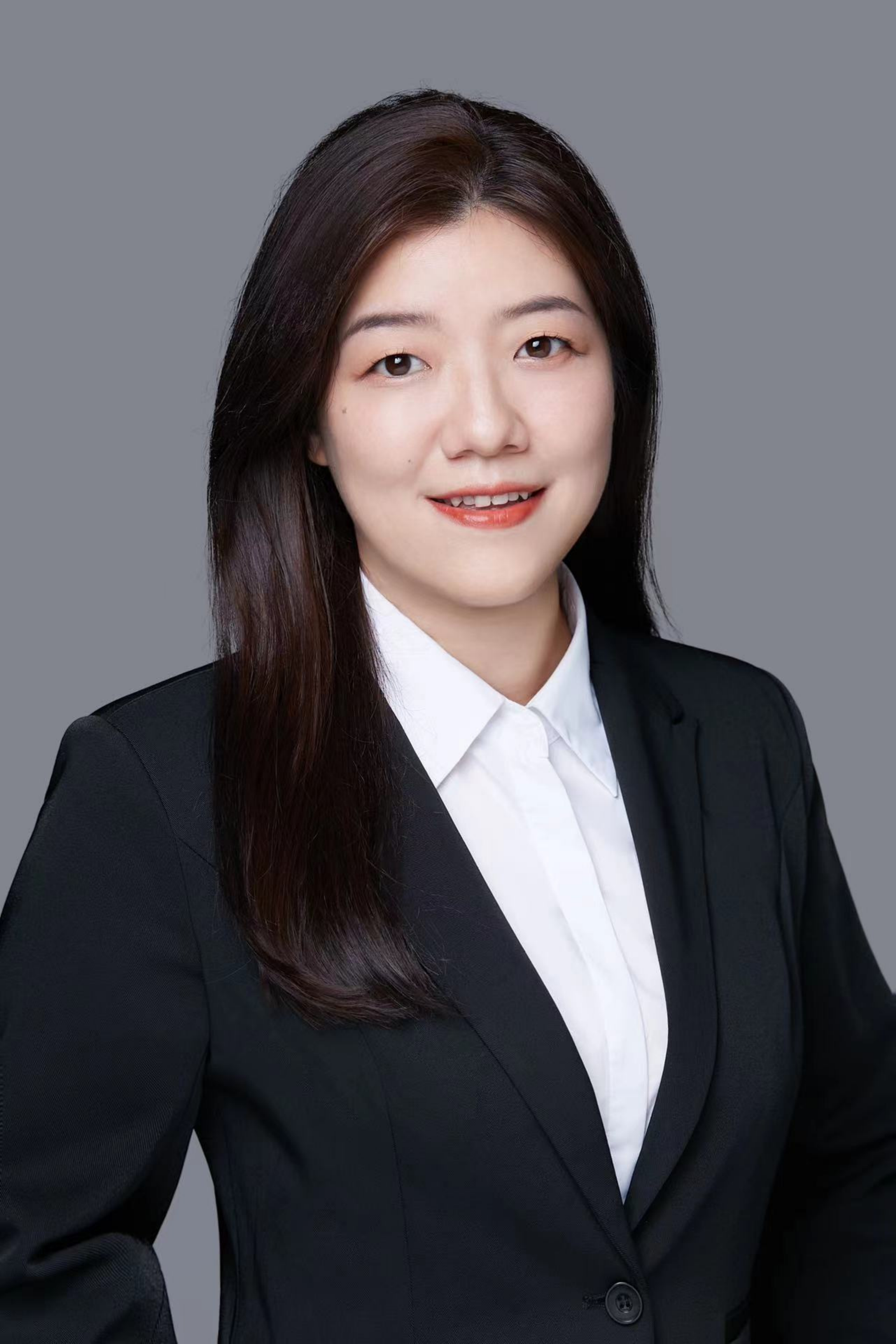}}]{Shiqi Gong} received the B.S. and Ph.D. degrees in electronic engineering from Beijing Institute of Technology, China, in 2014 and 2020, respectively. From January 2021 to August 2022, she was a postdoctoral fellow with the State Key Laboratory of Internet of Things for Smart City, University of Macau, China. She is currently an associate professor with the School of Cyberspace Science and Technology, Beijing Institute of Technology. Her research interests are in the areas of signal processing, mmWave and Terahertz communications and convex optimization. She was a recipient of the Best Ph.D. Thesis Award of Beijing Institute of Technology in 2020.
\end{IEEEbiography}

\begin{IEEEbiography}[{\includegraphics[width=1in,height=1.25in,clip,keepaspectratio]{./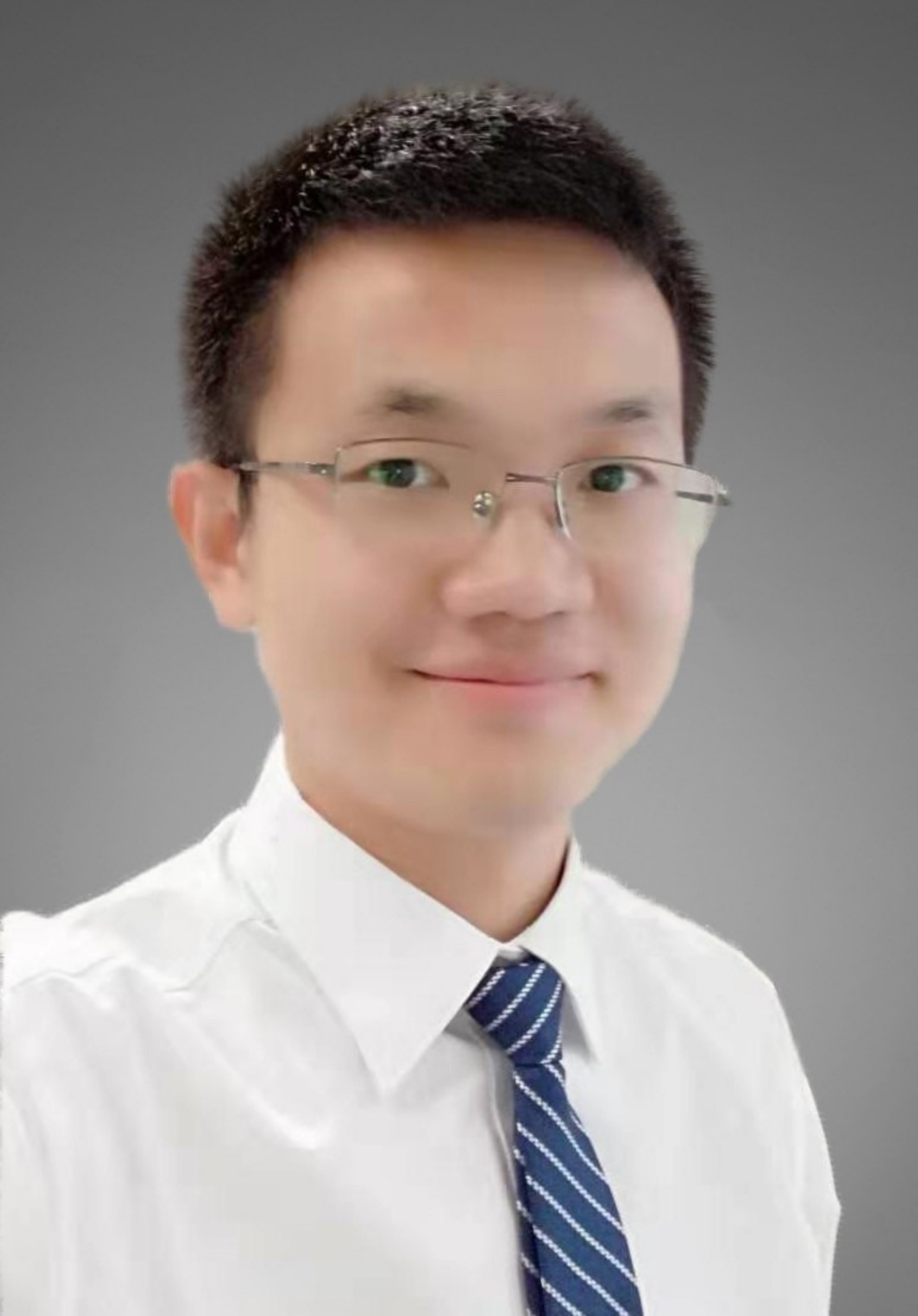}}] {Qingqing Wu} (S'13-M'16-SM'21) received the B.Eng. and the Ph.D. degrees in Electronic Engineering from South China University of Technology and Shanghai Jiao Tong University (SJTU) in 2012 and 2016, respectively. From 2016 to 2020, he was a Research Fellow in the Department of Electrical and Computer Engineering at National University of Singapore. He is currently an Associate Professor with Shanghai Jiao Tong University. His current research interest includes intelligent reflecting surface (IRS), unmanned aerial vehicle (UAV) communications, and MIMO transceiver design. He has coauthored more than 100 IEEE journal papers with 26 ESI highly cited papers and 8 ESI hot papers, which have received more than 18,000 Google citations. He was listed as the Clarivate ESI Highly Cited Researcher in 2022 and 2021, the Most Influential Scholar Award in AI-2000 by Aminer in 2021 and World's Top 2\% Scientist by Stanford University in 2020 and 2021.

  He was the recipient of the IEEE Communications Society Asia Pacific Best Young Researcher Award and Outstanding Paper Award in 2022, the IEEE Communications Society Young Author Best Paper Award in 2021, the Outstanding Ph.D. Thesis Award of China Institute of Communications in 2017, the Outstanding Ph.D. Thesis Funding in SJTU in 2016, the IEEE ICCC Best Paper Award in 2021, and IEEE WCSP Best Paper Award in 2015. He was the Exemplary Editor of IEEE Communications Letters in 2019 and the Exemplary Reviewer of several IEEE journals. He serves as an Associate Editor for IEEE Transactions on Communications, IEEE Communications Letters, IEEE Wireless Communications Letters, IEEE Open Journal of Communications Society (OJ-COMS), and IEEE Open Journal of Vehicular Technology (OJVT). He is the Lead Guest Editor for IEEE Journal on Selected Areas in Communications on ``UAV Communications in 5G and Beyond Networks", and the Guest Editor for IEEE OJVT on “6G Intelligent Communications" and IEEE OJ-COMS on “Reconfigurable Intelligent Surface-Based Communications for 6G Wireless Networks". He is the workshop co-chair for IEEE ICC 2019-2022 workshop on “Integrating UAVs into 5G and Beyond”, and the workshop co-chair for IEEE GLOBECOM 2020 and ICC 2021 workshop on “Reconfigurable Intelligent Surfaces for Wireless Communication for Beyond 5G”. He serves as the Workshops and Symposia Officer of Reconfigurable Intelligent Surfaces Emerging Technology Initiative and Research Blog Officer of Aerial Communications Emerging Technology Initiative. He is the IEEE Communications Society Young Professional Chair in Asia Pacific Region.
  
\end{IEEEbiography}

\begin{IEEEbiography}[{\includegraphics[width=1in,height=1.25in,clip,keepaspectratio]{./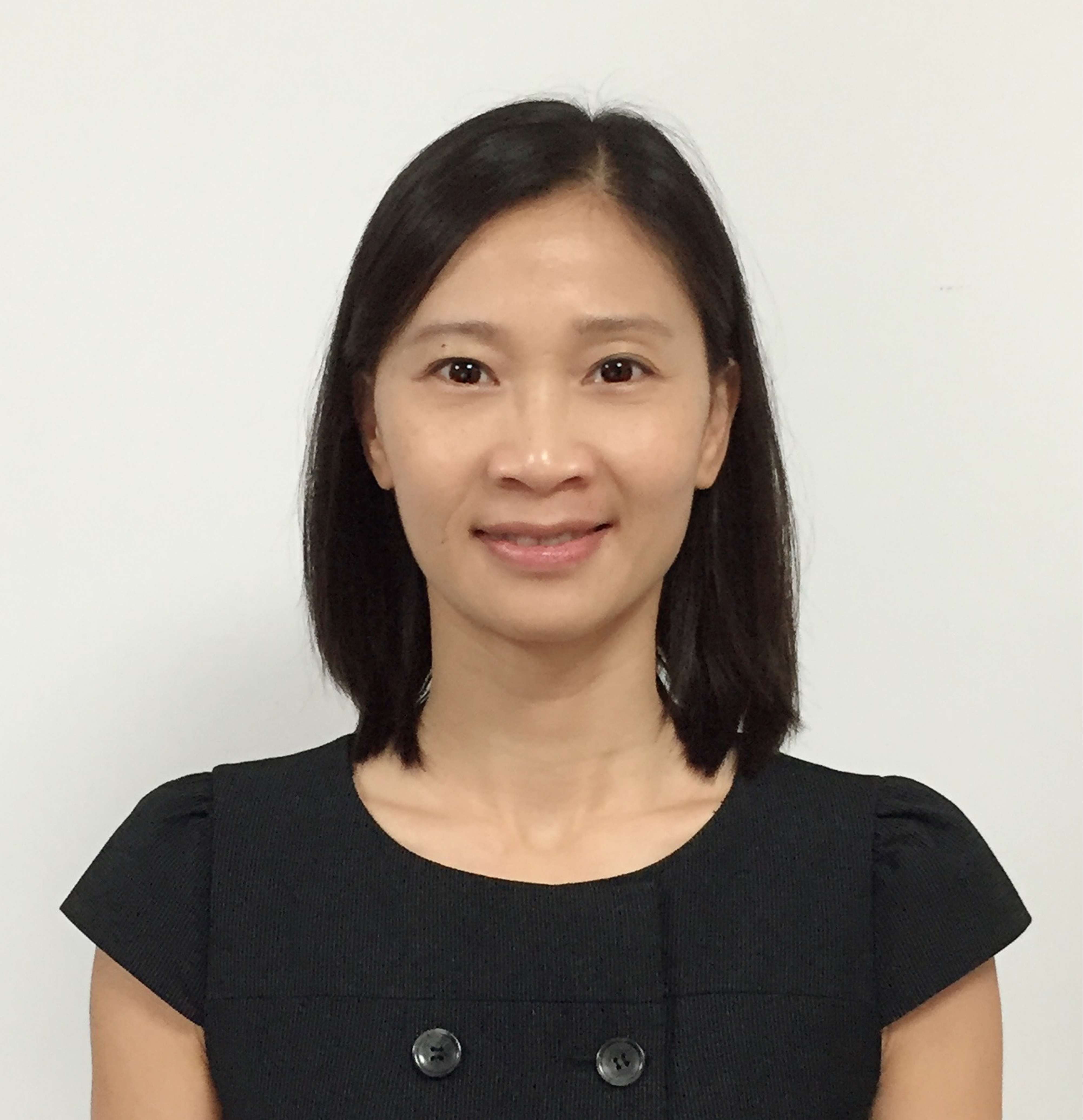}}]{Shaodan Ma}(Senior Member, IEEE) received the double Bachelor's degrees in science and economics and the M.Eng. degree in electronic engineering from Nankai University, Tianjin, China, in 1999 and 2002, respectively, and the Ph.D. degree in electrical and electronic engineering from The University of Hong Kong, Hong Kong, in 2006. From 2006 to 2011, she was a post-doctoral fellow at The University of Hong Kong. Since August 2011, she has been with the University of Macau, where she is currently a Professor. Her research interests include array signal processing, transceiver design, localization, mmWave communications and massive MIMO. She was a symposium co-chair for various conferences including IEEE ICC 2021, 2019 and 2016, IEEE/CIC ICCC 2019, IEEE GLOBECOM 2016, etc. Currently she serves as an Editor for IEEE Transactions on Wireless Communications, IEEE Transactions on Communications, IEEE Communications Letters, and Journal of Communications and Information Networks.
\end{IEEEbiography}

% that's all folks
\end{document}